\theoremstyle{plain}
\newtheorem{thm}{Theorem}[section]
\newtheorem{lem}[thm]{Lemma}
\newtheorem{prop}[thm]{Proposition}
\newtheorem{cor}[thm]{Corollary}
\theoremstyle{definition}
\newtheorem{defn}[thm]{Definition}
\newtheorem{exmp}[thm]{Example}
\newtheorem{sol}[thm]{Solution}
\theoremstyle{remark}
\newtheorem{rem}[thm]{Remark}
\newtheorem{pro}[thm]{Problem}
\providecommand{\sectionref}[1]{Section~\ref{#1}}
\providecommand{\ssecref}[1]{Subsection~\ref{#1}}
\providecommand{\sssecaref}[2]{Subsubsections~\ref{#1}~and~\ref{#2}}
\providecommand{\eqnref}[1]{Equation~\eqref{#1}}
\providecommand{\eqnsref}[1]{Equations~\eqref{#1}}
\providecommand{\eqnsaref}[2]{Equations~\eqref{#1}~and~\eqref{#2}}
\providecommand{\eqnssaref}[3]{Equations~\eqref{#1},~\eqref{#2}~and~\eqref{#3}}
\providecommand{\proref}[1]{Problem~\ref{#1}}
\providecommand{\solref}[1]{Solution~\ref{#1}}
\providecommand{\solsaref}[4]{Solutions~\ref{#1},~\ref{#2},~\ref{#3}~and~\ref{#4}}
\providecommand{\solsoref}[4]{Solutions~\ref{#1},~\ref{#2},~\ref{#3}~or~\ref{#4}}
\providecommand{\ftnref}[1]{Footnote~\ref{#1}}
\newlength{\graphlength}
\newlength{\cgreenlength}
\newcommand{\order}[1]{\mathcal O \left ( #1 \right )}
\newcommand{\one}{\mathbb{I}}
\newcommand{\coone}{\hat{\mathbb{I}}}
\newcommand{\Q}{\mathcal{Q}}
\newcommand{\HQ}{\mathcal{H}_{\mathcal{Q}}}
\newcommand{\RQ}{\mathcal{R}_{\mathcal{Q}}}
\newcommand{\GQ}{\mathcal{G}_{\mathcal{Q}}}
\newcommand{\sym}[1]{\operatorname{Sym} \left ( #1 \right )}
\newcommand{\res}[1]{\operatorname{Res} \left ( #1 \right )}
\newcommand{\cpl}[1]{\operatorname{Cpl} \left ( #1 \right )}
\newcommand{\dt}[1]{\operatorname{Det} \left ( #1 \right )}
\newcommand{\tr}[1]{\operatorname{Tr} \left ( #1 \right )}
\newcommand{\sdd}[1]{\omega \left ( #1 \right )}
\newcommand{\D}[1]{\Delta \left ( #1 \right )}
\newcommand{\antipode}[1]{S \left ( #1 \right )}
\newcommand{\mult}{m}
\newcommand{\ring}{k}
\newcommand{\leh}{- \frac{1}{2 \gcoupling^2} R}
\newcommand{\lmd}{\frac{1}{4 \ecoupling^2} g^{\mu \rho} g^{\nu \sigma} F_{\mu \nu} F_{\rho \sigma} + \overline{\Psi} \left ( \imaginary \slashed{\nabla}^{U(1) \times_\rho \Sigma M} - m \right ) \Psi}
\newcommand{\lqgrqed}{\left ( \leh + \lmd \right ) \dif V_g + \mathcal{L}_{\text{GF}} + \mathcal{L}_{\text{Ghost}}}
\newcommand{\graph}[1]{\vcenter{\hbox{\includegraphics[width=\graphlength]{#1-eps-converted-to.pdf}}}}
\newcommand{\tgraph}[1]{\vcenter{\hbox{\includegraphics[width=0.75\graphlength]{#1-eps-converted-to.pdf}}}}
\newcommand{\ttgraph}[1]{\vcenter{\hbox{\includegraphics[width=0.5\graphlength]{#1-eps-converted-to.pdf}}}}
\newcommand{\ggraph}[1]{\vcenter{\hbox{\includegraphics[width=1.5\graphlength]{#1-eps-converted-to.pdf}}}}
\newcommand{\cgreen}[1]{\vcenter{\hbox{\includegraphics[width=\cgreenlength]{#1-eps-converted-to.pdf}}}}
\newcommand{\ograph}[1]{\vcenter{\hbox{\includegraphics[width=0.833333333\graphlength]{#1-eps-converted-to.pdf}}}}
\newcommand{\tcgreen}[1]{\vcenter{\hbox{\includegraphics[width=0.75\cgreenlength]{#1-eps-converted-to.pdf}}}}
\newcommand{\residue}[1]{\vcenter{\hbox{\includegraphics[width=\cgreenlength]{#1-eps-converted-to.pdf}}}}
\newcommand{\tresidue}[1]{\vcenter{\hbox{\includegraphics[width=0.75\cgreenlength]{#1-eps-converted-to.pdf}}}}
\newcommand{\FR}[1]{\Phi \left ( #1 \right )}
\newcommand{\FRC}[1]{\Phi^{\mathscr{C}} \left ( #1 \right )}
\newcommand{\FRCRO}[1]{\Phi^{\mathscr{C}_{\text{R}} \mathscr{C}_{\text{O}}} \left ( #1 \right )}
\newcommand{\FRCOR}[1]{\Phi^{\mathscr{C}_{\text{O}} \mathscr{C}_{\text{R}}} \left ( #1 \right )}
\newcommand{\FRCR}[1]{\Phi^{\mathscr{C}_{\text{R}}} \left ( #1 \right )}
\newcommand{\regFR}{{\Phi_\mathscr{E}^\varepsilon}}
\newcommand{\renFR}{{\Phi_\mathscr{R}}}
\newcommand{\counterterm}{S_\mathscr{R}^\regFR}
\newcommand{\deDonder}{{d \negmedspace D}}
\newcommand{\Lorenz}{L}
\newcommand{\gravitonghost}{\chi}
\newcommand{\photonghost}{\theta}
\newcommand{\textfrac}[2]{#1 / #2}
\newcommand{\ecoupling}{\mathrm{e}}
\newcommand{\gcoupling}{\varkappa}
\newcommand{\imaginary}{\mathrm{i}}
\newcommand{\id}{\operatorname{Id}}
\newcommand{\diff}{\operatorname{Diff}_0 \left ( M \right )}
\newcommand{\autu}{\operatorname{Aut} \left ( M \times U(1) \right )}
\newcommand{\phixt}{\phi_X^\tau}
\newcommand{\phixtt}{\phi_{X \left ( \tau \right )}^\tau}
\newcommand{\vphif}{\varphi_{\mathfrak{f}}}
\newcommand{\precombgreen}{\mathfrak{x}}
\newcommand{\combgreen}{\mathfrak{X}}
\newcommand{\rescombgreen}{\mathfrak{X}}
\newcommand{\Lie}{\pounds}
\newcommand{\surject}{\to \!\!\!\!\! \to}
\newcommand{\subalign}[1]{%
  \vcenter{%
    \Let@ \restore@math@cr \default@tag
    \baselineskip\fontdimen10 \scriptfont\tw@
    \advance\baselineskip\fontdimen12 \scriptfont\tw@
    \lineskip\thr@@\fontdimen8 \scriptfont\thr@@
    \lineskiplimit\lineskip
    \ialign{\hfil$\m@th\scriptstyle##$&$\m@th\scriptstyle{}##$\crcr
      #1\crcr
    }%
  }
}
\title{\textsc{Algebraic Structures in the Coupling of Gravity to Gauge Theories}}
\author{David Prinz\footnote{Department of Mathematics and Department of Physics at Humboldt University of Berlin and Max Planck Institute for Gravitational Physics (Albert Einstein Institute) in Potsdam-Golm; prinz@\{math.hu-berlin.de, physik.hu-berlin.de, aei.mpg.de\}}}
\date{April 30, 2020}
\begin{document}

\maketitle

\begin{abstract}
This article is an extension of the author's second master thesis \cite{Prinz_2}. It aims to introduce to the theory of perturbatively quantized General Relativity coupled to Spinor Electrodynamics, provide the results thereof and set the notation to serve as a starting point for further research in this direction. It includes the differential geometric and Hopf algebraic background, as well as the corresponding Lagrange density and some renormalization theory. Then, a particular problem in the renormalization of Quantum General Relativity coupled to Quantum Electrodynamics is addressed and solved by a generalization of Furry's Theorem. Next, the restricted combinatorial Green's functions for all two-loop propagators and all one-loop divergent subgraphs thereof are presented. Finally, relations between these one-loop restricted combinatorial Green's functions necessary for multiplicative renormalization are discussed.

\textbf{Keywords:} Quantum Field Theory; Quantum Gravity; Quantum General Relativity; Quantum Electrodynamics; Perturbative Quantization; Hopf Algebraic Renormalization
\end{abstract}

\section{Introduction}

The theory of General Relativity (GR) and Quantum Field Theory (QFT) are the two great achievements of physics in the 20\textsuperscript{th} century. GR, on the one side, describes nature on very large scales when huge masses are involved. QFT, on the other side, describes nature on very small scales when tiny masses are involved. Being very successful in their regimes, there are situations when both conditions appear at the same time, i.e.\ when huge masses are compressed into small scales. For example, this situation occurs in models of the big bang and in models of black holes. For these situations a theory of Quantum Gravity (QG) is needed to understand nature. In particular, a theory of QG should be able to clarify how the universe emerged, i.e.\ through a big bang or otherwise. Therefore, it was soon tried to apply the usual techniques of perturbative QFT to the dynamical part of the metric in spacetimes of GR \cite{Rovelli}. These works, in \cite{Rovelli} called ``The covariant line of research'', were started by M. Fierz, W. Pauli and L. Rosenfeld in the 1930s. Then, R. Feynman \cite{Feynman_Hatfield_Morinigo_Wagner} and B. DeWitt \cite{DeWitt_I,DeWitt_II,DeWitt_III,DeWitt_IV} calculated Feynman rules of GR in the 1960s. Next, D. Boulware, S. Deser, P. van Nieuwenhuizen \cite{Boulware_Deser_Nieuwenhuizen} and G. 't Hooft \cite{tHooft_QG} and M. Veltman \cite{Veltman} found evidence of the non-renormalizability of Quantum General Relativity (QGR) in the 1970s. We stress that by QGR we mean a quantization of GR using QFT methods, whereas by QG we mean any theory of quantized gravitation, such as e.g.\ Loop Quantum Gravity, String Theory or Supergravity. In this article we continue the work on perturbative QGR as started by D. Kreimer in the 2000s \cite{Kreimer_QG1,Kreimer_QG2}. D. Kreimer used the modern techniques of Hopf algebraic renormalization developed by A. Connes and himself in the 1990s and 2000s \cite{Kreimer_Hopf_Algebra,Connes_Kreimer_0,Connes_Kreimer_1,Connes_Kreimer_2,Kreimer_Anatomy}. Similar situations were studied in the so-called core Hopf algebra by D. Kreimer and W.D. van Suijlekom in the 2000s \cite{Kreimer_Core,Kreimer_vSuijlekom}.

We start this article in \sectionref{sec:differential_geometric_notions_and_the_lagrange_density_of_qgr_qed} with the differential geometric background needed to understand the Lagrange density of Quantum General Relativity coupled to Quantum Electrodynamics (QGR-QED). Then, we introduce the Lagrange density of QGR-QED,
\begin{equation}
\! \! \mathcal{L}_{\text{QGR-QED}} = \lqgrqed \, ,
\end{equation}
which consists of the usual Einstein-Hilbert Lagrange density, the canonical generalization of the Maxwell-Dirac Lagrange density to curved spacetimes and the gauge fixing and ghost Lagrange densities. Finally, the Lagrange density of QGR-QED is discussed in detail. Then, in \sectionref{sec:hopf_algebras_and_the_connes-kreimer_renormalization_hopf_algebra} we introduce Hopf algebras in general and the Connes-Kreimer renormalization Hopf algebra in particular. Next, we discuss a problem which can occur when associating the Connes-Kreimer renormalization Hopf algebra to a given local QFT. As we remark, this problem occurs already in QED and in particular in QGR-QED. More concretely, there can exist divergent Feynman graphs whose residue is not in the residue set of the given local QFT. Thus, there are no vertex residues in the theory present which are able to absorb the corresponding divergences. Then, we present four different solutions to this problem in \solsaref{sol:solution_1}{sol:solution_2}{sol:solution_3}{sol:solution_4} and discuss their physical interpretation. We proceed by analyzing the structure of Hopf ideals in the renormalization Hopf algebra which represent the symmetries compatible with renormalization. Then, we define the Hopf algebra of QGR-QED. Therefore, we formulate and prove a generalization of Furry's Theorem in \thmref{thm:generalized_furry_theorem} which holds also for amplitudes with an arbitrary number of external gravitons. This is in particular useful, since at least for the calculations done in the realm of this article, these are the only Feynman graphs which need to be set to zero when constructing the renormalization Hopf algebra of QGR-QED, besides from pure self-loop Feynman graphs, which vanish for kinematic renormalization schemes. Finally, in \sectionref{sec:coproduct_structure_of_the_greens_function} we present all one- and two-loop propagator graphs and all one-loop three-point graphs. Then, we present their coproduct structure, for which the coproduct of 155 Feynman graphs has been computed. Using these relations, we discuss the obstructions to multiplicative renormalizability of QGR-QED, which results in a generalization of Ward-Takahashi and Slavnov-Taylor identities \cite{Ward,Takahashi,tHooft_YM,Taylor,Slavnov}.\footnote{Actually, Slavnov-Taylor identities were first discovered by Gerard 't Hooft in \cite{tHooft_YM}.} Finally, we give an outlook in \sectionref{sec:conclusion}.

We remark that the Feynman graphs were generated in two steps: First, the corresponding scalar (unlabeled) Feynman graphs were created using \cite{Borinsky_Feyngen} and then the Feynman graphs were labeled with particle labels from QGR-QED using several python programs written by the author \cite{Python}, cf.\ \cite[Appendix A]{Prinz_2}. Then, the Feynman diagrams were drawn with JaxoDraw \cite{Binosi_Theussl,Binosi_Collins_Kaufhold_Theussl}.

\section{Differential geometry and the Lagrange density of QGR-QED} \label{sec:differential_geometric_notions_and_the_lagrange_density_of_qgr_qed}

We start with the differential geometric background and the Lagrange density of QGR-QED. Throughout this article we use the Einstein summation convention, if not stated otherwise. Furthermore, we use the sign-convention \((-++)\), as classified by \cite{Misner_Thorne_Wheeler}. Moreover, we require sections to be smooth, i.e.\ \(\Gamma \left ( U, \cdot \right ) := \Gamma^\infty \left ( U, \cdot \right )\) for any \(U \subseteq M\) open, where \(M\) is a manifold. Additionally, we write the electric charge \(\ecoupling\) in an upright font, in order to avoid confusion with Euler's number \(e\), vielbeins \(e_\mu^m\) and inverse vielbeins \(e^\mu_m\).

\subsection{Differential geometry} \label{ssec:differential_geometric_notions}

\begin{defn}[Spacetime] \label{def:spacetime}
Let \((M,g)\) be a Lorentzian manifold. \((M,g)\) is called a spacetime if it is smooth, connected, 4-dimensional and time-orientable. Furthermore, we choose the West coast (``mostly minus'') signature signature for \(g\). Moreover, we denote the constant Minkowski metric by \(\eta\) and the unit matrix by \(\delta\), i.e.\
\begin{equation}
	\eta := \begin{pmatrix} 1 & 0 & 0 & 0 \\ 0 & - 1 & 0 & 0 \\ 0 & 0 & - 1 & 0 \\ 0 & 0 & 0 & - 1 \end{pmatrix} \qquad \text{and} \qquad \delta := \begin{pmatrix} 1 & 0 & 0 & 0 \\ 0 & 1 & 0 & 0 \\ 0 & 0 & 1 & 0 \\ 0 & 0 & 0 & 1 \end{pmatrix}\, .
\end{equation}
\end{defn}

\vspace{\baselineskip}

\begin{rem}
We assume spacetimes to be 4-dimensional as e.g.\ the gravity-matter Feynman rules depend directly on the spacetime dimension, cf.\ \remref{rem:Feynman_rules_1} and \cite{Prinz_4}. However, the corresponding calculations could also be carried out in dimensions different than 4.
\end{rem}

\vspace{\baselineskip}

\begin{defn}[Matter-compatible spacetime] \label{defn:matter-compatible_spacetime}
Let \((M,g)\) be a spacetime. We call \((M,g)\) a matter-compatible spacetime if it is diffeomorphic to the Minkowski spacetime \((\mathbb{M}, \eta)\).
\end{defn}

\vspace{\baselineskip}

\begin{rem}
The motivation for \defnref{defn:matter-compatible_spacetime} comes from the definition of the graviton field \(h_{\mu \nu}\) relative to the Minkowski metric \(\eta_{\mu \nu}\), cf.\ \defnref{defn:electron-positron_photon_graviton_field}.\footnote{It is in principle also possible to choose a different background metric, which might require a different topology of \(M\). This is for example necessary when a non-vanishing cosmological constant is included, as the background metric should be a solution of the vacuum Einstein field equations. Then, the cosmological constant behaves in the quantum theory like a mass term for the graviton propagator.} In a more general setting \defnref{defn:matter-compatible_spacetime} could be weakened to: \emph{Let \((M,g)\) be a spacetime. We call \((M,g)\) a matter-compatible spacetime if it is globally hyperbolic, oriented, time-oriented and such that its second de Rham cohomology vanishes.} Then, this is then motivated by the following facts: We want to consider spacetimes with well-defined Cauchy problems, which requires \((M,g)\) to be globally hyperbolic. Furthermore, we want \((M,g)\) to admit a spin structure in order to define matter via a spinor bundle, which is equivalent to \((M,g)\) being globally hyperbolic, 4-dimensional, oriented and time-oriented \cite{Geroch_1,Geroch_2}. Finally, we want to consider spacetimes which allow the exclusion of magnetic monopoles, which is ensured if the second de Rham cohomology vanishes, as this allows for a global definition of the a priori local connection form \(\imaginary \ecoupling A_\mu\) on the base-manifold \(M\).
\end{rem}

\vspace{\baselineskip}

\begin{defn}[Spacetime-matter bundle] \label{defn:spacetime-matter_bundle}
Let \((M,g)\) be a matter-compatible spacetime. Then we define the spacetime-matter bundle to be the globally trivial super bundle
\begin{equation}
	\mathcal{S} := M \times TM \times E \times \left ( U(1) \times_\rho \Sigma M \right ) \times \Pi \left ( TM \oplus T^*M \right ) \times \Pi \left ( \mathfrak{u}(1) \oplus \mathfrak{u}(1)^* \right ) \, .
\end{equation}
\(T M\) is the tangent bundle and \(E\) a real \(4\)-dimensional vector bundle used for the definition of vielbeins and inverse vielbeins (then considered as the tensor product bundle \(TM \otimes_\mathbb{R} E\), to stress that vielbeins and inverse vielbeins are multilinear maps), cf.\ \defnref{defn:vielbeins_inverse_vielbeins}. \(U(1)\) is the principle bundle modeling electrodynamics and acting via the representation \(\rho\) on the spinor bundle \(\Sigma M\), which is a 4-dimensional complex vector bundle modeling fermions, cf.\ \defnref{defn:electron-positron_photon_graviton_field}. \(U(1) \times_\rho \Sigma M\) denotes the corresponding fiber product bundle, to which we refer to as twisted spinor bundle. Finally, \(\Pi \left ( TM \oplus T^*M \right )\) and \(\Pi \left ( \mathfrak{u}(1) \oplus \mathfrak{u}(1)^* \right )\) denote the parity shifted Lie algebra and dual Lie algebra bundles of the gauge groups which model the graviton ghost and the photon ghost, respectively, cf.\ \remref{rem:gauge_transformations}. We equip the spacetime-matter bundle with metrics in \defnref{defn:metrics_spacetime-matter_bundle} which, in turn, naturally include the corresponding dual bundles.\footnote{We denote the dual bundles via the asterisk, \(*\), except for the spinor bundle, twisted spinor bundle and the parity shifted Lie algebra and dual Lie algebra bundles of the gauge groups for which we use the overline, \(\overline{\phantom{*}}\).} Additionally, we also equip it with connections in \defnref{defn:connections_spacetime-matter_bundle} such that we have a notion of curvature, cf.\ \defnref{defn:curvatures_spacetime-matter_bundle}.
\end{defn}

\vspace{\baselineskip}

\begin{rem}
The global triviality of the spacetime-matter bundle in \defnref{defn:spacetime-matter_bundle} is motivated by the following facts: The tangent bundle \(TM\) and the spinor bundle \(\Sigma M\) are globally trivial since matter-compatible spacetimes are defined to be diffeomorphic to the Minkowski spacetime and are thus in particular parallelizable, cf.\ \defnref{defn:matter-compatible_spacetime}. Furthermore, the vector bundle \(E\) is chosen to be trivial for the definition of vielbeins and inverse vielbeins, cf.\ \defnref{defn:vielbeins_inverse_vielbeins}. Moreover, the \(U(1)\) principle bundle is globally trivial since matter-compatible spacetimes are defined to be diffeomorphic to the Minkowski spacetime and thus in particular have vanishing second de Rham cohomology which implies the first Chern class of the associated line bundle to vanish, cf.\ \defnref{defn:matter-compatible_spacetime}. Finally, the parity shifted Lie algebra and dual Lie algebra bundles of the gauge groups \(\Pi \left ( TM \oplus T^*M \right )\) and \(\Pi \left ( \mathfrak{u}(1) \oplus \mathfrak{u}(1)^* \right )\) are globally trivial because of the global triviality of the bundles \(TM\) and \(U(1)\).
\end{rem}

\vspace{\baselineskip}

\begin{defn}[Metrics on the spacetime-matter bundle] \label{defn:metrics_spacetime-matter_bundle}
We consider the following metrics on the spacetime-matter bundle \(\mathcal{S}\): On the tangent bundle \(TM\) we consider the Lorentzian metric \(g\) with West coast (``mostly minus'') signature, mapping vector fields \(X_1, X_2 \in \Gamma \left ( M, TM \right )\) to the real number
\begin{equation}
	\left \langle X_1 , X_2 \right \rangle_{TM} := g_{\mu \nu} X_1^\mu X_2^\nu \in \mathbb{R} \, .
\end{equation}
Furthermore, on the real vector bundle \(E\) we consider the constant Minkowski metric \(\eta\) with West coast (``mostly minus'') signature, mapping vector fields \(Y_1, Y_2 \in \Gamma \left ( M, E \right )\) to the real number
\begin{equation}
	\left \langle Y_1 , Y_2 \right \rangle_E := \eta_{m n} Y_1^m Y_2^n \in \mathbb{R} \, .
\end{equation}
Moreover, on the \(U(1)\) principle bundle we use the Hermitian metric, mapping sections \(s_1, s_2 \in \Gamma \left ( M, U(1) \right )\) to the complex number\footnote{Where we denote complex conjugation via the asterisk, \(*\).}
\begin{equation}
	\left \langle s_1 , s_2 \right \rangle_{U(1)} := s_1^* s_2 \in \mathbb{C} \, .
\end{equation}
Additionally, on the spinor bundle \(\Sigma M\) we use the Hermitian metric together with the Clifford multiplication by the unit timelike vector field, mapping spinor fields \(\psi_1, \psi_2 \in \Gamma \left ( M, \Sigma M \right )\) to the complex number
\begin{equation}
	\left \langle \psi_1 , \psi_2 \right \rangle_{\Sigma M} := \overline{\psi_1} \psi_2 \in \mathbb{C} \, ,
\end{equation}
where we have set \(\overline{\psi} := e_0^m \psi^\dagger \gamma_m\),\footnote{Where we denote Hermitian conjugation via the dagger, \(\dagger\).} with the curved unit timelike vector field components of a vielbein \(e_0^m\),\footnote{We remark that since we consider matter-compatible spacetimes to be diffeomorphic to the Minkowski spacetime and thus in particular diffeomorphic to a globally hyperbolic spacetime, it is also possible to consider charts in which \(e^m_0 \equiv \delta^m_0\) such that in particular \(e^m_0 \gamma_m \equiv \gamma_0\), and some references use this implicitly, e.g.\ \cite{Choi_Shim_Song}. However it should be noted that then the theory is only invariant under diffeomorphisms which preserve global hyperbolicity. \label{foot:choi_shim_song_gh-manifold}} being introduced in \defnref{defn:vielbeins_inverse_vielbeins}, and the Clifford multiplication \(\gamma_m\), being introduced in \defnref{defn:clifford_multiplication}. Thus, on the twisted spinor bundle \(U(1) \times_\rho \Sigma M\) we consider the induced fiber product metric, mapping twisted spinor fields \(\Psi_1, \Psi_2 \in \Gamma \left ( M, U(1) \times_\rho \Sigma M \right )\) to the complex number
\begin{equation}
	\left \langle \Psi_1 , \Psi_2 \right \rangle_{U(1) \times_\rho \Sigma M} := \overline{\Psi_1} \Psi_2 \in \mathbb{C} \, .
\end{equation}
where, again, we have set \(\overline{\Psi} := e_0^m \Psi^\dagger \gamma_m\). Finally, on the parity shifted Lie algebra and dual Lie algebra bundles of the gauge groups \(\Pi \left ( TM \oplus T^*M \right )\) and \(\Pi \left ( \mathfrak{u}(1) \oplus \mathfrak{u}(1)^* \right )\) we consider the metrics induced by the metrics on \(TM\) and \(U(1)\).
\end{defn}

\vspace{\baselineskip}

\begin{defn}[Vielbeins and inverse vielbeins] \label{defn:vielbeins_inverse_vielbeins}
Let \(\mathcal{S}\) be the spacetime-matter bundle. Then we can define global vector bundle isomorphisms \(e \in \Gamma \left ( M, T^*M \otimes_\mathbb{R} E \right )\), called vielbeins, such that
\begin{subequations}
\begin{equation}
	g_{\mu \nu} = \eta_{m n} e^m_\mu e^n_\nu \, . \label{eqn:vielbeins}
\end{equation}
Furthermore, we can define global inverse vector bundle isomorphisms \(e^* \in \Gamma \left ( M, TM \otimes_\mathbb{R} E^* \right )\), called inverse vielbeins, such that\footnote{We omit the asterisk, \(*\), for inverse vielbeins \(e^*\) when the abstract index notation is used because of \eqnref{eqn:transformation_vielbeins_inverse_vielbeins}.}
\begin{equation}
	\eta_{m n} = g_{\mu \nu} e^\mu_m e^\nu_n \, . \label{eqn:inverse_vielbeins}
\end{equation}
\end{subequations}
Greek indices, here \(\mu\) and \(\nu\), belong to the tangent bundle \(TM\) and are referred to as curved indices. Thus, they are raised and lowered using the usual metric \(g_{\mu \nu}\) and its inverse \(g^{\mu \nu}\). Latin indices, here \(m\) and \(n\), belong to the vector bundle \(E\) and are referred to as flat indices. Thus, they are raised and lowered using the Minkowski metric \(\eta_{m n}\) and its inverse \(\eta^{m n}\). Therefore, inverse vielbeins are related to vielbeins via
\begin{equation}
	e^\mu_m = g^{\mu \nu} \eta_{m n} e^n_\nu \, . \label{eqn:transformation_vielbeins_inverse_vielbeins}
\end{equation}
Moreover, notice that \eqnsaref{eqn:vielbeins}{eqn:inverse_vielbeins} are equivalent to
\begin{equation}
	g_{\mu \nu} e^\nu_n = \eta_{m n} e^m_\mu = e_{\mu n} \, , \label{eqn:inverse_vielbeins_eigenvalue_equation}
\end{equation}
which states that, when suppressing the Einstein summation convention on the flat indices and viewing them as numbers, inverse vielbeins \(\left \{ e^\mu_m \right \}_{m \in \left \{ 1,2,3,4 \right \}}\) are the set of the \(4\) eigenvector fields of the metric \(g_{\mu \nu}\) with eigenvalues \(\eta_{m m} \in \left \{ \pm 1 \right \}\), which are normalized to unit length
\begin{equation}
	\left \| e_m \right \|_g := \sqrt{\left \vert g_{\mu \nu} e^\mu_m e^\nu_m \right \vert} = \sqrt{\left \vert \eta_{m m} \right \vert} = 1 \, .
\end{equation}
Finally, we remark that the global definition of vielbeins and inverse vielbeins is only possible for parallelizable manifolds, such as the matter-compatible spacetimes of \defnref{defn:matter-compatible_spacetime}. This is because the tangent frame bundle \(FM\) allows for a global section if and only if the manifold is parallelizable, i.e.\ it is then possible to choose a global coordinate system on \(TM\) that consists of normalized eigenvector fields of the metric \(g_{\mu \nu}\).
\end{defn}

\vspace{\baselineskip}

\begin{rem} \label{rem:ambiguity_vielbeins_inverse_vielbeins}
Given the situation of \defnref{defn:vielbeins_inverse_vielbeins}, notice that the definitions of vielbeins and inverse vielbeins are not unique, since for any local Lorentz transformation acting on \(E^*\), i.e.\ a section of the corresponding orthogonal frame bundle \(\Lambda \in \Gamma \left ( U, F_O E^* \right )
\) acting via the standard representation on \(E^*\), we have
\begin{equation}
\begin{split}
	g_{\mu \nu} e^\mu_m e^\nu_n & = \eta_{m n}\\
	& = \eta_{r s} \tensor{\Lambda}{^r _m} \tensor{\Lambda}{^s _n}\\
	& = g_{\mu \nu} e^\mu_r e^\nu_s \tensor{\Lambda}{^r _m} \tensor{\Lambda}{^s _n}\\
	& = g_{\mu \nu} \tilde{e}^\mu_m \tilde{e}^\nu_n \, .
\end{split}
\end{equation}
Here, we denoted the transformed inverse vielbeins via \(\tilde{e}^\mu_m := e^\mu_r \tensor{\Lambda}{^r _m}\). Obviously the same calculation also holds for vielbeins instead of inverse vielbeins with local Lorentz transformations acting on \(E\). This ambiguity will lead to the first term in the spin connection, \(e_\nu^n \left ( \partial_\mu e^\nu_l \right )\), cf.\ \eqnref{eqn:spin_connection}. In fact, the first term in the spin connection can be viewed as the gauge field associated to local Lorentz transformations.
\end{rem}

\vspace{\baselineskip}

\begin{defn}[Connections on the spacetime-matter bundle] \label{defn:connections_spacetime-matter_bundle}
We use the following connections on the spacetime-matter bundle \(\mathcal{S}\): For the tangent bundle \(TM\) of the manifold \(M\) we use the Levi-Civita connection \(\nabla^{TM}_\mu\), acting on a vector field \(X \in \Gamma \left ( M, TM \right )\) via
\begin{subequations}
\begin{align}
	\nabla^{TM}_\mu X^\nu & := \partial_\mu X^\nu + \Gamma^\nu_{\mu \lambda} X^\lambda
\intertext{and on a corresponding covector field via}
	\nabla^{TM}_\mu X_\nu & := \partial_\mu X_\nu - \Gamma^\lambda_{\mu \nu} X_\lambda \, ,
\end{align}
\end{subequations}
with the Christoffel symbol \(\Gamma^\nu_{\mu \lambda}\), given in the case of the Levi-Civita connection via
\begin{equation}
	\Gamma^\nu_{\mu \lambda} := \frac{1}{2} g^{\nu \tau} \left ( \partial_\mu g_{\tau \lambda} + \partial_\lambda g_{\mu \tau} - \partial_\tau g_{\mu \lambda} \right ) \, .
\end{equation}
Furthermore, for the vector bundle \(E\) we use the covariant derivative \(\nabla^{E}_\mu\), induced via the connection on the tangent bundle using vielbeins and inverse vielbeins and acting on a vector field \(Y \in \Gamma \left ( M, E \right )\) via
\begin{subequations}
\begin{align}
	\nabla^{E}_\mu Y^n & := \partial_\mu Y^n + \omega_{\mu l}^n Y^l
\intertext{and on a corresponding covector field via}
	\nabla^{E}_\mu Y_n & := \partial_\mu Y_n - \omega_{\mu n}^l Y_l \, ,
\end{align}
\end{subequations}
with the spin connection
\begin{equation}
\begin{split}
	\omega_{\mu l}^n & := e_\nu^n \left ( \nabla^{TM}_\mu e^\nu_l \right )\\ & \phantom{:} = e_\nu^n \left ( \partial_\mu e^\nu_l \right ) + e_\nu^n \Gamma^\nu_{\mu \lambda} e^\lambda_l \, . \label{eqn:spin_connection}
\end{split}
\end{equation}
We remark that the first term in the spin connection, \(e_\nu^n \left ( \partial_\mu e^\nu_l \right )\), is due to the ambiguity in the definition of vielbeins and inverse vielbeins, as was discussed in \remref{rem:ambiguity_vielbeins_inverse_vielbeins}. Moreover, for the \(U(1)\)-principle bundle, we use the covariant derivative \(\nabla^{U(1)}_\mu\), acting on a section \(s \in \Gamma \left ( M, U(1) \right )\) via\footnote{The existence of global sections in the \(U(1)\) principle bundle is ensured by its global triviality, cf.\ \defnref{defn:spacetime-matter_bundle}, contrary to vector bundles, such as \(TM\), \(E\) or \(\Sigma M\), which always allow for the global zero section.}
\begin{subequations}
\begin{align}
	\nabla^{U(1)}_\mu s & := \partial_\mu s + \imaginary \ecoupling A_\mu s
\intertext{and on a corresponding dual section via}
	\nabla^{U(1)}_\mu s^* & := \partial_\mu s^* - \imaginary \ecoupling A_\mu s^* \, ,
\end{align}
\end{subequations}
with the \(\mathfrak{u}(1) \cong \imaginary \mathbb{R}\)-valued connection form \(\imaginary \ecoupling A_\mu\).\footnote{The imaginary unit \(\imaginary\) is included into the definition, such that \(A_\mu\) is real-valued. Moreover, the coupling constant \(\ecoupling\) is included to introduce the right proportionality constant in the interaction term in the Lagrange density \(\mathcal{L}_\text{GR-ED}\) when taking covariant derivatives of fermion fields, cf.\ \defnref{defn:electron-positron_photon_graviton_field}.} Additionally, for the spinor bundle we use the covariant derivative \(\nabla^{\Sigma M}_\mu\), induced via the connection on the real vector bundle \(E\) using a suitable representation and acting on a spinor field \(\psi \in \Gamma \left ( M, \Sigma M \right )\) via
\begin{subequations}
\begin{align}
	\nabla^{\Sigma M}_\mu \psi & := \partial_\mu \psi + \varpi_\mu \psi
\intertext{and on a corresponding cospinor field via}
	\nabla^{\Sigma M}_\mu \overline{\psi} & := \partial_\mu \overline{\psi} - \overline{\psi} \varpi_\mu \, ,
\end{align}
\end{subequations}
with the spinor bundle spin connection
\begin{equation}
	\varpi_\mu := - \frac{\imaginary}{4} \omega_\mu^{r s} \sigma_{r s} \, ,
\end{equation}
where \(\sigma_{r s} := \frac{\imaginary}{2} \left [ \gamma_r , \gamma_s \right ]\) is a representation of \(\mathfrak{spin} (1,3)\) on \(\Sigma M\). Thus, on the twisted spinor bundle \(U(1) \times_\rho \Sigma M\) we use the covariant derivative \(\nabla^{U(1) \times_\rho \Sigma M}_\mu\), induced via the connection on the \(U(1)\) principle bundle and the connection on the spinor bundle \(\Sigma M\) and acting on a twisted spinor field \(\Psi \in \Gamma \left ( M, U(1) \times_\rho \Sigma M \right )\) via\footnote{Being pedantic, \eqnsref{eqn:u1-sm-covariant_derivative} should actually read \(\nabla^{U(1) \times_\rho \Sigma M}_\mu \Psi = \partial_\mu \Psi + \left ( \id \times_\rho \varpi_\mu \right ) \Psi + \left ( \imaginary \ecoupling A_\mu \times_\rho \id \right ) \Psi\) and \(\nabla^{U(1) \times_\rho \Sigma M}_\mu \overline{\Psi} = \partial_\mu \overline{\Psi} - \overline{\Psi} \left ( \id \times_\rho \varpi_\mu \right ) - \overline{\Psi} \left ( \imaginary \ecoupling A_\mu \times_\rho \id \right )\).}
\begin{subequations} \label{eqn:u1-sm-covariant_derivative}
\begin{align}
	\nabla^{U(1) \times_\rho \Sigma M}_\mu \Psi & := \partial_\mu \Psi + \varpi_\mu \Psi + \imaginary \ecoupling A_\mu \Psi
\intertext{and on a corresponding twisted cospinor field via}
	\nabla^{U(1) \times_\rho \Sigma M}_\mu \overline{\Psi} & := \partial_\mu \overline{\Psi} - \overline{\Psi} \varpi_\mu - \imaginary \ecoupling A_\mu \overline{\Psi} \, .
\end{align}
\end{subequations}
Finally, on the parity shifted Lie algebra and dual Lie algebra bundles of the gauge groups \(\Pi \left ( TM \oplus T^*M \right )\) and \(\Pi \left ( \mathfrak{u}(1) \oplus \mathfrak{u}(1)^* \right )\) we use the connections induced via the connections on \(TM\) and \(U(1)\), respectively. We remark that the Levi-Civita connection is metric and torsion-free and the other connections are metric.
\end{defn}

\vspace{\baselineskip}

\begin{rem}[Tetrad postulate] \label{rem:tetrad_postulate}
Given the situation of \defnref{defn:vielbeins_inverse_vielbeins}, the tetrad postulate states that vielbeins and inverse vielbeins are parallel sections in \(\Gamma \left ( U, T^*M \otimes_\mathbb{R} E \right )\) and \(\Gamma \left ( U, TM \otimes_\mathbb{R} E^* \right )\), respectively, with respect to the corresponding tensor product bundle connection, cf.\ \defnref{defn:connections_spacetime-matter_bundle}, i.e.\ we have
\begin{subequations}
\begin{align}
	\nabla^{TM \otimes_\mathbb{R} E}_\mu e_\nu^n & = \partial_\mu e_\nu^n - \Gamma_{\mu \nu}^\lambda e_\lambda^n + \sigma_{\mu l}^n e_\nu^l \equiv 0
\intertext{and}
	\nabla^{TM \otimes_\mathbb{R} E}_\mu e^\nu_n & = \partial_\mu e^\nu_n + \Gamma_{\mu \lambda}^\nu e^\lambda_n - \sigma_{\mu n}^l e^\nu_l \equiv 0 \, .
\end{align}
\end{subequations}
In particular, this implies that it is irrelevant whether vielbeins and inverse vielbeins are placed before or after the covariant derivative \(\nabla^{TM \otimes_\mathbb{R} E}_\mu\) on the product bundle \(TM \otimes_\mathbb{R} E\). Finally, we remark that despite its name the tetrad postulate is not a postulate but a consequence of the definition of the connection on \(E\) via the connection on \(TM\) using vielbeins and inverse vielbeins.
\end{rem}

\vspace{\baselineskip}

\begin{defn}[Clifford multiplication] \label{defn:clifford_multiplication}
Given the spacetime-matter bundle \(\mathcal{S}\), we define the Clifford multiplication \(\gamma \in \Gamma \left ( M, E^* \otimes_\mathbb{R} \operatorname{End} \left ( \Sigma M \right ) \right )\) for vector fields \(Y \in \Gamma \left ( M, E \right )\) and spinor fields \(\psi \in \Gamma \left ( M, \Sigma M \right )\) via
\begin{align}
	Y \cdot \psi & := Y^m \gamma_m \psi \, .
\intertext{Furthermore, using vielbeins \(e \in \Gamma \left ( M, T^*M \otimes_\mathbb{R} E \right )\), we can define the Clifford multiplication \(\gamma \circ e \in \Gamma \left ( M, T^*M \otimes_\mathbb{R} \operatorname{End} \left ( \Sigma M \right ) \right )\) for vector fields \(X \in \Gamma \left ( M, TM \right )\) via}
	X \cdot \psi & := X^\mu e_\mu^m \gamma_m \psi \, .
\end{align}
Moreover, we extend this definition to the twisted spinor bundle \(U(1) \times_\rho \Sigma M\) via its action on \(\Sigma M\) and denote it for simplicity via the same symbol \(\gamma \in \Gamma \left ( M, E^* \otimes_\mathbb{R} \operatorname{End} \left ( U(1) \times_\rho \Sigma M \right ) \right )\). Then, we obtain for vector fields \(Y \in \Gamma \left ( M, E \right )\) and twisted spinor fields \(\Psi \in \Gamma \left ( M, U(1) \times_\rho \Sigma M \right )\)
\begin{align}
	Y \cdot \Psi & := Y^m \gamma_m \Psi \, .
\intertext{Additionally, using vielbeins \(e \in \Gamma \left ( M, T^*M \otimes_\mathbb{R} E \right )\), we can extend this definition to obtain \(\gamma \circ e \in \Gamma \left ( M, T^*M \otimes_\mathbb{R} \operatorname{End} \left ( U(1) \times_\rho \Sigma M \right ) \right )\) for vector fields \(X \in \Gamma \left ( M, TM \right )\)}
	X \cdot \Psi & := X^\mu e_\mu^m \gamma_m \Psi \, .
\end{align}
In the following, we typically use the latter definitions for the twisted spinor bundle.
\end{defn}

\vspace{\baselineskip}

\begin{rem}
We remark that the Clifford multiplication \(\gamma\) or \(\gamma \circ e\) turns the spaces of spinor fields \(\Gamma \left ( M, \Sigma M \right )\) and twisted spinor fields \(\Gamma \left ( M, U(1) \times_\rho \Sigma M \right )\) into modules over the space of vector fields \(\Gamma \left ( M, E \right )\) or \(\Gamma \left ( M, TM \right )\), respectively.\footnote{This is similar to the fact that all spaces of (super) vector fields are modules over spaces of functions, i.e.\ \(\Gamma \left ( M, TM \right )\), \(\Gamma \left ( M, E \right )\), \(\Gamma \left ( M, \Pi \left ( TM \oplus T^*M \right ) \right )\) and \(\Gamma \left ( M, \Pi \left ( \mathfrak{u}(1) \oplus \mathfrak{u}(1)^* \right ) \right )\) are modules over the space \(\Gamma \left ( M, \mathbb{R} \right )\) and \(\Gamma \left ( M, \Sigma M \right )\) and \(\Gamma \left ( M, U(1) \times_\rho \Sigma M \right )\) are modules over the space \(\Gamma \left ( M, \mathbb{C} \right )\).} Furthermore, it induces an automorphism if and only if the corresponding vector field \(Y \in \Gamma \left ( M, E \right )\) or \(X \in \Gamma \left ( M, TM \right )\) has nowhere vanishing seminorm with respect to the metric \(\eta_{m n}\) or \(g_{\mu \nu}\), respectively, i.e.\ \(\left \| Y \right \|_\eta := \sqrt{\left \vert \eta_{m n} Y^m Y^n \right \vert} \not \equiv 0\) or \(\left \| X \right \|_g := \sqrt{\vert g_{\mu \nu} X^\mu X^\nu \vert} \not \equiv 0\).\footnote{Again, this is similar to the fact that the multiplication of a (super) vector field with a function induces an automorphism if and only if the corresponding function \(f \in \Gamma \left ( M, \mathbb{K} \right )\) for \(\mathbb{K} \in \left \{ \mathbb{R}, \mathbb{C} \right \}\) has nowhere vanishing absolute value, i.e.\ \(\left | f \right | \not \equiv 0\). The only difference is that the Clifford multiplication depends also on the metric, which only induces a norm if the metric is positive or negative definite.}
\end{rem}

\vspace{\baselineskip}

\begin{defn}[Clifford relation] \label{defn:clifford_relation}
We set the Clifford relation for the vector bundle \(E\) as
\begin{align}
	\left \{ \gamma_m , \gamma_n \right \} & = 2 \eta_{m n} \id \, ,
\intertext{or equivalently, using vielbeins \(e \in \Gamma \left ( M, T^*M \otimes_\mathbb{R} E \right )\), for the tangent bundle \(TM\) as}
	e_\mu^m e_\nu^n \left \{ \gamma_m , \gamma_n \right \} & = 2 g_{\mu \nu} \id \, ,
\end{align}
where the identity automorphism \(\id\) is either considered on the spinor bundle \(\Sigma M\) or extended to the twisted spinor bundle \(U(1) \times_\rho \Sigma M\), cf.\ \defnref{defn:clifford_multiplication}.
\end{defn}

\vspace{\baselineskip}

\begin{rem}
We remark that the West coast (``mostly minus'') signature for the metrics \(g\) and \(\eta\) together with the ``plus signed'' Clifford relation induces a quaternionic representation for the Clifford algebra \(\operatorname{Cliff} \left ( 1, 3 \right )\) as the matrix algebra \(\operatorname{Mat} \left (2, \mathbb{H} \right )\). Choosing the Pauli matrices as a representation for the quaternions, we obtain the usual complex Dirac representation as the matrix algebra \(\operatorname{Mat} \left (4, \mathbb{C} \right )\), whose generators are Hermitian.
\end{rem}

\vspace{\baselineskip}

\begin{defn}[(Twisted) Dirac operator] \label{defn:dirac_operator}
Let \(\mathcal{S}\) be the spacetime-matter bundle from \defnref{defn:spacetime-matter_bundle}. Then we define the Dirac operator \(\slashed{\nabla}^{\Sigma M}\) on spinors \(\psi \in \Gamma \left ( M, \Sigma M \right )\) such that the following diagram commutes:
\begin{subequations}
\begin{equation}
\begin{tikzcd}[row sep=huge]
	\Gamma \left ( M, \Sigma M \right ) \arrow{r}{\slashed{\nabla}^{\Sigma M}} \arrow[swap]{d}{\nabla^{\Sigma M}_\cdot} & \Gamma \left ( M, \Sigma M \right )\\
\Gamma \left ( M, T^* M \otimes_\mathbb{R} \Sigma M \right ) \arrow[swap]{r}{g^{-1} \otimes_{\mathbb{R}} \id} & \Gamma \left ( M, T M \otimes_\mathbb{R} \Sigma M \right ) \arrow[swap]{u}{\gamma \circ e}
\end{tikzcd}
\end{equation}
Here, \(\nabla^{\Sigma M}_\mu = \partial_\mu + \varpi_\mu\) is the covariant derivative on the spinor bundle \(\Sigma M\), and \(e_\nu^n \gamma_n\) is the local representation of the Clifford-multiplication, with \(\gamma_n\) being the usual Minkowski spacetime Dirac matrices. Thus, the local description of the Dirac-operator on the spinor bundle \(\Sigma M\) is given via\footnote{Notice that by the tetrad postulate given in \remref{rem:tetrad_postulate} it does not matter whether we place the inverse vielbeins \(e^{\mu m}\) before or after the covariant derivative \(\nabla^{\Sigma M}_\mu\) (if we place it after, however, we need to consider the covariant derivative \(\nabla^{TM \otimes_\mathbb{R} E \times \Sigma M}_\mu\)).}
\begin{equation}
\begin{split}
	\slashed{\nabla}^{\Sigma M} & := e^{\mu m} \gamma_m \nabla^{\Sigma M}_{\mu} \\ & \phantom{:} = e^{\mu m} \gamma_m \left ( \partial_\mu + \varpi_\mu \right ) \, .
\end{split}
\end{equation}
\end{subequations}
Furthermore, we extend this definition to the twisted spinor bundle \(U(1) \times_\rho \Sigma M\) using the covariant derivative \(\nabla^{U(1) \times_\rho \Sigma M}_\mu\) instead of \(\nabla^{\Sigma M}_\mu\). Then we define the twisted Dirac operator \(\slashed{\nabla}^{U(1) \times_\rho \Sigma M}\) on twisted spinors \(\Psi \in \Gamma \left ( M, U(1) \times_\rho \Sigma M \right )\) such that the following diagram commutes:
\begin{subequations}
	\begin{equation}
	\begin{tikzcd}[row sep=huge]
	\Gamma \left ( M, U(1) \times_\rho \Sigma M \right ) \arrow{r}{\slashed{\nabla}^{U(1) \times_\rho \Sigma M}} \arrow[swap]{d}{\nabla^{U(1) \times_\rho \Sigma M}_\cdot} & \Gamma \left ( M, U(1) \times_\rho \Sigma M \right )\\
	\Gamma \left ( M, T^* M \otimes_\mathbb{R} \left ( U(1) \times_\rho \Sigma M \right ) \right ) \arrow[swap]{r}{g^{-1} \otimes_{\mathbb{R}} \id} & \Gamma \left ( M, T M \otimes_\mathbb{R} \left ( U(1) \times_\rho \Sigma M \right ) \right ) \arrow[swap]{u}{\gamma \circ e}
	\end{tikzcd}
	\end{equation}
Here, \(\nabla^{U(1) \times_\rho \Sigma M}_\mu = \partial_\mu + \varpi_\mu + \imaginary \ecoupling A_\mu\) is the covariant derivative on the twisted spinor bundle \(U(1) \times_\rho \Sigma M\), and the rest as above. Thus, the local description of the twisted Dirac-operator on the twisted spinor bundle \(U(1) \times_\rho \Sigma M\) is given via\footnote{Notice again that by the tetrad postulate given in \remref{rem:tetrad_postulate} it does not matter whether we place the inverse vielbeins \(e^{\mu m}\) before or after the covariant derivative \(\nabla^{U(1) \times_\rho \Sigma M}_\mu\) (if we place it after, however, we need to consider the covariant derivative \(\nabla^{TM \otimes_\mathbb{R} E \times U(1) \times_\rho \Sigma M}_\mu\)).}
	\begin{equation}
	\begin{split}
	\slashed{\nabla}^{U(1) \times_\rho \Sigma M} & := e^{\mu m} \gamma_m \nabla^{U(1) \times_\rho \Sigma M}_{\mu} \\ & \phantom{:} = e^{\mu m} \gamma_m \left ( \partial_\mu + \varpi_\mu + \imaginary \ecoupling A_\mu \right ) \, .
	\end{split}
	\end{equation}
\end{subequations}
\end{defn}

\vspace{\baselineskip}

\begin{defn}[Curvatures of the spacetime-matter bundle] \label{defn:curvatures_spacetime-matter_bundle}
Using the connections from \defnref{defn:connections_spacetime-matter_bundle} on the spacetime-matter bundle \(\mathcal{S}\), we can construct curvature tensors as commutators of the corresponding covariant derivatives. We start with the Riemann tensor of the tangent bundle, which acts on a vector field \(X \in \Gamma \left ( M, TM \right )\) via
\begin{subequations}
\begin{align}
	\tensor{R}{^\rho _\sigma _\mu _\nu} X^\sigma & := \left [ \nabla^{TM}_\mu , \nabla^{TM}_\nu \right ] X^\rho
\intertext{and reads}
	\tensor{R}{^\rho _\sigma _\mu _\nu} & = \partial_\mu \Gamma^\rho_{\nu \sigma} - \partial_\nu \Gamma^\rho_{\mu \sigma} + \Gamma^\rho_{\mu \lambda} \Gamma^\lambda_{\nu \sigma} - \Gamma^\rho_{\nu \lambda} \Gamma^\lambda_{\mu \sigma} \, .
\end{align}
\end{subequations}
Starting from the Riemann tensor, we can define the Ricci tensor as the following contraction
\begin{align}
	R_{\mu \nu} & := \tensor{R}{^\rho _\mu _\rho _\nu} \, ,
\intertext{and finally the Ricci scalar as}
	R & := g^{\mu \nu} R_{\mu \nu} \, .
\end{align}
Furthermore, we can construct the curvature two-form on the \(U(1)\) principle bundle, which acts on a section \(s \in \Gamma \left ( M, TM \right )\) via\footnote{We stress that \(F_{\mu \nu}\) is defined as the covariant exterior derivative of the connection form \(\imaginary \ecoupling A_\rho\), or for abelian gauge groups via the ordinary exterior derivative as given in \eqnref{eqn:fmunu}. Thus it has nothing to do with the connection on the tangent bundle. Nevertheless, for torsion free connections the expressions coincide, as the Christoffel symbols are then symmetric in their lower two indices, i.e.\ \(\Gamma_{\mu \nu}^\rho \equiv \Gamma_{\nu \mu}^\rho\). \label{foot:fmunu_exterior_derivative}}
\begin{subequations}
\begin{align}
	F_{\mu \nu} s & := \left [ \nabla^{U(1)}_\mu , \nabla^{U(1)}_\nu \right ] s
\intertext{and reads}
	F_{\mu \nu} & = \imaginary \ecoupling \left ( \partial_\mu A_\nu - \partial_\nu A_\mu \right ) \, . \label{eqn:fmunu}
\end{align}
\end{subequations}
Finally, we remark that being forms, the connection form \(\imaginary \ecoupling A_\mu\) and the curvature form \(F_{\mu \nu}\), both have naturally lower indices.
\end{defn}

\vspace{\baselineskip}

\begin{prop}[Ricci scalar for the Levi-Civita connection] \label{prop:ricci_scalar_for_the_levi_civita_connection}
Using the Levi-Civita connection, the Ricci scalar is given via partial derivatives of the metric and its inverse as follows:
\begin{equation} \label{eqn:ricci_scalar_metric}
\begin{split}
	R & = g^{\mu \rho} g^{\nu \sigma} \left ( \partial_\mu \partial_\nu g_{\rho \sigma} - \partial_\mu \partial_\rho g_{\nu \sigma} \right )\\
	& \phantom{ = } + g^{\mu \rho} g^{\nu \sigma} g^{\kappa \lambda} \left ( \left ( \partial_\mu g_{\kappa \lambda} \right ) \left ( \partial_\nu g_{\rho \sigma} - \frac{1}{4} \partial_\rho g_{\nu \sigma} \right ) + \left ( \partial_\nu g_{\rho \kappa} \right ) \left ( \frac{3}{4} \partial_\sigma g_{\mu \lambda} - \frac{1}{2} \partial_\mu g_{\sigma \lambda} \right ) \right .\\
	& \phantom{ \phantom{ = } + g^{\mu \rho} g^{\nu \sigma} g^{\kappa \lambda} \{ } \left . \vphantom{\left ( \frac{1}{2} \right )} - \left ( \partial_\mu g_{\rho \kappa} \right ) \left ( \partial_\nu g_{\sigma \lambda} \right ) \right )
\end{split}
\end{equation}
\end{prop}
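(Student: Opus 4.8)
The plan is to evaluate $R = g^{\mu\nu}R_{\mu\nu}$ directly from the curvature definitions in \defnref{defn:curvatures_spacetime-matter_bundle}, reducing everything to partial derivatives of $g_{\mu\nu}$ and of the inverse metric. First I would contract the given Riemann tensor to obtain the Ricci tensor $R_{\mu\nu} = \tensor{R}{^\rho _\mu _\rho _\nu}$, which reads
\[
	R_{\mu\nu} = \partial_\rho \Gamma^\rho_{\nu\mu} - \partial_\nu \Gamma^\rho_{\rho\mu} + \Gamma^\rho_{\rho\lambda}\Gamma^\lambda_{\nu\mu} - \Gamma^\rho_{\nu\lambda}\Gamma^\lambda_{\rho\mu} \, ,
\]
contract once more with $g^{\mu\nu}$, and then substitute the Levi-Civita expression for each Christoffel symbol from \defnref{defn:connections_spacetime-matter_bundle}.

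The key technical input is the derivative of the inverse metric, obtained by differentiating the identity $g^{\mu\nu}g_{\nu\lambda} = \delta^\mu_\lambda$:
\[
	\partial_\kappa g^{\mu\nu} = - g^{\mu\alpha} g^{\nu\beta} \left ( \partial_\kappa g_{\alpha\beta} \right ) \, .
\]
This is essential because each $\partial\Gamma$ term hides a derivative of $g^{-1}$ inside $\Gamma$, which splits off extra quadratic-in-$\partial g$ contributions alongside the genuine second-derivative pieces. I would therefore organize the computation into two blocks: the \emph{linear} block of terms carrying a second derivative of $g$, which come only from the $\partial\Gamma$ terms after the $\partial g^{-1}$ pieces are separated off, and the \emph{quadratic} block of terms bilinear in first derivatives, which come both from those separated pieces and from the two $\Gamma\Gamma$ products. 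The linear block should collapse to $g^{\mu\rho}g^{\nu\sigma}\left ( \partial_\mu\partial_\nu g_{\rho\sigma} - \partial_\mu\partial_\rho g_{\nu\sigma} \right )$ after symmetrizing the dummy indices and using $g_{\rho\sigma} = g_{\sigma\rho}$.

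The main obstacle will be the bookkeeping in the quadratic block: there are on the order of a dozen triple contractions of the schematic form $g\,g\,g\,(\partial g)(\partial g)$, and they must be collected by systematically relabeling the dummy indices $\mu,\nu,\rho,\sigma,\kappa,\lambda$ onto a common slot-pattern and repeatedly invoking the symmetry of $g$. The unusual coefficients $\tfrac{3}{4}$ and $-\tfrac{1}{4}$ in \eqnref{eqn:ricci_scalar_metric} are precisely the fingerprint of several terms merging; I expect the factors of $-\tfrac12$ from each Christoffel symbol and the $\tfrac14$ normalization of the $\Gamma\Gamma$ products to recombine so that nothing is dropped. As a consistency check at the end I would verify that the result is manifestly symmetric under exchanging the indices within each factor $g_{\rho\sigma}$, and confirm on the linearized perturbation $g = \eta + h$ that the quoted coefficients reproduce the standard linearized Ricci scalar, which fixes the normalization and catches any sign or factor error in the collection step.
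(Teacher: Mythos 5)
Your proposal follows essentially the same route as the paper's proof: contract the Riemann tensor to the Ricci scalar, substitute the Levi-Civita Christoffel symbols, and use the identity \(\partial_\kappa g^{\mu\nu} = -g^{\mu\alpha}g^{\nu\beta}\left(\partial_\kappa g_{\alpha\beta}\right)\) (which the paper derives from \(\nabla^{TM}_\rho \delta_\mu^\nu = 0\)) to separate the second-derivative terms from the terms bilinear in first derivatives before collecting coefficients. The plan is correct, and your added linearized consistency check is a sensible supplement the paper does not include.
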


\begin{proof}
The claim is verified by the calculation
\begin{equation}
\begin{split}
	R & = g^{\nu \sigma} \tensor{R}{^\mu _\sigma _\mu _\nu}\\
	& = g^{\nu \sigma} \left ( \partial_\mu \Gamma^\mu_{\nu \sigma} - \partial_\nu \Gamma^\mu_{\mu \sigma} + \Gamma^\mu_{\mu \kappa} \Gamma^\kappa_{\nu \sigma} - \Gamma^\mu_{\nu \kappa} \Gamma^\kappa_{\mu \sigma} \right )\\
	& = g^{\nu \sigma} \left ( \left ( \partial_\mu g^{\mu \rho} \right ) \left ( \partial_\nu g_{\rho \sigma} - \frac{1}{2} \partial_\rho g_{\nu \sigma} \right ) - \frac{1}{2} \left ( \partial_\nu g^{\mu \rho} \right ) \left ( \partial_\sigma g_{\mu \rho} \right ) + g^{\mu \rho} \left ( \partial_\mu \partial_\nu g_{\rho \sigma} - \partial_\mu \partial_\rho g_{\nu \sigma} \right ) \right ) \\
& \phantom{ = } + g^{\mu \rho} g^{\nu \sigma} g^{\kappa \lambda} \left ( \left ( \partial_\mu g_{\kappa \lambda} \right ) \left ( \frac{1}{2} \partial_\nu g_{\rho \sigma} - \frac{1}{4} \partial_\rho g_{\nu \sigma} \right ) + \left ( \partial_\nu g_{\rho \kappa} \right ) \left ( \frac{1}{4} \partial_\sigma g_{\mu \lambda} - \frac{1}{2} \partial_\mu g_{\sigma \lambda} \right ) \right )\\
	& = g^{\mu \rho} g^{\nu \sigma} \left ( \partial_\mu \partial_\nu g_{\rho \sigma} - \partial_\mu \partial_\rho g_{\nu \sigma} \right )\\
	& \phantom{ = } + g^{\mu \rho} g^{\nu \sigma} g^{\kappa \lambda} \left ( \left ( \partial_\mu g_{\kappa \lambda} \right ) \left ( \partial_\nu g_{\rho \sigma} - \frac{1}{4} \partial_\rho g_{\nu \sigma} \right ) + \left ( \partial_\nu g_{\rho \kappa} \right ) \left ( \frac{3}{4} \partial_\sigma g_{\mu \lambda} - \frac{1}{2} \partial_\mu g_{\sigma \lambda} \right ) \right .\\
	& \phantom{ \phantom{ = } + g^{\mu \rho} g^{\nu \sigma} g^{\kappa \lambda} ( } \left . \vphantom{\left ( \frac{1}{2} \right )} - \left ( \partial_\mu g_{\rho \kappa} \right ) \left ( \partial_\nu g_{\sigma \lambda} \right ) \right ) \, ,
\end{split}
\end{equation}
where we have used \(\left ( \partial_\rho g^{\nu \sigma} \right ) g_{\mu \sigma} = - g^{\nu \sigma} \left ( \partial_\rho g_{\mu \sigma} \right )\), which results from
\begin{equation}
\begin{split}
	0 & = \nabla^{TM}_\rho \delta_\mu^\nu\\
	& = \partial_\rho \delta_\mu^\nu + \Gamma_{\rho \sigma}^\nu \delta_\mu^\sigma - \Gamma_{\rho \mu}^\sigma \delta_\sigma^\nu\\
	& = \partial_\rho \delta_\mu^\nu + \Gamma_{\rho \mu}^\nu - \Gamma_{\rho \mu}^\nu\\
	& = \partial_\rho \delta_\mu^\nu\\
	& = \partial_\rho \left ( g_{\mu \sigma} g^{\nu \sigma} \right )\\
	& = \left ( \partial_\rho g_{\mu \sigma} \right ) g^{\nu \sigma} + g_{\mu \sigma} \left ( \partial_\rho g^{\nu \sigma} \right ) \, .
\end{split}
\end{equation}
\end{proof}

\vspace{\baselineskip}

\begin{rem}
Assuming the Levi-Civita connection, the Riemann tensor \(\tensor{R}{^\rho _\sigma _\mu _\nu}\) and the Ricci tensor \(R_{\mu \nu}\) are not sensitive to the choice of the signature of the metric \(g_{\mu \nu}\), whereas the lowered Riemann tensor \(R_{\rho \sigma \mu \nu} := g_{\rho \lambda} \tensor{R}{^\lambda _\sigma _\mu _\nu}\) and the Ricci scalar \(R\) are. Therefore, the Einstein-Hilbert Lagrange density is sensitive to this choice as well, which is the reason for the minus sign in front of the Einstein-Hilbert Lagrange density in \eqnref{eqn:lagrange_gr-ed} in our West coast (``mostly minus'') signature convention, cf.\ \defnref{def:spacetime}.
\end{rem}

\vspace{\baselineskip}

\begin{defn}[Riemannian volume form] \label{def:riemannian_volume_form}
We define the Riemannian volume form for the spacetime \((M,g)\) as
\begin{equation}
	\dif V_g := \sqrt{- \dt{g}} \dif t \wedge \dif x \wedge \dif y \wedge \dif z \, .
\end{equation}
In particular, the Riemannian volume form for the Minkowski spacetime \((\mathbb{M},\eta)\) is given via
\begin{equation}
	\dif V_\eta = \dif t \wedge \dif x \wedge \dif y \wedge \dif z \, .
\end{equation}
\end{defn}

\vspace{\baselineskip}

\begin{defn}[Fermion, photon and graviton field] \label{defn:electron-positron_photon_graviton_field}
The mathematical objects correspond in the following way to physical particles: The deviation of the metric \(g_{\mu \nu}\) from the Minkowski metric \(\eta_{\mu \nu}\) is defined to be proportional to the graviton field \(h_{\mu \nu}\), with the proportionality factor given by the gravitational coupling constant \(\gcoupling := \sqrt{8 \pi G}\) to obtain the right interaction couplings,\footnote{Where \(G\) is Newton's constant and we have furthermore \(\gcoupling = \sqrt{\kappa}\), where \(\kappa := 8 \pi G\) is Einstein's constant.} i.e.\
\begin{equation}
	h_{\mu \nu} := \frac{1}{\gcoupling} \left ( g_{\mu \nu} - \eta_{\mu \nu} \right ) \iff g_{\mu \nu} \equiv \eta_{\mu \nu} + \gcoupling h_{\mu \nu} \, .
\end{equation}
The graviton field can be thought of as a symmetric \((0,2)\)-tensor field living on the flat background Minkowski spacetime \((\mathbb{M},\eta)\), i.e.\ \(h \in \Gamma \left ( \mathbb{M}, \operatorname{Sym}^2 \left ( T^*M \right ) \right )\). Moreover, the connection form \(\imaginary \ecoupling A_\mu\) is defined to be proportional to the photon field \(A_\mu\), with the proportionality factor given by the imaginary unit \(\imaginary\) and the electromagnetic coupling constant \(\ecoupling\), such that the photon field is real and induces the right interaction couplings with fermions. It induces the Faraday or electromagnetic field strength tensor \(F_{\mu \nu}\), cf.\ \defnref{defn:curvatures_spacetime-matter_bundle}. Moreover, a section in the spinor bundle \(\psi \in \Gamma \left ( M, \Sigma M \right )\), corresponds to a fermion field. In particular, in our case it is a linear combination of the electron and the positron field with charge \(\pm \ecoupling\), respectively. In the following, we are interested in the coupling of the fermion field \(\psi\) to the photon field \(A_\mu\). Mathematically, this is obtained by viewing fermions as sections in the twisted spinor bundle \(\Psi \in \Gamma \left ( M, U(1) \times_\rho \Sigma M \right )\). Therefore, in the following we use twisted spinor fields, i.e.\ sections in the twisted bundle \(U(1) \times_\rho \Sigma M\), if not stated otherwise, which are denoted by \(\Psi\), compared to \(\psi\), which denote sections in \(\Sigma M\). Spinors and twisted spinors induce the four-current \(j^\mu\), defined via
\begin{equation}
	j^\mu := e^{\mu m} \overline{\psi} \gamma_m \psi \equiv e^{\mu m} \overline{\Psi} \gamma_m \Psi \, .
\end{equation}
Finally, the ghosts are sections in the parity shifted Lie algebra and dual Lie algebra bundles of the gauge groups \(\Pi \left ( TM \oplus T^*M \right )\) and \(\Pi \left ( \mathfrak{u}(1) \oplus \mathfrak{u}(1)^* \right )\): We have the graviton ghost \(\gravitonghost \in \Gamma \left ( M, \Pi \left ( TM \right ) \right )\), the graviton antighost \(\overline{\gravitonghost} \in \Gamma \left ( M, \Pi \left ( T^*M \right ) \right )\), the photon ghost \(\photonghost \in \Gamma \left ( M, \Pi \left ( \mathfrak{u}(1) \right ) \right )\) and the photon antighost \(\overline{\photonghost} \in \Gamma \left ( M, \Pi \left ( \mathfrak{u}(1)^* \right ) \right )\).
\end{defn}

\vspace{\baselineskip}

\begin{rem}[Physical interpretation of the connections on the spacetime-matter bundle] \label{rem:physical_interpretation_connections}
We stress that even though the connections on the spacetime-matter bundle are mathematically similarly defined, cf.\ \defnref{defn:connections_spacetime-matter_bundle}, their physical interpretation is rather different, cf.\ \defnref{defn:electron-positron_photon_graviton_field}: The Christoffel symbols \(\Gamma_{\mu \lambda}^\nu\) and the spin connection \(\omega_{\mu l}^n\) are proportional to a series in the graviton field, i.e.\ to a sum of arbitrary many particles. Contrary, the connection form \(\imaginary \ecoupling A_\mu\) on the \(U(1)\) principle bundle corresponds directly to the photon field, i.e.\ to a single particle.
\end{rem}

\vspace{\baselineskip}

\begin{rem}[Gauge transformations] \label{rem:gauge_transformations}
The Lagrange density of GR-ED, \eqnref{eqn:lagrange_gr-ed}, is invariant under two different symmetry transformations, in the following called gauge transformations:\footnote{We remark that the invariance is only strict for the bundle automorphism \(\varphi \in \autu\), because diffeomorphisms homotopic to the identity \(\phi \in \diff\) add a total derivative to the Einstein-Hilbert Lagrange density.} The first are diffeomorphisms of spacetime \((M,g)\) homotopic to the identity \(\phi \in \diff\) and the second are \(U(1)\) principle bundle automorphisms \(\varphi \in \autu\). While the first affects the whole spacetime-matter bundle \(\mathcal{S}\), the second affects only the \(U(1)\) principle bundle together with the spinor bundle \(\Sigma M\), due to the action \(\rho\). We remark that in contrast to finite dimensional Lie groups, the exponential map
\begin{equation}
	\operatorname{Exp} \, : \; \mathfrak{g} \to G
\end{equation}
is not locally surjective for infinite dimensional Lie groups, such as \(\diff\); rather they are regular Lie groups in the sense that there exists a bijective evolution operator from smooth curves in the Lie algebra to smooth curves in the Lie group
\begin{equation}
	\operatorname{Evol} \, : \; \Gamma \left ( \mathcal{I}, \mathfrak{g} \right ) \to \Gamma \left ( \mathcal{I}, G \right ) \, ,
\end{equation}
where \(\mathcal{I} \subset \mathbb{R}\) is a compact interval, cf.\ \cite{Milnor,Grabowski,Kriegl_Michor,Schmeding}. Concretely, we have
\begin{subequations}
\begin{align}
	\mathfrak{diff} \left ( M \right ) & \cong \Gamma_\text{c} \left ( M, TM \right ) \, ,
	\intertext{where \(\Gamma_\text{c} \left ( M, TM \right )\) denotes compactly supported vector fields, and thus}
	\Gamma \left ( \mathcal{I}, \mathfrak{diff} \left ( M \right ) \right ) & \cong \Gamma_\text{c} \left ( \mathcal{I} \times M, TM \right ) \, ,
\end{align}
\end{subequations}
i.e.\ compactly supported one-parameter vector fields \(X \left ( \tau \right )\) for \(\tau \in \mathcal{I}\),\footnote{We remark that diffeomorphisms homotopic to the identity, \(\phi \in \diff\), differ from the identity also only on compactly supported domains.} and
\begin{equation}
	\mathfrak{u}(1) \cong \imaginary \mathbb{R} \, .
\end{equation}
Thus, we can associate elements in the corresponding Lie algebras with elements in the corresponding Lie groups. This will be useful when we consider infinitesimal gauge transformations in order to fix the gauge and calculate the Faddeev-Popov ghosts in \sssecaref{sssec:gauge_fixing_lagrange_density}{subsubsec:ghost_lagrange_density}, respectively. Using the notations
\begin{equation}
	\phixtt := \operatorname{Evol} \left ( X \left ( \tau \right ) \right ) \left ( \tau \right ) \, : \; \Gamma \left ( \mathcal{I}, \mathfrak{diff} \left ( M \right ) \right ) \to \Gamma \left ( \mathcal{I}, \operatorname{Diff}_0 \left ( M \right ) \right )
\end{equation}
with \(\tau \in \mathcal{I}\) and \(X \left ( \tau \right ) \in \Gamma_\text{c} \left ( \mathcal{I} \times M, TM \right )\), and
\begin{equation}
	\vphif := \operatorname{Exp} \left ( \mathfrak{f} \right ) \, : \; \mathfrak{u}(1) \to U(1)
\end{equation}
with \(\mathfrak{f} \in \Gamma \left ( M, \mathfrak{u}(1) \right )\), the transformations of sections in the spacetime-matter bundle \(\Gamma \left ( M, \mathcal{S} \right )\) with respect to \(\phixtt\) and \(\vphif\) are given via corresponding pullbacks and pushforewards.\footnote{In the following we will not make a distinction between pullbacks and pushforewards, as for diffeomorphisms homotopic to the identity \(\phi \in \diff\) the pushforeward is the pullback of the inverse, i.e.\ \(\phi_* = \big ( \phi^{-1} \big )^*\), and vice versa. This applies in particular also to bundle automorphisms \(\vphif \in \autu\). Furthermore, when regarding evolutions we need to take the sign of the parameter \(\tau\) into account, as \(\big ( \phixtt \big )^{-1} \equiv \phi_{X ( \tau )}^{-\tau}\), i.e.\ contravariant indices transform via \(\big ( \phixtt \big )_* \equiv \big ( \phi_{X ( \tau )}^{- \tau} \big )^*\), whereas covariant indices transform via \(\big ( \phixtt \big )^* \equiv \big ( \phi_{X ( \tau )}^{- \tau} \big )_*\). \label{ftnt:minus_sign}} Furthermore, we also remark the identities
\begin{align}
	\eval{\frac{\dif}{\dif \tau} \left ( \phixtt \right )}_{\tau = 0} & = X(0) \label{eqn:differential_evol-operator}
\intertext{and}
	\eval{\frac{\dif}{\dif \tau} \left ( \varphi_{\tau \mathfrak{f}} \right )}_{\tau = 0} & = \mathfrak{f} \, .
\end{align}
Moreover, given an arbitrary tensor field of type \((r,s)\), \(\mathcal{T} \in \Gamma \left ( M, T^r_s M \right )\), such that the tensor field \(\left ( \phixtt \right )^* \mathcal{T}\) is representable via a Taylor series around \(\tau = 0\), we can consider the Taylor expansion of its pullback \(\left ( \phixtt \right )^* \mathcal{T}\) in the parameter \(\tau\) around \(\tau = 0\), which converges then for some \(\tau \in \mathcal{J} \subset \mathcal{I}\) open,
\begin{subequations}
\begin{equation}
\begin{split}
	\left ( \phixtt \right )^* \mathcal{T} & = \sum_{k = 0}^\infty \frac{1}{k!} \eval{\left ( \frac{\mathrm{d}^k}{\dif \sigma^k} \left ( \left ( \phi_{X \left ( \sigma \right )}^\sigma \right )^* \mathcal{T} \right ) \right )}_{\sigma = 0} \tau^k \\
	& = \sum_{k = 0}^\infty \frac{1}{k!} \left ( \Lie_{X \left ( \tau \right )}^k \mathcal{T} \right ) \tau^k \, ,
\end{split} \label{eqn:taylor_expansion_flow_vector_field}
\end{equation}
where we have set
\begin{equation}
	\left ( \Lie_{X \left ( \tau \right )}^k \mathcal{T} \right )_p := \eval{\left ( \frac{\mathrm{d}^k}{\dif \sigma^k} \left ( \left ( \phi_{X \left ( \sigma \right )}^\sigma \right )^* \mathcal{T}_p \right ) \right )}_{\sigma = 0} \label{eqn:taylor_coefficients_flow_vector_field}
\end{equation}
\end{subequations}
for all \(p \in M\). We remark that \(\Lie_{X \left ( \tau \right )}^0 \mathcal{T} \equiv \mathcal{T}\) is the identity and \(\Lie_{X \left ( \tau \right )}^1 \mathcal{T} \equiv \Lie_{X \left ( 0 \right )} \mathcal{T}\) is the Lie derivative with respect to the constant vector field \(X \left ( 0 \right )\).
\end{rem}

\vspace{\baselineskip}

\begin{lem}
Given the situation of \remref{rem:gauge_transformations} and a constant vector field \(X \left ( \tau \right ) \equiv X\),\footnote{In general, the formulas will also contain derivatives of the vector field \(X \left ( \tau \right )\) in the parameter \(\tau\).} then the coefficients of \eqnsaref{eqn:taylor_expansion_flow_vector_field}{eqn:taylor_coefficients_flow_vector_field} are given via iterated Lie derivatives, i.e.\
\begin{equation}
	\left ( \Lie_{X}^{k+1} \mathcal{T} \right )_p \equiv \left ( \Lie_{X} \left ( \Lie_{X}^k \mathcal{T} \right ) \right )_p \, .
\end{equation}
\end{lem}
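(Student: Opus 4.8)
The plan is to exploit the single feature that distinguishes a \emph{constant} vector field from a genuinely \(\tau\)-dependent one: for constant \(X\) the evolution \(\phi_X^\tau = \operatorname{Evol}(X)(\tau)\) is the ordinary flow of \(X\) and therefore obeys the one-parameter group law \(\phi_X^{\sigma+\tau} = \phi_X^\sigma \circ \phi_X^\tau\). This identity fails when \(X(\tau)\) is not constant (then \(\phi_{X(\tau)}^\tau\) is only a non-autonomous evolution, and extra \(\tau\)-derivatives of \(X\) appear, precisely as the footnote warns), so the hypothesis of the lemma is used exactly at this point.

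First I would set \(F(\sigma) := \left(\phi_X^\sigma\right)^* \mathcal{T}\), viewed as a smooth curve in the space of \((r,s)\)-tensor fields, so that by \eqnref{eqn:taylor_coefficients_flow_vector_field} one has \(\Lie_X^k \mathcal{T} = \eval{\frac{\mathrm d^k}{\dif \sigma^k} F(\sigma)}_{\sigma=0} = F^{(k)}(0)\), while the ordinary Lie derivative appearing on the right-hand side of the claim is \(\Lie_X S = \eval{\frac{\dif}{\dif \tau}\left(\phi_X^\tau\right)^* S}_{\tau=0}\) for any tensor field \(S\). The goal is then to show \(\Lie_X\big(F^{(k)}(0)\big) = F^{(k+1)}(0)\) for every \(k\).

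Next, using the group law (and the fact that the flows of one and the same field commute, so the order of the two pullbacks is immaterial) I compute \(\left(\phi_X^\tau\right)^* F(\sigma) = \left(\phi_X^\tau\right)^*\left(\phi_X^\sigma\right)^*\mathcal{T} = \left(\phi_X^{\sigma+\tau}\right)^*\mathcal{T} = F(\sigma+\tau)\). Differentiating this identity \(k\) times in \(\sigma\) and evaluating at \(\sigma=0\)—interchanging \(\left(\phi_X^\tau\right)^*\), a fixed linear map on each finite-dimensional tensor fibre, with \(\frac{\mathrm d^k}{\dif \sigma^k}\)—yields \(\left(\phi_X^\tau\right)^* F^{(k)}(0) = F^{(k)}(\tau)\). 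Applying \(\eval{\frac{\dif}{\dif \tau}}_{\tau=0}\) to both sides then gives \(\Lie_X\big(F^{(k)}(0)\big) = F^{(k+1)}(0)\), which is exactly \(\left(\Lie_X\left(\Lie_X^k \mathcal{T}\right)\right)_p \equiv \left(\Lie_X^{k+1}\mathcal{T}\right)_p\) pointwise for every \(p \in M\).

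The main obstacle is the justification of the interchange of the pullback operator with the higher \(\sigma\)-derivative, and of the two nested differentiations \(\frac{\dif}{\dif \tau}\) and \(\frac{\mathrm d^k}{\dif \sigma^k}\); this rests on the joint smoothness of \((\tau,\sigma) \mapsto \left(\phi_X^{\sigma+\tau}\right)^*\mathcal{T}\), guaranteed by the smooth dependence of the flow on its time parameter, together with the observation that fibrewise a pullback is merely a linear map built from \(D\phi_X^\tau\). Everything else reduces to the group property, which is where the constancy of \(X\) enters decisively; note that no induction is actually required, since the argument produces the identity for each \(k\) directly.
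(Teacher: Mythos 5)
Your proposal is correct and follows essentially the same route as the paper: the paper likewise rewrites \(\eval{\tfrac{\mathrm{d}^{k+1}}{\dif \sigma^{k+1}}\left(\left(\phi_X^\sigma\right)^*\mathcal{T}_p\right)}_{\sigma=0}\) as an iterated derivative in two parameters \(\sigma_1+\sigma_2\), invokes the one-parameter group law \(\left(\phi_X^{\sigma_1+\sigma_2}\right)^* = \left(\phi_X^{\sigma_1}\right)^*\left(\phi_X^{\sigma_2}\right)^*\) (which is exactly where constancy of \(X\) enters), and then commutes the pullback past the inner \(k\)-fold derivative before taking the outer derivative. Your reformulation via \(F(\sigma) := \left(\phi_X^\sigma\right)^*\mathcal{T}\) and the translation identity \(\left(\phi_X^\tau\right)^* F^{(k)}(0) = F^{(k)}(\tau)\) is just a repackaging of the same steps.
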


\begin{proof}
This follows from the following calculation:
\begin{equation}
\begin{split}
	\left ( \Lie_{X}^{k+1} \mathcal{T} \right )_p & = \eval{\frac{\mathrm{d}^{k+1}}{\dif \sigma^{k+1}} \left ( \left ( \phi_{X}^{\sigma} \right )^* \mathcal{T}_p \right )}_{\sigma = 0}\\
	& = \eval{\frac{\mathrm{d}}{\dif \sigma_1} \frac{\mathrm{d}^k}{\dif \sigma_2^k} \left ( \left ( \phi_{X}^{{\sigma_1} + {\sigma_2}} \right )^* \mathcal{T}_p \right )}_{\substack{{\sigma_1} = 0\\{\sigma_2} = 0}}\\
	& = \eval{\frac{\mathrm{d}}{\dif \sigma_1} \frac{\mathrm{d}^k}{\dif \sigma_2^k} \left ( \left ( \phi_{X}^{\sigma_1} \right )^* \left ( \phi_{X}^{\sigma_2} \right )^* \mathcal{T}_p \right )}_{\substack{{\sigma_1} = 0\\{\sigma_2} = 0}}\\
	& = \eval{\frac{\mathrm{d}}{\dif \sigma_1} \frac{\mathrm{d}^k}{\dif \sigma_2^k} \left ( \left ( \phi_{X}^{\sigma_1} \right )^* \left ( \left ( \phi_{X}^{\sigma_2} \right )^* \mathcal{T}_p \right )_p \right )}_{\substack{{\sigma_1} = 0\\{\sigma_2} = 0}}\\
	& = \eval{\frac{\mathrm{d}}{\dif \sigma_1} \left ( \eval{\frac{\mathrm{d}^k}{\dif \sigma_2^k} \left ( \left ( \phi_{X}^{\sigma_1} \right )^* \left ( \left ( \phi_{X}^{\sigma_2} \right )^* \mathcal{T}_p \right )_p \right )}_{{\sigma_2} = 0} \right )}_{{\sigma_1} = 0}\\
	& = \eval{\frac{\mathrm{d}}{\dif \sigma_1} \left ( \left ( \phi_{X}^{\sigma_1} \right )^* \left ( \eval{\frac{\mathrm{d}^k}{\dif \sigma_2^k} \left ( \left ( \phi_{X}^{\sigma_2} \right )^* \mathcal{T}_p \right )}_{{\sigma_2} = 0} \right )_p \right )}_{{\sigma_1} = 0}\\
	& = \eval{\frac{\mathrm{d}}{\dif \sigma_1} \left ( \left ( \phi_{X}^{\sigma_1} \right )^* \left ( \Lie_{X}^k \mathcal{T} \right )_p \right )}_{{\sigma_1} = 0}\\
	& = \left ( \Lie_{X} \left ( \Lie_{X}^k \mathcal{T} \right ) \right )_p
\end{split}
\end{equation}
\end{proof}

\vspace{\baselineskip}

\begin{rem}[Infinitesimal transformation properties] \label{rem:transformation_properties}
Given the situation of \remref{rem:gauge_transformations}, we are now interested in infinitesimal transformation properties of the graviton field and the photon field. This is needed in order to fix the gauge and calculate the Faddeev-Popov ghosts in \sssecaref{sssec:gauge_fixing_lagrange_density}{subsubsec:ghost_lagrange_density}, respectively. To this end we can restrict to diffeomorphisms that are given as the flow of a constant vector field \(X \left ( \tau \right ) \equiv X\), i.e.\ \(\phixt \in \diff\), as they generate the corresponding Lie algebra \(\mathfrak{diff} \left ( M \right )\), cf.\ \eqnref{eqn:differential_evol-operator}. To obtain the right interaction behaviour for the corresponding ghosts with the other fields, we rescale the vector field \(X \in \Gamma \left ( M, TM \right )\) by \(\gcoupling\) and the Lie algebra section \(\mathfrak{f} \in \Gamma \left ( M, \mathfrak{u}(1) \right )\) by \(\imaginary \ecoupling\),\footnote{Given that \(\mathfrak{u}(1) \cong \imaginary \mathbb{R}\), the convention is such that the connection form \(\imaginary \ecoupling A \in \Gamma \left ( M, T^*M \otimes_\mathbb{R} \mathfrak{u}(1) \right )\) is purely imaginary and thus \(A \in \Gamma \left ( M, T^*M \otimes_\mathbb{R} \mathbb{R} \right ) \cong \Gamma \left ( M, T^*M \right )\) is real (and in particular Hermitian, as is the case when this definition is used for non-abelian gauge theories). Thus, in particular \(\mathfrak{f} \in \Gamma \left ( M, \mathfrak{u}(1) \right ) \cong \Gamma \left ( M, \imaginary \mathbb{R} \right )\) with \(f \in \Gamma \left ( M, \mathbb{R} \right )\).} i.e.\ set
\begin{align}
	\mathcal{X}^\mu := \frac{1}{\gcoupling} X^\mu & \iff X^\mu \equiv \gcoupling \mathcal{X}^\mu
	\intertext{and}
	f := - \frac{\imaginary}{\ecoupling} \mathfrak{f} & \iff \mathfrak{f} \equiv \imaginary \ecoupling f \, .
\end{align}
Then, the infinitesimal transformations of the graviton field \(h_{\mu \nu}\) and the photon field \(A_\mu\) with respect to \(\phixt\) are given via Lie derivatives with respect to \(X^\mu\) and read (\(\order{\cdot}\) denotes the Landau symbol):
\begin{equation}
	\left ( \phixt \right )_* h_{\mu \nu} = h_{\mu \nu} + \left ( \partial_\mu \mathcal{X}_\nu + \partial_\nu \mathcal{X}_\mu - 2 \Gamma_{\mu \nu}^\lambda \mathcal{X}_\lambda \right ) \tau + \order{\tau^2} \label{eqn:lie_derivative_metric}
\end{equation}
and
\begin{equation}
	\left ( \phixt \right )_* A_\mu = A_\mu + \gcoupling \left ( \left ( \partial_\mu g^{\rho \sigma} \right ) A_\rho \mathcal{X}_\sigma + g^{\rho \sigma} \left ( \partial_\rho A_\mu \right ) \mathcal{X}_\sigma + g^{\rho \sigma} A_\rho \left ( \partial_\mu \mathcal{X}_\sigma \right ) \right ) \tau + \order{\tau^2} \, . \label{eqn:lie_derivative_connection_form}
\end{equation}
Furthermore, the photon field \(A_\mu\) transforms in the following way under the principle bundle automorphism \(\vphif \in \autu\)
\begin{equation}
	\left ( \vphif \right )_* A_\mu = A_\mu + \partial_\mu f \, , \label{eqn:gauge_transformation_connection_form}
\end{equation}
where \(f \in \Gamma \left ( M, \mathbb{R} \right )\) is the real-valued function on the spacetime \((M,g)\) associated to the bundle automorphism \(\vphif\). Finally, we remark that only the linearized Riemann tensor is invariant under the gauge transformation of \eqnref{eqn:lie_derivative_metric}. This is similar to non-abelian gauge theories where the field strength tensor is also not invariant under gauge transformations. However, being a Lorentz scalar, the Ricci scalar is invariant.
\end{rem}

\vspace{\baselineskip}

\begin{rem}[Feynman rules 1] \label{rem:Feynman_rules_1}
To calculate the corresponding gravity-matter Feynman rules, we need to express the inverse metric, vielbeins and inverse vielbeins and the prefactor of the Riemannian volume form in terms of the graviton field. Since we need only general properties of the Feynman rules in this work, we just motivate their derivation here and refer for a detailed treatment to \cite{Prinz_4}, cf.\ also \cite{Choi_Shim_Song,Hamber}.\footnote{Where we refer again to \ftnref{foot:choi_shim_song_gh-manifold} for a comment on the spinor Feynman rules in \cite{Choi_Shim_Song}.} Furthermore, we remark that all following series converge if and only if \(\left | \gcoupling \right | \left \| h \right \|_{\max} := \left | \gcoupling \right | \max_{\lambda \in \operatorname{EW} \left ( h \right )} \left | \lambda \right | < 1\), where \(\operatorname{EW} \left ( h \right )\) denotes the set of eigenvalues of \(h\). The inverse metric in terms of the graviton field is given by the corresponding Neumann series, i.e.\
\begin{equation}
	g^{\mu \nu} = \sum_{k = 0}^\infty \left ( - \gcoupling \right )^k \left ( h^k \right )^{\mu \nu} \, ,
\end{equation}
where
\begin{subequations}
\begin{align}
	h^{\mu \nu} & := \eta^{\mu \rho} \eta^{\nu \sigma} h_{\rho \sigma} \, , \\
	\left ( h^0 \vphantom{h^k} \right )^{\mu \nu} & := \eta^{\mu \nu} \\
\intertext{and}
	\left ( h^k \right )^{\mu \nu} & := \underbrace{h^\mu_{\kappa_1} h^{\kappa_1}_{\kappa_2} \cdots h^{\kappa_{k-1} \nu}}_{\text{\(k\)-times}} \, , \; k \in \mathbb{N} \, .
\end{align}
\end{subequations}
Furthermore, the expressions for vielbeins and inverse vielbeins read
\begin{subequations}
\begin{align}
	e_\mu^m = \sum_{k = 0}^\infty \gcoupling^k \binom{\frac{1}{2}}{k} \left ( h^k \right )_\mu^m \, ,
\intertext{with \(h_\mu^m := \eta^{m \nu} h_{\mu \nu}\), and}
	e_m^\mu = \sum_{k = 0}^\infty \gcoupling^k \binom{- \frac{1}{2}}{k} \left ( h^k \right )_m^\mu \, ,
\end{align}
\end{subequations}
with \(h^\mu_m := \eta^{\mu \nu} \delta^\rho_m h_{\nu \rho}\). Moreover, the prefactor of the Riemannian volume form,
\begin{equation}
\begin{split}
	\sqrt{- \dt{g}} & = \sqrt{- \dt{\eta + \gcoupling h}} \\
	& = \sqrt{- \dt{\eta} \dt{\delta + \gcoupling \eta^{-1} h}} \\
	& = \sqrt{\dt{\delta + \gcoupling \eta h}} \, ,
\end{split}
\end{equation}
can be obtained by first expressing the determinant in terms of traces and then plug the result into the Taylor series expansion of the square-root. More precisely, the determinant of a \(4 \times 4\) matrix\footnote{This works for any \(d \times d\) matrix with the obvious generalizations of the following formulas.} \(\mathbf{M}\) can be expressed via Newton's identities in terms of its trace as
\begin{equation}
	\dt{\mathbf{M}} = \frac{1}{4!} \operatorname{Det} \begin{pmatrix} \tr{\mathbf{M}} & 1 & 0 & 0 \\ \tr{\mathbf{M}^2} & \tr{\mathbf{M}} & 2 & 0 \\ \tr{\mathbf{M}^3} & \tr{\mathbf{M}^2} & \tr{\mathbf{M}} & 3 \\ \tr{\mathbf{M}^4} & \tr{\mathbf{M}^3} & \tr{\mathbf{M}^2} & \tr{\mathbf{M}} \end{pmatrix} \, .
\end{equation}
Now, we set
\begin{equation}
	\mathbf{M} := \delta + \gcoupling \eta h \, .
\end{equation}
Since the trace is linear, we have
\begin{equation}
\begin{split}
	\tr{\delta + \gcoupling \eta h} & = \tr{\delta} + \gcoupling \tr{\eta h}\\
	& = 4 + \gcoupling \eta^{\mu \nu} h_{\mu \nu} \, ,
\end{split}
\end{equation}
and similar expressions for the higher powers \(\tr{\left ( \delta + \gcoupling \eta h \right )^k}\) for \(k \in \left \{ 2, 3, 4 \right \}\). Thus, the determinant is a polynomial in traces of powers of \(\gcoupling \eta h\), i.e.\
\begin{equation}
	\dt{\delta + \gcoupling \eta h} \in \mathbb{R} \left [ \gcoupling \tr{\eta h} , \gcoupling^2 \tr{\left ( \eta h \right ) ^2} , \gcoupling^3 \tr{\left ( \eta h \right ) ^3} , \gcoupling^4 \tr{\left ( \eta h \right ) ^4} \right ] \, .
\end{equation}
We separate the constant term, which is \(1\), and summarize the non-constant terms as the polynomial \(\mathbf{P} \left ( \gcoupling \eta h \right )\), i.e.\
\begin{equation}
	\mathbf{P} \left ( \gcoupling \eta h \right ) := \dt{\delta + \gcoupling \eta h} - 1 \iff \dt{\delta + \gcoupling \eta h} \equiv \mathbf{P} \left ( \gcoupling \eta h \right ) + 1 \, .
\end{equation}
Then, we plug this expression into the Taylor series expansion of the square root around \(1\), i.e.\
\begin{equation}
	\sqrt{x + 1} = \sum_{k = 0}^\infty \binom{\frac{1}{2}}{k} x^k \, ,
\end{equation}
which converges for \(\left \vert x \right \vert < 1\). Finally, we to obtain
\begin{equation}
\begin{split}
	\sqrt{- \dt{g}}
	& = \sqrt{\dt{\delta + \gcoupling \eta h}}\\
	& = \sqrt{\mathbf{P} \left ( \gcoupling \eta h \right ) + 1}\\
	& = \sum_{k = 0}^\infty \binom{\frac{1}{2}}{k} \left ( \mathbf{P} \left ( \gcoupling \eta h \right ) \right )^k \, .
\end{split}
\end{equation}
However, for the realm of this article it suffices to know that the inverse metric, vielbeins, inverse vielbeins and the prefactor of the Riemannian volume form are all power series in the graviton field \(h_{\mu \nu}\).
\end{rem}

\vspace{\baselineskip}

\begin{rem}[Feynman rules 2] \label{rem:Feynman_rules_order_2}
In this article we consider two-loop propagator Feynman graphs and one-loop propagator and three-point Feynman graphs. Therefore, it is actually sufficient to consider the Lagrange density \(\mathcal{L}_{\text{QGR-QED}}\) only up to order \(\gcoupling^2\), since the higher valent graviton vertices would only contribute to graphs with self-loops. Thus, the formulas from \remref{rem:Feynman_rules_1} read as follows (\(\order{\cdot}\) denotes the Landau symbol): The inverse metric is given by
\begin{equation}
	g^{\mu \nu} = \eta^{\mu \nu} - \gcoupling \eta^{\mu \rho} \eta^{\nu \sigma} h_{\rho \sigma} + \gcoupling^2 \eta^{\mu \rho} \eta^{\sigma \kappa} \eta^{\tau \nu} h_{\rho \sigma} h_{\kappa \tau} + \order{\gcoupling^3} \, .
\end{equation}
Furthermore, vielbeins and inverse vielbeins, defined in \defnref{defn:vielbeins_inverse_vielbeins}, are given by
\begin{subequations}
\begin{align}
	e^m_\mu & = \delta^m_\mu + \frac{1}{2} \gcoupling \eta^{\nu m} h_{\mu \nu} - \frac{1}{8} \gcoupling^2 \eta^{\rho \sigma} \eta^{\kappa m} h_{\mu \rho} h_{\sigma \kappa} + \order{\gcoupling^3}
\intertext{and}
	e^\mu_m & = \delta^\mu_m - \frac{1}{2} \gcoupling \eta^{\mu \nu} h_{\nu m} + \frac{3}{8} \gcoupling^2 \eta^{\mu \rho} \eta^{\sigma \kappa} h_{\rho \sigma} h_{\kappa m} + \order{\gcoupling^3} \, .
\end{align}
\end{subequations}
And finally, the prefactor of the Riemannian volume form, defined in \defnref{def:riemannian_volume_form}, is given by
\begin{equation}
	\sqrt{- \dt{g}} = 1 + \frac{1}{2} \gcoupling \eta^{\mu \nu} h_{\mu \nu} + \frac{1}{8} \gcoupling^2 \left ( \eta^{\mu \nu} \eta^{\rho \sigma} h_{\mu \nu} h_{\rho \sigma} - 2 \eta^{\mu \sigma} \eta^{\nu \rho} h_{\mu \nu} h_{\rho \sigma} \right ) + \order{\gcoupling^3} \, .
\end{equation}
\end{rem}

\subsection{The Lagrange density of QGR-QED} \label{ssec:Lagrange_Density_of_QGR-QED}
Having introduced the necessary differential geometric background in \ssecref{ssec:differential_geometric_notions}, we can now turn our attention to physics: In this article we consider QGR-QED, described via the following Lagrange density as a functional on the spacetime-matter bundle \(\mathcal{S}\):
\begin{subequations} \label{eqn:lagrange_qgr-qed_complete}
\begin{align}
	\mathcal{L}_{\text{QGR-QED}} & = \mathcal{L}_{\text{GR-ED}} + \mathcal{L}_{\text{GF}} + \mathcal{L}_{\text{Ghost}} \, , \label{eqn:lagrange_qgr-qed}
\intertext{where \(\mathcal{L}_{\text{GR-ED}}\) is the classical Lagrange density of General Relativity (GR) coupled with Spinor Electrodynamics (ED)}
	\mathcal{L}_\text{GR-ED} & = \left ( \leh + \lmd \right ) \dif V_g \, , \label{eqn:lagrange_gr-ed}
\intertext{\(\mathcal{L}_{\text{GF}}\) is the gauge fixing Lagrange density}
	\mathcal{L}_{\text{GF}} & = - \left ( \frac{1}{4 \gcoupling^2 \zeta} g_{\mu \nu} \deDonder^\mu \deDonder^\nu + \frac{1}{2 \ecoupling^2 \xi} \Lorenz^2 \right ) \dif V_g \, , \label{eqn:lagrange_gauge-fixing}
\intertext{with \(\deDonder^\mu := g^{\rho \sigma} \Gamma^\mu_{\rho \sigma}\) and \(\Lorenz := g^{\mu \nu} \nabla^{TM}_\mu \imaginary \ecoupling A_\nu\), and \(\mathcal{L}_{\text{Ghost}}\) is the ghost Lagrange density}
\begin{split}
\mathcal{L}_{\text{Ghost}} & = - \left ( \frac{1}{2} \overline{\gravitonghost}_\mu g^{\mu \nu} g^{\rho \sigma} \left ( \partial_\rho \partial_\sigma \gravitonghost_\nu + \partial_\nu \left ( \Gamma_{\rho \sigma}^\lambda \gravitonghost_\lambda \right ) - \partial_\rho \left ( \Gamma_{\sigma \nu}^\lambda \gravitonghost_\lambda \right ) - \partial_\sigma \left ( \Gamma_{\nu \rho}^\lambda \gravitonghost_\lambda \right ) \right ) \right . \\
& \phantom{= - (} \! \! \! \! \! \! \! \! \! \! \! \! + \frac{1}{2} \gcoupling \overline{\photonghost} \left ( g^{\mu \nu} \partial_\mu \left ( \left ( \partial_\nu g^{\rho \sigma} \right ) A_\rho \gravitonghost_\sigma + g^{\rho \sigma} \left ( \partial_\rho A_\nu \right ) \gravitonghost_\sigma + g^{\rho \sigma} A_\rho \left ( \partial_\nu \gravitonghost_\sigma \right ) \right ) \right . \\
& \phantom{= - ( + \frac{1}{2} \gcoupling \overline{\photonghost} (} \! \! \! \! \! \! \! \! \! \! \! \! \left . - g^{\mu \nu} g^{\rho \sigma} \Gamma_{\mu \nu}^\lambda \left ( \left ( \partial_\lambda g^{\rho \sigma} \right ) A_\rho \gravitonghost_\sigma + g^{\rho \sigma} \left ( \partial_\rho A_\lambda \right ) \gravitonghost_\sigma + g^{\rho \sigma} A_\rho \left ( \partial_\lambda \gravitonghost_\sigma \right ) \right ) \right ) + \text{g.c.} \\
& \phantom{= - (} \! \! \! \! \! \! \! \! \! \! \! \! \left . + \overline{\photonghost} g^{\mu \nu} \left ( \partial_\mu \left ( \partial_\nu \photonghost \right ) - \Gamma_{\mu \nu}^\lambda \left ( \partial_\lambda \photonghost \right ) \right ) \vphantom{\frac{1}{2} g^{\mu \nu} g^{\rho \sigma} \overline{\gravitonghost}_\mu \left ( \partial_\rho \partial_\sigma \gravitonghost_\nu + \partial_\nu \left ( \Gamma_{\rho \sigma}^\lambda \gravitonghost_\lambda \right ) - \partial_\rho \left ( \Gamma_{\sigma \nu}^\lambda \gravitonghost_\lambda \right ) - \partial_\sigma \left ( \Gamma_{\nu \rho}^\lambda \gravitonghost_\lambda \right ) \right )} \right ) \dif V_g \, ,
\end{split} \label{eqn:lagrange_ghost}
\end{align}
\end{subequations}
with the graviton ghost and graviton antighost \(\gravitonghost_\mu\) and \(\overline{\gravitonghost}_\mu\), respectively, and the photon ghost and photon antighost \(\photonghost\) and \(\overline{\photonghost}\), respectively, and g.c.\ denotes `ghost conjugate', as explained below. We compute the ghost Lagrange density, \eqnref{eqn:lagrange_ghost}, via the Faddeev-Popov method \cite{Faddeev_Popov} and remark that this can be embedded into the more general frameworks of BRST cohomology \cite{Becchi_Rouet_Stora_1,Becchi_Rouet_Stora_2,Becchi_Rouet_Stora_3,Tyutin,Kugo_Ojima_1,Kugo_Ojima_2} and BV formalism \cite{Batalin_Vilkovisky_1,Batalin_Vilkovisky_2,Getzler}. We refer the interested reader to \cite{Barnich_Brandt_Henneaux,Mnev,Wernli} for general introductions to BRST cohomology and BV formalism and to \cite{Upadhyay} for the particular discussion of perturbative Quantum General Relativity. We remark that in these formulations the Lagrange multiplier is viewed as a vector in the corresponding Lie algebra; this is equivalent to our formulation where the Lagrange multiplier is a scalar, if a non-degenerate metric is present on the Lie algebra: More precisely, in this case the product of the Lagrange multiplier, viewed as a vector in the Lie algebra, with the gauge-fixing map, described via a momentum map, depends only on its length, since only the vector components parallel to the gauge-fixing map contribute. Thus, the vector-valued Lagrange multiplier can be replaced via a scalar multiple of the gauge-fixing function, as in our gauge-fixing Lagrange density, \eqnref{eqn:lagrange_gauge-fixing}. Now, we discuss the parts of the QGR-QED Lagrange density \(\mathcal{L}_{\text{QGR-QED}}\) in detail:

\subsubsection{The Lagrange density of GR-ED}

The classical Lagrange density of General Relativity coupled with Spinor Electrodynamics is the sum of the Einstein-Hilbert, Maxwell and Dirac Lagrange densities:
\begin{equation}
	\mathcal{L}_\text{GR-ED} = \left ( \leh + \lmd \right ) \dif V_g \, ,
\end{equation}
where we have rescaled the Einstein-Hilbert term by \(\textfrac{1}{\gcoupling^2}\) and the Maxwell term by \(\textfrac{1}{\ecoupling^2}\) such that the graviton and photon propagators are of order \(\order{\gcoupling^0}\) and \(\order{\ecoupling^0}\), respectively. Here, \(R = g^{\nu \sigma} \tensor{R}{^\mu _\sigma _\mu _\nu}\) is the Ricci scalar of the tangent bundle, cf.\ \defnref{defn:curvatures_spacetime-matter_bundle}, \(F_{\mu \nu}\) is the curvature two-form on the \(U(1)\) principle bundle, cf.\ \defnref{defn:curvatures_spacetime-matter_bundle}, \(\slashed{\nabla}^{U(1) \times_\rho \Sigma M}\) the twisted Dirac operator, cf.\ \defnref{defn:dirac_operator} and \(\dif V_g\) is the Riemannian volume form, cf.\ \defnref{def:riemannian_volume_form}. The classical Lagrange density \(\mathcal{L}_{\text{GR-ED}}\) leads to the Einstein field equations:
\begin{subequations}
\begin{align}
	G_{\mu \nu} & \phantom{:} = \gcoupling^2 T_{\mu \nu} \, ,
\intertext{where}
	G_{\mu \nu} & = R_{\mu \nu} - \frac{1}{2} R g_{\mu \nu}
\intertext{is the Einstein tensor and}
\begin{split}
	T_{\mu \nu} & \phantom{:} = \frac{1}{\ecoupling^2} \left ( g^{\rho \sigma} F_{\mu \rho} F_{\sigma \nu} - \frac{1}{4} g_{\mu \nu} g^{\rho \sigma} g^{\kappa \lambda} F_{\rho \kappa} F_{\sigma \lambda} \right ) \\
	& \phantom{:= (}+ \frac{\imaginary}{4} e_\mu^m \left ( \overline{\Psi} \gamma_m \left ( \nabla^{U(1) \times_\rho \Sigma M}_\nu \Psi \right ) - \left ( \nabla^{U(1) \times_\rho \Sigma M}_\nu \overline{\Psi} \right ) \gamma_m \Psi \right ) \\
	& \phantom{:= (} + \frac{\imaginary}{4} e_\nu^n \left ( \overline{\Psi} \gamma_n \left ( \nabla^{U(1) \times_\rho \Sigma M}_\mu \Psi \right ) - \left ( \nabla^{U(1) \times_\rho \Sigma M}_\mu \overline{\Psi} \right ) \gamma_n \Psi \right )
\end{split}
\end{align}
\end{subequations}
is the corresponding (generalized) Hilbert energy-momentum tensor for Spinor Electrodynamics.\footnote{Where we have already used the equations of motion for the fermionic energy-momentum contribution.} We remark that considering linearized GR is equivalent to the vanishing of the first bracket in \eqnref{eqn:ricci_scalar_metric} from \propref{prop:ricci_scalar_for_the_levi_civita_connection}, i.e.\ considering locally geometries such that
\begin{equation}
	g^{\mu \rho} g^{\nu \sigma} \left ( \partial_\mu \partial_\nu g_{\rho \sigma} - \partial_\mu \partial_\rho g_{\nu \sigma} \right ) \equiv 0 \, , \label{eqn:linearized_GR}
\end{equation}
which could otherwise be interpreted as a source term for the graviton field \(h_{\mu \nu}\). Thus, when considering Feynman rules it would correspond to a graviton half-edge --- however, the corresponding Feynman rule vanishes if momentum conservation is considered. Nevertheless, it produces additional contributions to the higher valent pure graviton vertex Feynman rules. Observe that the coupling of the graviton field to the matter fields is given via inverse metrics, inverse vielbeins, the spin-connection in the Dirac-operator and the prefactor of the Riemannian volume form.

\subsubsection{Gauge fixing Lagrange density} \label{sssec:gauge_fixing_lagrange_density}

In QGR-QED there are two gauge symmetries present: One coming from General Relativity and affecting the whole spacetime-matter-bundle \(\mathcal{S}\) and the other one coming from electrodynamics and affecting the \(U(1)\) principle bundle only. These gauge transformations are described in \remref{rem:gauge_transformations} and lead to the transformations of the graviton and photon fields as given in \remref{rem:transformation_properties}. Thus, the gauge fixing part of the Lagrange density consists also of two parts: One for transformations due to infinitesimal diffeomorphisms, where we choose the de Donder gauge
\begin{equation}
	g^{\rho \sigma} \Gamma^\mu_{\rho \sigma} \overset{!}{=} 0 \iff g^{\rho \sigma} \left ( \partial_\rho g_{\sigma \mu} \right ) \overset{!}{=} \frac{1}{2} g^{\rho \sigma} \left ( \partial_\mu g_{\rho \sigma} \right ) \, , \label{eqn:de_donder_gauge}
\end{equation}
and one for infinitesimal principle bundle automorphisms, where we choose the Lorenz gauge
\begin{equation}
	g^{\mu \nu} \nabla^{TM}_\mu \imaginary \ecoupling A_\nu \overset{!}{=} 0 \, . \label{eqn:lorenz_gauge}
\end{equation}
The de Donder gauge is motivated by the fact that in this gauge the divergence of a vector or covector field reduces to the following simpler expression involving only a partial derivative
\begin{equation} \label{eqn:motivation_de_donder_divergence_vectorfield}
\begin{split}
	\nabla^{TM}_\mu X^\mu & = \partial_\mu X^\mu + \Gamma^\mu_{\mu \rho} X^\rho\\
	& = g^{\mu \nu} \left ( \partial_\mu X_\nu - \Gamma^\sigma_{\mu \nu} X_\sigma \right )\\
	& = g^{\mu \nu} \partial_\mu X_\nu\\
	& = \partial_\mu X^\mu \, .
\end{split}
\end{equation}
Thus, the Beltrami-Laplace operator takes on the simple form
\begin{equation}
	\Delta^{TM}_{\text{Beltrami}} = g^{\mu \nu} \partial_\mu \partial_\nu
\end{equation}
and in particular coordinate functions \(x^\alpha \in M\) are harmonic
\begin{equation}
\begin{split}
	\Delta^{TM}_{\text{Beltrami}} x^\alpha & = g^{\mu \nu} \partial_\mu \partial_\nu x^\alpha \\
	& = g^{\mu \nu} \partial_\mu \delta^\alpha_\nu \\
	& = 0 \, .
\end{split}
\end{equation}
Therefore this gauge is also called harmonic coordinate condition. Furthermore, we remark that in general there does not exist a gauge such that the evolution of the graviton field is governed by a wave equation, because the graviton field is in general non-linear. This comes from the first term in \eqnref{eqn:ricci_scalar_metric}, cf. the comment before \eqnref{eqn:linearized_GR}. If, however, the linearized Einstein-Hilbert Lagrange density is considered together with the de Donder gauge, then the evolution of the graviton field is given via a wave equation --- a fact which is used in gravitational wave analysis, cf.\ e.g.\ \cite{Carroll}. On the other hand, the Lorenz gauge is motivated by the fact that then the photon field \(A_\rho\) satisfies a wave equation with a source term given by the dual four-current \(g_{\rho \sigma} j^\sigma\) and the Ricci curvature tensor applied to it \(R^\kappa_\rho A_\kappa\),\footnote{Where we have used that the Levi-Civita connection is torsion-free, i.e.\ \(\Gamma_{\mu \nu}^\rho \equiv \Gamma_{\nu \mu}^\rho\), to match the definition of \(F_{\mu \nu}\), cf.\ \defnref{defn:curvatures_spacetime-matter_bundle} and \ftnref{foot:fmunu_exterior_derivative}.}
\begin{equation}
\begin{split}
	\Delta^{TM}_{\text{Bochner}} A_\rho & = g^{\mu \nu} \nabla^{TM}_\mu \nabla^{TM}_\nu A_\rho \\
& = g^{\mu \nu} \nabla^{TM}_\mu \left ( \nabla^{TM}_\nu A_\rho - \nabla^{TM}_\rho A_\nu + \nabla^{TM}_\rho A_\nu \right ) \\
& = g^{\mu \nu} \nabla^{TM}_\mu \left ( - \frac{\imaginary}{\ecoupling} F_{\nu \rho} + \nabla^{TM}_\rho A_\nu \right ) \\
& = g_{\rho \sigma} j^\sigma + g^{\mu \nu} \left ( \left [ \nabla^{TM}_\mu , \nabla^{TM}_\rho \right ] + \nabla^{TM}_\rho \nabla^{TM}_\mu \right ) A_\nu\\
& = g_{\rho \sigma} j^\sigma + g^{\mu \nu} \tensor{R}{^\kappa _\nu _\rho _\mu} A_\kappa\\
& = g_{\rho \sigma} j^\sigma + R^\kappa_\rho A_\kappa \, ,
\end{split}
\end{equation}
where \(\Delta^{TM}_{\text{Bochner}}\) is the Bochner-Laplace operator. Observe that when we apply the de Donder gauge given in \eqnref{eqn:de_donder_gauge} to the Lorenz gauge given in \eqnref{eqn:lorenz_gauge}, the expression simplifies to
\begin{equation}
\begin{split}
	g^{\mu \nu} \nabla^{TM}_\mu \imaginary \ecoupling A_\nu & = g^{\mu \nu} \left ( \partial_\mu \imaginary \ecoupling A_\nu - \Gamma_{\mu \nu}^\rho \imaginary \ecoupling A_\rho \right )\\
	& = g^{\mu \nu} \partial_\mu \imaginary \ecoupling A_\nu \, ,
\end{split}
\end{equation}
as was shown in general in \eqnref{eqn:motivation_de_donder_divergence_vectorfield}. For the following, we write the de Donder gauge as
\begin{align}
	\deDonder^\mu & := g^{\rho \sigma} \Gamma^\mu_{\rho \sigma} \overset{!}{=} 0
\intertext{and the Lorenz gauge as}
	\Lorenz & := g^{\mu \nu} \nabla^{TM}_\mu \imaginary \ecoupling A_\nu \overset{!}{=} 0 \, .
\end{align}
We implement the de Donder gauge and the Lorenz gauge in the Lagrange density by using the Lagrange multipliers \(\textfrac{1}{\xi}\) and \(\textfrac{1}{\zeta}\), which act in the quantum theory like coupling constants, i.e.\ by adding the following Lagrange density:
\begin{equation}
	\mathcal{L}_{\text{GF}} = - \left ( \frac{1}{4 \gcoupling^2 \zeta} g_{\mu \nu} \deDonder^\mu \deDonder^\nu + \frac{1}{2 \ecoupling^2 \xi} \Lorenz^2 \right ) \dif V_g \, ,
\end{equation}
where \(\dif V_g\) is the Riemannian volume form, cf.\ \defnref{def:riemannian_volume_form}. If \(\textfrac{1}{\xi}\) and \(\textfrac{1}{\zeta}\) are interpreted as parameters rather than Lagrange multipliers, then \(\zeta = 1\) corresponds to the de Donder gauge and \(\xi = 1\) to the Feynman gauge.

\subsubsection{Ghost Lagrange density} \label{subsubsec:ghost_lagrange_density}

In QGR-QED two ghosts together with their antighosts are needed: The graviton ghost and graviton antighost which we denote by \(\gravitonghost_\mu\) and \(\overline{\gravitonghost}_\mu\), respectively, and the photon ghost and photon antighost which we denote by by \(\photonghost\) and \(\overline{\photonghost}\), respectively. The ghost Lagrange density is obtained via Faddeev-Popov's method as the variation of the gauge fixing condition via a gauge transformation and then replacing the transformation fields via the corresponding ghosts multiplied with the antighost corresponding to the gauge fixing. Then, the variation of the Lagrange density with respect to the antighosts gives the equation of motion for the ghosts, which corresponds to the residual gauge transformations of the chosen gauge fixings. Furthermore, we have added the ghost conjugate for the interaction terms involving the graviton ghost and the photon antighost, i.e.\ the corresponding interaction terms with the graviton antighost and the photon ghost with the rest unaltered (second term below). Observe also that the photon ghost couples to gravitons and thus is needed, in contrast to normal QED where it decouples from the other particles:
\begin{equation}
\begin{split}
\mathcal{L}_{\text{Ghost}} & = - \left ( \frac{1}{2 \gcoupling} \overline{\gravitonghost}_\mu g^{\mu \nu} \eval{\left ( \widetilde{\deDonder}_\nu \right )}_{\mathcal{X}_\rho \mapsto \gravitonghost_\rho} - \left ( \frac{\imaginary}{2 \ecoupling} \overline{\photonghost} \eval{\left ( \widetilde{L} \right )}_{\mathcal{X}_\rho \mapsto \gravitonghost_\rho} + \text{g.c.} \right )- \frac{\imaginary}{\ecoupling} \overline{\photonghost} \eval{\left ( \tilde{L} \right )}_{f \mapsto \photonghost} \right ) \dif V_g \\
& = - \left ( \frac{1}{2} \overline{\gravitonghost}_\mu g^{\mu \nu} g^{\rho \sigma} \left ( \partial_\rho \partial_\sigma \gravitonghost_\nu + \partial_\nu \left ( \Gamma_{\rho \sigma}^\lambda \gravitonghost_\lambda \right ) - \partial_\rho \left ( \Gamma_{\sigma \nu}^\lambda \gravitonghost_\lambda \right ) - \partial_\sigma \left ( \Gamma_{\nu \rho}^\lambda \gravitonghost_\lambda \right ) \right ) \right . \\
& \phantom{= - (} + \frac{1}{2} \gcoupling \overline{\photonghost} \left ( g^{\mu \nu} \partial_\mu \left ( \left ( \partial_\nu g^{\rho \sigma} \right ) A_\rho \gravitonghost_\sigma + g^{\rho \sigma} \left ( \partial_\rho A_\nu \right ) \gravitonghost_\sigma + g^{\rho \sigma} A_\rho \left ( \partial_\nu \gravitonghost_\sigma \right ) \right ) \right . \\
& \phantom{= - ( + \frac{1}{2} \gcoupling \overline{\photonghost} (} \left . - g^{\mu \nu} g^{\rho \sigma} \Gamma_{\mu \nu}^\lambda \left ( \left ( \partial_\lambda g^{\rho \sigma} \right ) A_\rho \gravitonghost_\sigma + g^{\rho \sigma} \left ( \partial_\rho A_\lambda \right ) \gravitonghost_\sigma + g^{\rho \sigma} A_\rho \left ( \partial_\lambda \gravitonghost_\sigma \right ) \right ) \right ) + \text{g.c.} \\
& \phantom{= - (} \left . + \overline{\photonghost} g^{\mu \nu} \left ( \partial_\mu \left ( \partial_\nu \photonghost \right ) - \Gamma_{\mu \nu}^\lambda \left ( \partial_\lambda \photonghost \right ) \right ) \vphantom{\frac{1}{2} g^{\mu \nu} g^{\rho \sigma} \overline{\gravitonghost}_\mu \left ( \partial_\rho \partial_\sigma \gravitonghost_\nu + \partial_\nu \left ( \Gamma_{\rho \sigma}^\lambda \gravitonghost_\lambda \right ) - \partial_\rho \left ( \Gamma_{\sigma \nu}^\lambda \gravitonghost_\lambda \right ) - \partial_\sigma \left ( \Gamma_{\nu \rho}^\lambda \gravitonghost_\lambda \right ) \right )} \right ) \dif V_g \, ,
\end{split} \label{eqn:lagrange_ghost_sssec}
\end{equation}
where we have set\footnote{Be aware of the difference between widetilde and tilde for the transformed Lorentz gauge, i.e.\ \(\widetilde{L}\) vs.\ \(\tilde{L}\).}
\begin{align}
	\widetilde{\deDonder}_\nu & := g^{\rho \sigma} \left ( \partial_\rho \partial_\sigma X_\nu + \partial_\nu \left ( \Gamma_{\rho \sigma}^\lambda X_\lambda \right ) - \partial_\rho \left ( \Gamma_{\nu \sigma}^\lambda X_\lambda \right ) - \partial_\sigma \left ( \Gamma_{\nu \rho}^\lambda X_\lambda \right ) \right ) \, , \\
	\begin{split}
	\widetilde{L} & := g^{\mu \nu} \left ( g^{\rho \sigma} \partial_\mu \left ( X_\rho \left ( \partial_\sigma A_\nu \right ) + g^{\rho \sigma} A_\rho \left ( \partial_\nu X_\sigma \right ) + A_\rho X_\sigma \left ( \partial_\nu g^{\rho \sigma} \right ) \right ) \right . \\
	& \phantom{:= g^{\mu \nu} (} \left . - g^{\rho \sigma} \Gamma_{\mu \nu}^\lambda \left ( X_\rho \left ( \partial_\sigma A_\lambda \right ) + g^{\rho \sigma} A_\rho \left ( \partial_\lambda X_\sigma \right ) + A_\rho X_\sigma \left ( \partial_\lambda g^{\rho \sigma} \right ) \right ) \right )
	\end{split}
	\intertext{and}
	\tilde{L} & := g^{\mu \nu} \left ( \partial_\mu \partial_\nu \mathfrak{f} - \Gamma_{\mu \nu}^\lambda \left ( \partial_\lambda \mathfrak{f} \right ) \right ) \, ,
\end{align}
i.e.\ the gauge conditions \(\deDonder_\nu\) and \(\Lorenz\) from \eqnsaref{eqn:de_donder_gauge}{eqn:lorenz_gauge} applied to linearized pure gauge transformations from \eqnssaref{eqn:lie_derivative_metric}{eqn:lie_derivative_connection_form}{eqn:gauge_transformation_connection_form}, and \(\dif V_g\) is the Riemannian volume form, cf.\ \defnref{def:riemannian_volume_form}. Moreover, g.c.\ means ghost conjugate, i.e.\ the simultaneous replacement
\begin{equation}
\begin{Bmatrix}
\overline{\theta} \\ \gravitonghost_\rho
\end{Bmatrix}
\rightsquigarrow
\begin{Bmatrix}
\theta \\ \overline{\gravitonghost}_\rho
\end{Bmatrix}
\end{equation}
with the rest unaltered, but partially integrated such that there is no partial derivative in front of \(\overline{\gravitonghost}_\rho\). We also remark the additional factor of \(\textfrac{1}{2}\) in front of the ghost mixing terms due to the addition of the corresponding ghost conjugate terms. Finally, we notice that the photon ghost is a scalar particle, whereas the graviton ghost is a spin-one particle.

\section{Hopf algebras, the renormalization Hopf algebra and QGR-QED} \label{sec:hopf_algebras_and_the_connes-kreimer_renormalization_hopf_algebra}

In this section, we introduce Hopf algebras in general and the Connes-Kreimer renormalization Hopf algebra in particular. The intention is to review the basic notions and give the relevant definitions. We refer the reader who wishes a more detailed treatment on Hopf algebras in general and its connection to affine group schemes to \cite{Waterhouse,Milne} and the reader who wishes a more detailed treatment on the construction of the Connes-Kreimer renormalization Hopf algebra to \cite{Kreimer_Hopf_Algebra,Connes_Kreimer_0,Connes_Kreimer_1,Connes_Kreimer_2,Kreimer_Anatomy,Manchon}. In this work, we set \(\ring\) to be a commutative ring with one and then later consider the case \(\ring := \mathbb{Q}\) for the Connes-Kreimer renormalization Hopf algebra.\footnote{Actually, the physical needs require \(\ring\) only to be a field with characteristic \(0\). Since \(\mathbb{Q}\) is the smallest such field, it is the canonical choice.} Furthermore, we set the tensor product over the corresponding ring if not stated otherwise, i.e.\ \(\otimes := \otimes_\ring\). Finally, we study Hopf ideals in the renormalization Hopf algebra in order to understand which symmetries are compatible with renormalization.

\subsection{Hopf algebras}

We start by defining Hopf algebras in general. In this article, we set \(\ring\) to be a commutative ring with one. Furthermore, by algebra we mean associative algebra with identity and by coalgebra we mean coassociative coalgebra with coidentity. Moreover, all algebras, coalgebras, bialgebras and Hopf algebras are all considered over the ring \(\ring\).

\vspace{\baselineskip}

\begin{defn}[Algebra]
The triple \(( A , \mult , \one )\) is an algebra, where \(A\) is a \(\ring\)-module, \(\mult \colon A \otimes A \to A\) an associative multiplication map, i.e.\ the following diagram commutes
\begin{equation}
\begin{tikzcd}[row sep=huge]
A \otimes A \otimes A	\arrow{r}{\mult \otimes \id}	\arrow[swap]{d}{\id \otimes \mult}	& A \otimes A \arrow{d}{\mult}\\
A \otimes A	\arrow[swap]{r}{\mult}	& A
\end{tikzcd} \, ,
\end{equation}
and \(\one \colon \ring \to A\) the identity with respect to \(\mult\),\footnote{We denote both the neutral element in the algebra \(A\) with respect to the multiplication map \(\mult\) and the inclusion function \(\ring \to A\) by the symbol \(\one\).} i.e.\ the following diagram commutes
\begin{equation}
\begin{tikzcd}[row sep=huge]
k \otimes A \arrow{r}{\one \otimes \id} \arrow[swap]{rd}{\cong} & A \otimes A \arrow{d}{\mult} & A \otimes k \arrow[swap]{l}{\id \otimes \one} \arrow{dl}{\cong}\\
& A
\end{tikzcd} \, .
\end{equation}
\end{defn}

\vspace{\baselineskip}

\begin{defn}[(Connected) graded algebra]
An algebra \(A\) is called graded, if the \(\ring\)-module \(A\) can be written as a direct sum
\begin{equation}
	A = \bigoplus_{i = 0}^\infty A_i \, ,
\end{equation}
which is respected by the multiplication \(\mult\), i.e.\
\begin{equation}
	\mult \left ( A_i \otimes A_j \right ) \subseteq A_{i + j} \, , \; \forall i , j \in \mathbb{N}_0 \, ,
\end{equation}
and the identity \(\one\), i.e.\
\begin{equation}
	\one \, : \; A_0 \hookrightarrow A \, , \; a_0 \mapsto a_0
\end{equation}
Furthermore, a graded algebra \(A\) is called connected, if the grade zero component is isomorphic to the base ring, i.e.\
\begin{equation}
	A_0 \cong \ring \, .
\end{equation}
\end{defn}

\vspace{\baselineskip}

\begin{defn}[Coalgebra]
The triple \(( C , \Delta , \coone )\) is a coalgebra, where \(C\) is a \(\ring\)-module, \(\Delta \colon C \to C \otimes C\) a coassociative comultiplication map, i.e.\ the following diagram commutes
\begin{equation}
\begin{tikzcd}[row sep=huge]
C	\arrow{r}{\Delta}	\arrow[swap]{d}{\Delta}	& C \otimes C \arrow{d}{\id \otimes \Delta}\\
C \otimes C	\arrow[swap]{r}{\Delta \otimes \id}	& C \otimes C \otimes C
\end{tikzcd} \, ,
\end{equation}
and \(\coone \colon C \to \ring\) the coidentity with respect to \(\Delta\),\footnote{We denote both the neutral element in the coalgebra \(C\) with respect to the comultiplication map \(\Delta\) and the projection function \(C \to \ring\) by the symbol \(\coone\).} i.e.\ the following diagram commutes
\begin{equation}
\begin{tikzcd}[row sep=huge]
& C \arrow[swap]{dl}{\cong} \arrow{d}{\Delta} \arrow{dr}{\cong}\\
\ring \otimes C & C \otimes C \arrow{l}{\coone \otimes \id} \arrow[swap]{r}{\id \otimes \coone} & C \otimes \ring
\end{tikzcd} \, .
\end{equation}
\end{defn}

\vspace{\baselineskip}

\begin{defn}[(Connected) graded coalgebra]
A coalgebra \(C\) is called graded, if the \(\ring\)-module \(C\) can be written as a direct sum
\begin{equation}
	C = \bigoplus_{i = 0}^\infty C_i \, ,
\end{equation}
which is respected by the comultiplication \(\Delta\), i.e.\
\begin{equation}
	\Delta \left ( C_i \right ) \subseteq \sum_{j = 0}^i C_j \otimes C_{i - j} \, , \; \forall i \in \mathbb{N}_0 \, ,
\end{equation}
and the coidentity \(\coone\), i.e.\
\begin{equation}
	\coone \, : \; C \twoheadrightarrow C_0 \, , \; c_i \mapsto \delta_{0 i} c_i \, ,
\end{equation}
where \(\delta_{0 i}\) is the Kronecker delta and \(c_i \in C_i\) are homogeneous elements of degree \(i\). Furthermore, a graded coalgebra \(C\) is called connected, if the grade zero component is isomorphic to the base ring, i.e.\
\begin{equation}
	C_0 \cong \ring \, .
\end{equation}
\end{defn}

\vspace{\baselineskip}

\begin{defn}[Homomorphism of (graded) algebras]
Let \(A_1\) and \(A_2\) be two algebras. Then a map \(f \colon A_1 \to A_2\) is called a homomorphism of algebras, if \(f\) is compatible with the products \(\mult_1\) on \(A_1\) and \(\mult_2\) on \(A_2\), i.e.\ the following diagram commutes
\begin{equation}
\begin{tikzcd}[row sep=huge]
A_1 \otimes A_1 	\arrow{r}{\mult_1}	\arrow[swap]{d}{f \otimes f}	& A_1 \arrow{d}{f}\\
A_2 \otimes A_2	\arrow[swap]{r}{\mult_2}	& A_2
\end{tikzcd} \, ,
\end{equation}
and \(f\) maps the identity \(\one_1\) on \(A_1\) to the identity \(\one_2\) on \(A_2\), i.e.\ the following diagram commutes
\begin{equation}
\begin{tikzcd}[row sep=huge]
& \ring \arrow[swap]{dl}{\one_1} \arrow{dr}{\one_2}\\
A_1 \arrow[swap]{rr}{f} & & A_2
\end{tikzcd} \, .
\end{equation}
If the algebras \(A_1\) and \(A_2\) are both graded, then \(f\) additionally has to respect this structure to be a homomorphism of graded algebras, i.e.\
\begin{equation}
f \left ( \left ( A_1 \right ) _i \right ) \subseteq \left ( A_2 \right ) _i \, , \; \forall i \in \mathbb{N}_0 \, .
\end{equation}
\end{defn}

\vspace{\baselineskip}

\begin{defn}[Homomorphism of (graded) coalgebras]
Let \(C_1\) and \(C_2\) be two coalgebras. Then a map \(g \colon C_1 \to C_2\) is called a homomorphism of coalgebras, if \(g\) is compatible with the two coproducts \(\Delta_1\) on \(C_1\) and \(\Delta_2\) on \(C_2\), i.e.\ the following diagram commutes
\begin{equation}
\begin{tikzcd}[row sep=huge]
C_1 	\arrow{r}{\Delta_1}	\arrow[swap]{d}{g}	& C_1 \otimes C_1 \arrow{d}{g \otimes g}\\
C_2	\arrow[swap]{r}{\Delta_2}	& C_2 \otimes C_2
\end{tikzcd} \, ,
\end{equation}
and \(g\) is compatible with the two coidentities \(\coone_1\) on \(C_1\) and \(\coone_2\) on \(C_2\), i.e.\ the following diagram commutes
\begin{equation}
\begin{tikzcd}[row sep=huge]
C_1 \arrow{rr}{g} \arrow[swap]{dr}{\coone_1} & & C_2 \arrow{dl}{\coone_2}\\
& \ring
\end{tikzcd} \, .
\end{equation}
If the coalgebras \(C_1\) and \(C_2\) are both graded, then \(g\) additionally has to respect this structure to be a homomorphism of graded coalgebras, i.e.\
\begin{equation}
g \left ( \left ( C_1 \right ) _i \right ) \subseteq \left ( C_2 \right ) _i \, , \; \forall i \in \mathbb{N}_0 \, .
\end{equation}
\end{defn}

\vspace{\baselineskip}

\begin{rem}[Relation between algebra and coalgebra]
Finite dimensional ((connected) graded) algebras \(A\) are related to ((connected) graded) coalgebras \(C\) via dualization, i.e.\ applying the functor \(\text{Hom}_{\ring-\mathsf{Alg}} \left ( \cdot , \ring \right )\).
\end{rem}

\vspace{\baselineskip}

\begin{defn}[(Connected graded) bialgebra]
The quintuple \((B, \mult, \one, \Delta, \coone)\) is a bialgebra, where the triple \((B, \mult, \one)\) is an algebra and the triple \((B, \Delta, \coone)\) is a coalgebra. Furthermore, the coproduct \(\Delta\) and the coidentity \(\coone\) are homomorphisms of the graded algebra \((B, \mult, \one)\), or, equivalently, the multiplication \(\mult\) and the identity \(\one\) are homomorphisms of the graded coalgebra \((B, \Delta, \coone)\). Furthermore, a bialgebra \(B\) is called graded, if it is graded as an algebra and as a coalgebra. Moreover, a bialgebra \(B\) is called connected, if it is connected as an algebra, or, equivalently, as a coalgebra.\footnote{It can be shown that a connected graded bialgebra possesses an antipode and thus is already a Hopf algebra, cf.\ \defnref{defn:hopf_algebra}.}
\end{defn}

\vspace{\baselineskip}

\begin{defn}[(Connected graded) Hopf algebra] \label{defn:hopf_algebra}
The sextuple \((H, \mult, \one, \Delta, \coone, S)\) is a Hopf algebra, where the quintuple \((H, \mult, \one, \Delta, \coone)\) is a bialgebra and \(S \colon H \to H\) is an anti-endomorphism,\footnote{Meaning that in the case of non-commutative Hopf algebras the antipode is an order reversing endomorphism, i.e.\ given \(x, y \in H\) then we have \(S \left ( x y \right ) = S \left ( y \right ) S \left ( x \right )\).} called the antipode, and is defined such that the following diagram commutes:
\begin{equation}
\begin{tikzcd}[column sep=tiny,row sep=huge]
& H \otimes H \arrow{rr}{S \otimes \id} & & H \otimes H \arrow{dr}{\mult}\\
H \arrow{ur}{\Delta} \arrow[swap]{dr}{\Delta} \arrow{rr}{\coone} & & k \arrow{rr}{\one} & & H\\
& H \otimes H \arrow[swap]{rr}{\id \otimes S} & & H \otimes H \arrow[swap]{ur}{\mult}
\end{tikzcd}
\end{equation}
Furthermore, a Hopf algebra \(H\) is called graded, if it is graded as a bialgebra.\footnote{Observe that if a Hopf algebra \(H\) is graded as a bialgebra, then the antipode \(S\) is automatically an anti-endomorphism of graded algebras and an anti-endomorphism of graded coalgebras.} Moreover, a Hopf algebra \(H\) is called connected, if it is connected as a bialgebra.
\end{defn}

\vspace{\baselineskip}

\begin{defn}[Hopf ideals and Hopf subalgebras] \label{Hopf_ideals_and_subalgebras}
Let \(H\) be a Hopf algebra and \(\mathfrak{i}\) an ideal in \(H\). Then \(\mathfrak{i}\) is called a Hopf ideal if it satisfies additionally the following three conditions
\begin{subequations}
\begin{align}
\D{\mathfrak{i}} & \subseteq \mathfrak{i} \otimes H + H \otimes \mathfrak{i} \, , \label{eqn:hopf_ideal_1}\\
\coone \left ( \mathfrak{i} \right ) & = 0 \label{eqn:coone_hopf_ideal}
\intertext{and}
\antipode{\mathfrak{i}} & \subseteq \mathfrak{i} \, . \label{eqn:hopf_ideal_3}
\end{align}
\end{subequations}
Then, the quotient \(h := H / \mathfrak{i}\) is a Hopf algebra as well, called a Hopf subalgebra of \(H\). If \(H\) is graded, then \(h\) inherits a grading from \(H\) via
\begin{equation}
	h = H \cap h = \left ( \bigoplus_{i = 0}^\infty H_i \right ) \cap h = \bigoplus_{i = 0}^\infty \left ( H_i \cap h \right ) = \bigoplus_{i = 0}^\infty h_i \, ,
\end{equation}
i.e.\ we define the grade \(i\) subspace of \(h\) as \(h_i := \left ( H_i \cap h \right )\) for all \(i \in \mathbb{N}_0\). Furthermore, if \(H\) is connected, then \(h\) is also connected, since \(\one \neq 0\) in \(h\) and thus \(h_0 = k \one \cong k\).
\end{defn}

\vspace{\baselineskip}

\begin{defn}[Augmentation ideal, quoted from \cite{Prinz_2}] \label{defn:augmentation_ideal}
Given a bi- or a Hopf algebra \(B\), then the kernel of the coidentity
\begin{equation}
	\operatorname{Aug} \left ( \HQ \right ) := \operatorname{Ker} \left ( \coone \right )
\end{equation}
is an ideal, called the augmentation ideal. Furthermore, we denote the projector to it via \(\mathscr{P}\), i.e.\
\begin{equation}
	\mathscr{P} \, : \quad \HQ \surject \operatorname{Aug} \left ( \HQ \right ) \, .
\end{equation}
\end{defn}

\vspace{\baselineskip}

\begin{defn}[Convolution product] \label{defn:convolution_product}
Let \(A\) be an algebra and \(C\) a coalgebra. Then using the multiplication \(\mult_A\) on \(A\) and the comultiplication \(\Delta_C\) on \(C\), we can turn the \(k\)-module \(\text{Hom}_{\ring-\mathsf{Mod}} \left ( C , A \right )\) of \(k\)-linear maps from \(C\) to \(A\) into a \(k\)-algebra by defining the convolution product \(\star\) for given \(f, g \in \text{Hom}_{\ring-\mathsf{Mod}} \left ( C , A \right )\) via
\begin{equation}
	f \star g := \mult_A \circ \left ( f \otimes g \right ) \circ \Delta_C \, .
\end{equation}
Obviously, this definition extends trivially if \(A\) or \(C\) possesses additionally a bi- or Hopf algebra structure, since it only requires a coalgebra structure in the source algebra and an algebra structure in the target algebra.
\end{defn}

\subsection{The Connes-Kreimer renormalization Hopf algebra} \label{ssec:connes-kreimer_renormalization_hopf_algebra}

From now on, we consider the Connes-Kreimer renormalization Hopf algebra, which is a Hopf algebra over \(\ring = \mathbb{Q}\).\footnote{Again, the physical needs require \(\ring\) only to be a field with characteristic \(0\). Since \(\mathbb{Q}\) is the smallest such field, it is the canonical choice.} Furthermore, let in this subsection \(\Q\) be a local QFT such that the residue for each divergent Feynman graph is in the residue set of \(\Q\) --- the general case will be discussed in \ssecref{ssec:associating_the_renormalization_hopf_algebra_to_a_given_local_qft}, cf.\ \proref{pro:problem}.

\vspace{\baselineskip}

\begin{defn}[Weighted residue set of a local QFT] \label{defn:residue_set_of_a_local_qft_and_their_weights}
Let \(\mathcal{Q}\) be a local QFT. Then \(\mathcal{Q}\) is either given via a Lagrange density \(\mathcal{L}_{\mathcal{Q}}\) or via a set of residues \(\mathcal{R}_{\mathcal{Q}}\) together with a weight function \(\omega \colon \mathcal{R}_{\mathcal{Q}} \to \mathbb{N}_0\). The set of residues is a disjoint union of all vertex-types \(\mathcal{R}_{\mathcal{Q}}^{[0]}\) and all edge-types \(\mathcal{R}_{\mathcal{Q}}^{[1]}\) of \(\mathcal{Q}\), i.e.\ \(\mathcal{R}_{\mathcal{Q}} = \mathcal{R}_{\mathcal{Q}}^{[0]} \amalg \mathcal{R}_{\mathcal{Q}}^{[1]}\). If \(\mathcal{Q}\) is given via a Lagrange density, then the set of residues \(\mathcal{R}_{\mathcal{Q}}\) and the weight \(\sdd{R}\) of each residue \(R \in \mathcal{R}_{\mathcal{Q}}\) is given as follows: Each field in the Lagrange density \(\mathcal{L}_{\mathcal{Q}}\) corresponds to a particle type of \(\mathcal{Q}\). Therefore, every monomial in \(\mathcal{L}_{\mathcal{Q}}\) consisting of one field corresponds to a source-term, i.e.\ a vertex-residue in \(\mathcal{R}_{\mathcal{Q}}^{[0]}\) consisting of a half-edge of that particle type. Every monomial in \(\mathcal{L}_{\mathcal{Q}}\) consisting of two equivalent fields corresponds to a propagation term, i.e.\ an edge-residue in \(\mathcal{R}_{\mathcal{Q}}^{[1]}\) consisting of an edge of that particle type. And every monomial in \(\mathcal{L}_{\mathcal{Q}}\) consisting of different fields corresponds to an interaction term, i.e.\ a vertex-residue in \(\mathcal{R}_{\mathcal{Q}}^{[0]}\) consisting of half-edges of that particle types. Finally, the weight \(\sdd{R} \in \mathbb{N}_0\) of a residue \(R \in \mathcal{R}_{\mathcal{Q}}\) is set to be the number of derivative operators involved in the corresponding field monomial in \(\mathcal{L}_{\mathcal{Q}}\).
\end{defn}

\vspace{\baselineskip}

\begin{defn}[Feynman graphs generated by residue sets 1] \label{defn:feynman_graphs_generated_by_residue_sets_1}
Let \(\mathcal{Q}\) be a local QFT with residue set \(\mathcal{R}_{\mathcal{Q}}\). Then we denote by \(\mathscr{G}_{\mathcal{Q}}\) the set of all one-particle irreducible (1PI) Feynman graphs\footnote{The use of 1PI Feynman graphs, rather than connected Feynman graphs, is justified by \thmref{thm:generalized_furry_theorem}, as is discussed in \remref{rem:1pi_generalized_furrys_theorem}.} that can be generated by the residue set \(\mathcal{R}_{\mathcal{Q}}\) of \(\mathcal{Q}\).
\end{defn}

\vspace{\baselineskip}

\begin{defn}[Residue of a Feynman graph]
Let \(\mathcal{Q}\) be a local QFT with residue set \(\RQ\) and Feynman graph set \(\mathscr{G}_{\mathcal{Q}}\). Then the residue of a Feynman graph \(\Gamma \in \mathscr{G}_{\mathcal{Q}}\), denoted by \(\res{\Gamma}\), is the vertex residue or edge residue \(\res{\Gamma}\), not necessary in the residue set \(\RQ\), obtained by shrinking all internal edges of \(\Gamma\) to a single vertex.
\end{defn}

\vspace{\baselineskip}

\begin{defn}[Amplitude set of a local QFT]
Let \(\mathcal{Q}\) be a local QFT with residue set \(\RQ\) and Feynman graph set \(\mathscr{G}_{\mathcal{Q}}\). Then we introduce its amplitude set \(\mathcal{A}_\mathcal{Q}\) as the set of residues of all Feynman graphs
\begin{equation}
	\mathcal{A}_\mathcal{Q} := \left \{ \left . r = \res{\Gamma} \right \vert \Gamma \in \mathscr{G}_\mathcal{Q} \right \} \, .
\end{equation}
Obviously, the residue set is a subset thereof, \(\mathcal{R}_\mathcal{Q} \subset \mathcal{A}_\mathcal{Q}\), and thus the amplitude set splits into a disjoint union of the residue set \(\mathcal{R}_\mathcal{Q}\) and the pure quantum correction set \(\mathcal{C}_\mathcal{Q}\),
\begin{equation}
	\mathcal{A}_\mathcal{Q} = \mathcal{R}_\mathcal{Q} \amalg \mathcal{C}_\mathcal{Q} \, .
\end{equation}
\end{defn}

\vspace{\baselineskip}

\begin{defn}[First Betti number of a Feynman graph, \cite{Hatcher}] \label{defn:first_betti_number}
Let \(\mathcal{Q}\) be a local QFT and \(\mathscr{G}_{\mathcal{Q}}\) the set of its Feynman graphs. Let furthermore \(\Gamma \in \mathscr{G}_{\mathcal{Q}}\) be a Feynman graph. Then we define the first Betti number of \(\Gamma\) as
\begin{equation}
	b_1 \left ( \Gamma \right ) := \# H_1 \left ( \Gamma \right ) \, ,
\end{equation}
where \(\# H_1 \left ( \Gamma \right )\) is the rank of the first singular homology group of \(\Gamma\).
\end{defn}

\vspace{\baselineskip}

\begin{defn}[Superficial degree of divergence] \label{defn:sdd}
Let \(\mathcal Q\) be a local QFT with weighted residue set \(\RQ\) and Feynman graph set \(\mathscr{G}_{\mathcal{Q}}\). We turn \(\mathscr{G}_{\mathcal{Q}}\) into a weighted set as well by declaring the function
\begin{equation}
	\omega \, : \quad \mathscr{G}_{\mathcal{Q}} \to \mathbb{Z} \, , \quad \Gamma \mapsto d b_1 \left ( \Gamma \right ) + \sum_{v \in \Gamma^{[0]}} \omega \left ( v \right ) - \sum_{e \in \Gamma^{[1]}} \omega \left ( e \right ) \, , \label{eqn:superficial_degree_of_divergence}
\end{equation}
where \(d\) is the dimension of spacetime of \(\mathcal{Q}\) and \(b_1 \left ( \Gamma \right )\) the first Betti number of the Feynman graph \(\Gamma \in \mathscr{G}_{\mathcal{Q}}\). Then, the weight \(\omega \left ( \Gamma \right )\) of a Feynman graph \(\Gamma \in \mathscr{G}_{\mathcal{Q}}\) is called the superficial degree of divergence of \(\Gamma\). A Feynman graph \(\Gamma \in \mathcal{G}_{\mathcal{Q}}\) is called superficially divergent if \(\sdd{\Gamma} \geq 0\), otherwise it is called superficially convergent if \(\sdd{\Gamma} < 0\).
\end{defn}

\vspace{\baselineskip}

\begin{rem}
The definition of the superficial degree of divergence of a Feynman graph, \defnref{defn:sdd}, is motivated by the fact that the Feynman integral corresponding to a given Feynman graph via the Feynman rules converges, if the Feynman graph itself and all its subgraphs are superficially convergent \cite{Weinberg}.
\end{rem}

\vspace{\baselineskip}

\begin{defn}[Set of superficially divergent subgraphs of a Feynman graph 1] \label{defn:subset_of_divergent_subgraphs_1}
Let \(\mathcal{Q}\) be a local QFT and \(\Gamma \in \mathscr{G}_{\mathcal{Q}}\) a Feynman graph of \(\mathcal{Q}\). Then we denote by \(\mathscr{D} \left ( \Gamma \right )\) the set of superficially divergent subgraphs of \(\Gamma\), i.e.\
\begin{subequations}
\begin{equation}
	\mathscr{D} \left ( \Gamma \right ) := \left \{ \gamma \subseteq \Gamma \, \left \vert \; \gamma = \coprod_{m = 1}^M \gamma_m \, , \; M \in \mathbb{N} \, : \; \omega \left ( \gamma_m \right ) \geq 0 \, , \; \forall m \right . \right \} \, .
\end{equation}
Furthermore, we define the set \(\mathscr{D}^\prime \left ( \Gamma \right )\) of superficially divergent proper subgraphs of \(\Gamma\), i.e.\
\begin{equation}
	\mathscr{D}^\prime \left ( \Gamma \right ) := \left \{ \gamma \in \mathscr{D} \left ( \Gamma \right ) \, \left \vert \; \emptyset \subsetneq \gamma \subsetneq \Gamma \right . \right \} \, .
\end{equation}
\end{subequations}
\end{defn}

\vspace{\baselineskip}

\begin{defn}[Renormalization Hopf algebra of a local QFT] \label{defn:renormalization_hopf_algebra}
Let \(\mathcal{Q}\) be a local QFT, \(\mathcal{R}_{\mathcal{Q}}\) the set of its weighted residues and \(\mathscr{G}_{\mathcal{Q}}\) the set of its weighted Feynman graphs. We assume \(\mathcal{Q}\) to be such that the residues of all superficially divergent Feynman graphs are in the residue set \(\mathcal{R}_{\mathcal{Q}}\), i.e.\ \(\left \{ \Gamma \, \vert \; \Gamma \in \mathscr{G}_{\mathcal{Q}} \, : \; \sdd{\Gamma} \geq 0 \, : \; \res{\Gamma} \notin \mathcal{R}_{\mathcal{Q}} \right \} = \emptyset\). The general and more involving case is discussed in \ssecref{ssec:associating_the_renormalization_hopf_algebra_to_a_given_local_qft}. Then, the connected graded, cf.\ \defnref{defn:connectedness_gradings_renormalization_hopf_algebra}, renormalization Hopf algebra \((\mathscr{H}_{\mathcal{Q}}, \mult, \one, \Delta, \coone, S)\) is defined as follows: We set \(\mathscr{H}_{\mathcal{Q}}\) to be the vector space over \(\mathbb{Q}\) generated by the set \(\mathscr{G}_{\mathcal{Q}}\). The associative multiplication \(\mult \colon \mathscr{H}_{\mathcal{Q}} \otimes \mathscr{H}_{\mathcal{Q}} \to \mathscr{H}_{\mathcal{Q}}\) is defined as the disjoint union of Feynman graphs, i.e.\
\begin{equation}
	\mult \, : \quad \mathscr{H}_{\mathcal{Q}} \otimes \mathscr{H}_{\mathcal{Q}} \to \mathscr{H}_{\mathcal{Q}} \, , \quad \gamma \otimes \Gamma \mapsto \gamma \Gamma \, .
\end{equation}
Then, the identity \(\one \colon \mathbb{Q} \to \mathscr{H}_{\mathcal{Q}}\) is set to be the empty graph, i.e.\
\begin{equation}
	\one := \emptyset \, .
\end{equation}
Moreover, we define the coproduct of a Feynman graph \(\Gamma\) such that it maps to the following sum over all possible combinations of divergent subgraphs of \(\Gamma\): The left-hand side of the tensor product of each summand is given by a superficially divergent subgraph of the Feynman graph \(\Gamma\), while the right-hand-side of the tensor product is given by returning the Feynman graph \(\Gamma\) with the corresponding subgraph shrunken to zero length, i.e.\
\begin{equation}
	\Delta \, : \quad \mathscr{H}_{\mathcal{Q}} \to \mathscr{H}_{\mathcal{Q}} \otimes \mathscr{H}_{\mathcal{Q}} \, , \quad \Gamma \mapsto \sum_{\gamma \in \mathscr{D} \left ( \Gamma \right )} \gamma \otimes \Gamma / \gamma \, , \label{eqn:coproduct}
\end{equation}
where the quotient \(\Gamma / \gamma\) is defined as follows: If \(\gamma\) is a proper subgraph of \(\Gamma\), then \(\Gamma / \gamma\) is defined by shrinking all internal edges of \(\gamma\) in \(\Gamma\) to a single vertex for each connected component of \(\gamma\). Otherwise, if \(\gamma = \Gamma\) we define the quotient to be the identity, i.e.\ \(\Gamma / \Gamma := \one\). The coidentity \(\coone \colon \mathscr{H}_{\mathcal{Q}} \to \mathbb{Q}\) is set such that its kernel is the Hopf algebra without the subalgebra generated by \(\one\), i.e.\
\begin{equation}
	\coone \, : \quad \mathscr{H}_{\mathcal{Q}} \to \mathbb{Q} \, , \quad \Gamma \mapsto \begin{cases} q & \text{if } \Gamma = q \one \text{ with } q \in \mathbb{Q}\\ 0 & \text{else} \end{cases} \, .
\end{equation}
Finally, we define the antipode recursively via the normalization \(\antipode{\one} = \one\) and on \(\operatorname{Aug} \left ( \mathscr{H}_{\mathcal{Q}} \right )\) via
\begin{equation}
	S \, : \quad \mathscr{H}_{\mathcal{Q}} \to \mathscr{H}_{\mathcal{Q}} \, , \quad \Gamma \mapsto - \left ( S \star \mathscr{P} \right ) \left ( \Gamma \right ) \equiv - \Gamma - \sum_{\gamma \in \mathscr{D}^\prime \left ( \Gamma \right )} \antipode{\gamma} \Gamma/\gamma \, , \label{eqn:antipode}
\end{equation}
where \(\mathscr{P}\) is the projector onto the augmentation ideal from \defnref{defn:augmentation_ideal} and the quotient \(\Gamma / \gamma\) is defined as in the definition of the coproduct after \eqnref{eqn:coproduct}.
\end{defn}

\vspace{\baselineskip}

\begin{defn}[Reduced coproduct] \label{defn:reduced_coproduct}
Let \(\mathcal{Q}\) be a local QFT as in \defnref{defn:renormalization_hopf_algebra}, \(\mathcal{R}_{\mathcal{Q}}\) the set of its weighted residues and \((\mathscr{H}_{\mathcal{Q}}, \mult, \one, \Delta, \coone, S)\) its renormalization Hopf algebra. Then we define the reduced coproduct as the non-trivial part of the coproduct, i.e.\
\begin{equation}
	\Delta^\prime \, : \quad \mathscr{H}_{\mathcal{Q}} \to \mathscr{H}_{\mathcal{Q}} \otimes \mathscr{H}_{\mathcal{Q}} \, , \quad \Gamma \mapsto \sum_{\gamma \in \mathscr{D}^\prime \left ( \Gamma \right )} \gamma \otimes \Gamma / \gamma \, . \label{eqn:reduced_coproduct}
\end{equation}
In particular, the coproduct and the reduced coproduct are related via
\begin{equation}
	\Delta \left ( \Gamma \right ) = \Delta^\prime \left ( \Gamma \right ) + \one \otimes \Gamma + \Gamma \otimes \one \, .
\end{equation}
\end{defn}

\vspace{\baselineskip}

\begin{defn}[Product of coupling constants of a Feynman graph] \label{defn:coupling_constants_feynman_graph}
Let \(\mathcal{Q}\) be a local QFT and \(\mathscr{G}_{\mathcal{Q}}\) the set of its Feynman graphs. Let furthermore \(\Gamma = \prod_{m = 1}^M \Gamma_m\) be a product of \(M \in \mathbb{N}\) connected Feynman graphs \(\Gamma_m \in \mathscr{G}_{\mathcal{Q}}\), \(1 \leq m \leq M\). Then we define the product of coupling constants of \(\Gamma\) as
\begin{equation}
	\cpl{\Gamma} := \prod_{m = 1}^M \left ( \frac{1}{\cpl{\res{\Gamma_m}}} \prod_{v \in \Gamma_m^{[0]}} \cpl{v} \right ) \, ,
\end{equation}
with
\begin{equation}
	\cpl{\res{\Gamma_m}} := \begin{cases} \text{coupling constant of the vertex } \res{\Gamma_m} & \text{if } \res{\Gamma_m} \in \mathcal{R}_{\mathcal{Q}}^{[0]}\\ 1 & \text{else} \end{cases} \, .
\end{equation}
\end{defn}

\vspace{\baselineskip}

\begin{defn}[Multi-index corresponding to the product of coupling constants of a Feynman graph] \label{defn:multi-index}
Let \(\mathcal{Q}\) be a local QFT and \(\mathscr{G}_{\mathcal{Q}}\) the set of its Feynman graphs. Let furthermore \(\Gamma = \prod_{m = 1}^M \Gamma_m\) be a product of \(M \in \mathbb{N}_{\geq 1}\) connected Feynman graphs \(\Gamma_m \in \mathscr{G}_{\mathcal{Q}}\), \(1 \leq m \leq M\), and \(\cpl{\Gamma}\) the product of its coupling constants. Then we define the multi-index \(\mathbf{C} \in \mathbb{Z}^N\), where \(N \in \mathbb{N}_{\geq 1}\) denotes the number of coupling constants of \(\Q\), corresponding to \(\cpl{\Gamma}\) as the vector counting the multiplicities of the several coupling constants in \(\cpl{\Gamma}\).
\end{defn}

\vspace{\baselineskip}

\begin{defn}[Connectedness and gradings of the renormalization Hopf algebra] \label{defn:connectedness_gradings_renormalization_hopf_algebra}
Let \(\mathcal{Q}\) be a local QFT as in \defnref{defn:renormalization_hopf_algebra}, \(\RQ\) the set of its residues and \(\mathscr{H}_{\mathcal{Q}}\) its renormalization Hopf algebra. In the following, sums and direct sums over multi-indices are understood componentwise, e.g.\ let \(\mathbf{G}^+ = ( \mathbf{G}^+_1, \cdots, \mathbf{G}^+_N) \in \mathbb{Z}^N\) and \(\mathbf{G}^- = (\mathbf{G}^-_1, \cdots, \mathbf{G}^-_N) \in \mathbb{Z}^N\) be two multi-indices with \(N \in \mathbb{N}_{\geq 0}\) entries, then we set
\begin{equation}
	\sum_{\mathbf{g} = \mathbf{G}^-}^{\mathbf{G}^+} := \sum_{\mathbf{g}_1 = \mathbf{G}^-_1}^{\mathbf{G}^+_1} \cdots \sum_{\mathbf{g}_N = \mathbf{G}^-_N}^{\mathbf{G}^+_N} \quad \text{and} \quad \bigoplus_{\mathbf{g} = \mathbf{G}^-}^{\mathbf{G}^+} := \bigoplus_{\mathbf{g}_1 = \mathbf{G}^-_1}^{\mathbf{G}^+_1} \cdots \bigoplus_{\mathbf{g}_N = \mathbf{G}^-_N}^{\mathbf{G}^+_N} \, .
\end{equation}
Furthermore, we set \(\mathbf{0} := (0_1 , \cdots , 0_N)\), \(- \mathbf{\infty} := ( - \infty_1 , \cdots , - \infty_N)\) and \(\mathbf{\infty} := (\infty_1 , \cdots , \infty_N)\). Then we consider the following three gradings of \(\mathscr{H}_{\mathcal{Q}}\) as a Hopf algebra which are further refinements of each other: The first grading comes from the first Betti number, i.e.\
\begin{equation}
	\mathscr{H}_{\mathcal{Q}} = \bigoplus_{L = 0}^\infty \left ( \mathscr{H}_{\mathcal{Q}} \right )_L \, .
\end{equation}
The second grading comes from the multi-index corresponding to the coupling constants of a Feynman graph, i.e.\
\begin{equation}
	\mathscr{H}_{\mathcal{Q}} = \bigoplus_{\mathbf{C} = - \mathbf{\infty}}^{\mathbf{\infty}} \left ( \mathscr{H}_{\mathcal{Q}} \right )_{\mathbf{C}} \, .
\end{equation}
Finally, the third grading comes from the multi-index corresponding to the vertex-residues of a Feynman graph, i.e.\
\begin{equation}
	\mathscr{H}_{\mathcal{Q}} = \bigoplus_{\mathbf{R} = - \mathbf{\infty}}^{\mathbf{\infty}} \left ( \mathscr{H}_{\mathcal{Q}} \right )_{\mathbf{R}} \, .
\end{equation}
Clearly, \(\left ( \mathscr{H}_{\mathcal{Q}} \right )_{L = 0} \cong \left ( \mathscr{H}_{\mathcal{Q}} \right )_{\mathbf{C} = \mathbf{0}} \cong \left ( \mathscr{H}_{\mathcal{Q}} \right )_{\mathbf{R} = \mathbf{0}} \cong \mathbb{Q}\), and thus \(\mathscr{H}_{\mathcal{Q}}\) is connected in all three gradings. In this article we use mainly the second grading.
\end{defn}

\vspace{\baselineskip}

\begin{defn}[Symmetry factor of a Feynman graph]
Let \(\mathcal{Q}\) be a local QFT and \(\mathscr{G}_{\mathcal{Q}}\) the set of its Feynman graphs. Let furthermore \(\Gamma \in \mathscr{G}_{\mathcal{Q}}\) be a Feynman graph. Then we define the symmetry factor of \(\Gamma\) as
\begin{equation}
	\sym{\Gamma} := \# \operatorname{Aut} \left ( \Gamma \right ) \, ,
\end{equation}
where \(\# \operatorname{Aut} \left ( \Gamma \right )\) is the rank of the automorphism group of \(\Gamma\), leaving its external leg structure fixed and respecting its vertex and edge types \(v \in \RQ\) and \(e \in \RQ\) for all \(v \in \Gamma^{[0]}\) and \(e \in \Gamma^{[1]}\), respectively.
\end{defn}

\vspace{\baselineskip}

\begin{defn}[Combinatorial Green's functions] \label{def:combinatorial_greens_functions}
Let \(\mathcal{Q}\) be a local QFT, \(\RQ\) the set of its residues and \(\mathscr{G}_{\mathcal{Q}}\) the set of its Feynman graphs. Given \(r \in \RQ\), we set for notational simplicity in this definition
\begin{equation}
	\precombgreen^r := \sum_{\substack{\Gamma \in \mathscr{G}_{\mathcal{Q}}\\\res{\Gamma} = r}} \frac{1}{\sym{\Gamma}} \Gamma \, .
\end{equation}
Then, we define the total combinatorial Green's function with residue \(r\) as the following sums:
\begin{equation}
	\combgreen^r := \begin{cases} \one + \precombgreen^r & \text{if \(r \in \RQ^{[0]}\)} \\ \one - \precombgreen^r & \text{if \(r \in \RQ^{[1]}\)} \\ \precombgreen^r & \text{else, i.e.\ \(r \notin \RQ\)} \end{cases}
\end{equation}
Finally, we denote the restriction of \(\combgreen^r\) to one of the gradings \(\mathbf{g}\) from \defnref{defn:connectedness_gradings_renormalization_hopf_algebra} via
\begin{equation}
	\rescombgreen^r_\mathbf{g} := \eval{\combgreen^r}_{\mathbf{g}} \, .
\end{equation}
\end{defn}

\vspace{\baselineskip}

\begin{rem} \label{rem:different_conventions_restricted_greens_functions}
We remark that restricted combinatorial Green's functions are in the literature often denoted via \(c^r_\mathbf{g}\) and differ by a minus sign from our definition. Our convention is such that they are given as the restriction of the total combinatorial Green's function to the corresponding grading, which requires minus signs for non-empty propagator graphs.
\end{rem}

\vspace{\baselineskip}

\begin{defn}[Hopf subalgebras for multiplicative renormalization] \label{defn:hopf_subalgebras_renormalization_hopf_algebra}
Let \(\mathcal{Q}\) be a local QFT as in \defnref{defn:renormalization_hopf_algebra}, \(\RQ\) its weighted residue set, \(\mathscr{H}_{\mathcal{Q}}\) its renormalization Hopf algebra and \(\rescombgreen^r_\mathbf{G} \in \mathscr{H}_{\mathcal{Q}}\) its restricted Green's functions, where \(\mathbf{G}\) and \(\mathbf{g}\) denotes one of the gradings from \defnref{defn:connectedness_gradings_renormalization_hopf_algebra}. We are interested in Hopf subalgebras which correspond to multiplicative renormalization, i.e.\ Hopf subalgebras of \(\mathscr{H}_{\mathcal{Q}}\) such that the coproduct factors on the restricted combinatorial Green's functions for all multi-indices \(\mathbf{G}\) in the following way:
\begin{equation}
	\Delta \left ( \rescombgreen^r_{\mathbf{G}} \right ) = \sum_{\mathbf{g} = \mathbf{0}}^{\mathbf{G}} \mathbf{P}_{\mathbf{g}} \left ( \rescombgreen^r_{\mathbf{G}} \right ) \otimes \rescombgreen^r_{\mathbf{G} - \mathbf{g}} \, , \label{eqn:hopf_subalgebras_multi-index}
\end{equation}
where \(\mathbf{P}_{\mathbf{g}} \left ( \rescombgreen^r_{\mathbf{G}} \right ) \in \mathscr{H}_{\mathcal{Q}}\) is a polynomial in graphs such that each summand has multi-index \(\mathbf{g}\).\footnote{There exist closed expressions for the polynomials \(\mathbf{P}_{\mathbf{g}} \left ( \rescombgreen^r_{\mathbf{G}} \right )\) as follows:
\begin{equation}
	\mathbf{P}_{\mathbf{g}} \left ( \rescombgreen^r_{\mathbf{G}} \right ) := \eval[2]{\left ( \overline{\combgreen}^r \overline{\mathfrak{Q}}^\mathbf{G} \right )}_{\mathbf{g}} \, ,
\end{equation}
where the overline denotes the restriction to divergent graphs and \(\mathfrak{Q}^v\) denotes so-called combinatorial charges, which were introduced in \cite{Yeats_PhD}. See also \cite{Prinz_3} for a treatment using the same notations and conventions.}
\end{defn}

\vspace{\baselineskip}

\begin{rem}[Hopf subalgebras and multiplicative renormalization]
We shortly remark the connection between Hopf subalgebras in the sense of \defnref{defn:hopf_subalgebras_renormalization_hopf_algebra} and multiplicative renormalization: Let \(\mathcal{Q}\) be a local QFT as in \defnref{defn:renormalization_hopf_algebra}, \(\mathscr{H}_{\mathcal{Q}}\) its renormalization Hopf algebra and \(\mathcal{M}^\varepsilon := \mathbb{C} \big [ \varepsilon^{-1}, \varepsilon \big ] \big ] \) the algebra of meromorphic functions in the variable \(\varepsilon\), called regularization parameter. Consider Feynman diagrams with fixed external momenta outside their Landau singularities. Then, Feynman rules are a character from \(\mathscr{H}_\mathcal{Q}\) to the algebra of formal integral expressions, and we can define integrated regularized Feynman rules \(\regFR\) for a given regularization scheme \(\mathscr{E}\), with regulator \(\varepsilon\), as the map
\begin{equation}
	\regFR \, : \quad \mathscr{H}_\mathcal{Q} \to \mathcal{M}^\varepsilon \, , \quad \Gamma \mapsto f_\Gamma^\varepsilon \, ,
\end{equation}
where \(f_\Gamma^\varepsilon\) is the meromorphic function obtained after regularizing the Feynman integrand of \(\Gamma\) via \(\mathscr{E}\) and then integrating it for a suitable value of \(\varepsilon\). From this we can proceed and define renormalized Feynman rules \(\renFR\) for a given renormalization scheme \(\mathscr{R}\), such that \((\mathcal{M}^\varepsilon, \mathscr{R})\) is a Rota-Baxter algebra of weight \(\lambda = -1\), as the map
\begin{equation}
	\renFR \, : \quad \mathscr{H}_\mathcal{Q} \to \mathbb{C} \subset \mathcal{M}^\varepsilon \, , \quad \Gamma \mapsto \underset{\varepsilon \mapsto 0}{\operatorname{Lim}} \left ( \counterterm \star \regFR \right ) \left ( \Gamma \right ) \, ,
\end{equation}
where \(\star\) is the convolution product from \defnref{defn:convolution_product} and \(\counterterm \left ( \cdot \right )\) is the counterterm map, recursively given via the normalization \(\counterterm \left ( \one \right ) := \mathbf{1}_{\mathcal{M}^\varepsilon} \equiv 1 \in \mathbb{C}\), and on \(\operatorname{Aug} \left ( \mathscr{H}_{\mathcal{Q}} \right )\) via
\begin{equation}
	\counterterm \, : \quad \mathscr{H}_\mathcal{Q} \to \mathcal{M}^\varepsilon \, , \quad \Gamma \mapsto - \mathscr{R} \circ \left ( \counterterm \star \left ( \regFR \circ \mathscr{P} \right ) \right ) \left ( \Gamma \right ) \, ,
\end{equation}
where \(\mathscr{P}\) is the projector onto the augmentation ideal from \defnref{defn:augmentation_ideal}. We remark that the counterterm map as well as the \(Z\)-factors are only well-defined for \(\varepsilon \neq 0\), as they consist of \(\mathscr{R}\)-divergent expressions. If the renormalization Hopf algebra \(\mathscr{H}_\mathcal{Q}\) possesses Hopf subalgebras in the sense of \defnref{defn:hopf_subalgebras_renormalization_hopf_algebra}, we can calculate the \(Z\)-factor for a given residue \(r \in \RQ\) via
\begin{equation}
	Z^r_{\mathscr{E}, \mathscr{R}} \left ( \varepsilon \right ) := \counterterm \left ( \combgreen^r \right ) \, .
\end{equation}
More details in this direction can be found in \cite{Prinz_3} and in \cite{Panzer,vSuijlekom_Multiplicative} (using a different notation).
\end{rem}

\vspace{\baselineskip}

\begin{rem}[Hopf subalgebras and different gradings]
Furthermore, we remark that the existence of the Hopf subalgebras from \defnref{defn:hopf_subalgebras_renormalization_hopf_algebra} depends crucially on the grading \(\mathbf{g}\). In particular, for the grading induced by the first Betti number these Hopf subalgebras exist if and only if the local QFT has only one vertex, for the coupling-constant grading if and only if the local QFT has for each vertex a different coupling constant and always for the residue grading \cite{Prinz_3}.
\end{rem}

\subsection{Associating the renormalization Hopf algebra to a local QFT} \label{ssec:associating_the_renormalization_hopf_algebra_to_a_given_local_qft}

In this subsection we describe a problem which may occur in the construction of the renormalization Hopf algebra \(\mathscr{H}_{\mathcal{Q}}\) to a given local QFT \(\mathcal{Q}\) using \defnref{defn:renormalization_hopf_algebra}. Then, we present four different solutions to still obtain a renormalization Hopf algebra (which are not isomorphic if the problem occurs) and discuss their physical interpretation.

\vspace{\baselineskip}

\begin{pro} \label{pro:problem}
Given a general local QFT \(\Q\), \defnref{defn:renormalization_hopf_algebra} may not yield a well-defined Hopf algebra due to the following reason: Let \(\Q\) be such that there exist divergent Feynman graphs \(\gamma \in \mathscr{G}_\Q\) whose residue is not in the residue set, i.e.\ we have \(\sdd{\gamma} \geq 0\) and \(\res{\gamma} \notin \RQ\). Then given any Feynman graph \(\Gamma \in \mathscr{G}_\Q\) with \(\gamma \in \mathscr{D} \left ( \Gamma \right )\), the quotients of the form \(\Gamma / \gamma\) for \(\gamma \subsetneq \Gamma\) are ill-defined, as they generate a new vertex \(\res{\gamma} \notin \RQ^{[0]}\). As a consequence, the definitions of the coproduct and the antipode are ill-defined as well.
\end{pro}

\vspace{\baselineskip}

\begin{rem}
In order to remedy \proref{pro:problem} we need to change some of the definitions. This is explained in the following \solsaref{sol:solution_1}{sol:solution_2}{sol:solution_3}{sol:solution_4}. In order to distinguish the different objects, we use script letters for the objects as defined in \defnref{defn:renormalization_hopf_algebra} and calligraphic letters for the modified definitions.
\end{rem}

\vspace{\baselineskip}

\begin{defn}[Feynman graphs generated by residue sets 2] \label{defn:feynman_graphs_generated_by_residue_sets_2}
Let \(\mathcal{Q}\) be a local QFT with residue set \(\mathcal{R}_{\mathcal{Q}}\). Recall from \defnref{defn:feynman_graphs_generated_by_residue_sets_1} that we denote by \(\mathscr{G}_{\mathcal{Q}}\) the set of all one-particle irreducible (1PI) Feynman graphs\footnote{Again, we remark that the use of 1PI Feynman graphs, rather than connected Feynman graphs, is justified by \thmref{thm:generalized_furry_theorem}, as is discussed in \remref{rem:1pi_generalized_furrys_theorem}.} that can be generated by the residue set \(\mathcal{R}_{\mathcal{Q}}\) of \(\mathcal{Q}\). Moreover, we define the set \(\mathcal{G}_{\mathcal{Q}}\) of all 1PI Feynman graphs of \(\mathcal{Q}\) which does not contain superficially divergent subgraphs whose residue is not in the residue set \(\mathcal{R}_{\mathcal{Q}}\), i.e.
\begin{equation}
	\mathcal{G}_{\mathcal{Q}} := \left \{ \Gamma \in \mathscr{G}_{\mathcal{Q}} \, \left \vert \; \Gamma = \coprod_{m = 1}^M \Gamma_m \, , \; M \in \mathbb{N} \, : \; \res{\Gamma_m} \in \mathcal{R}_{\mathcal{Q}} \, , \; \forall m : \sdd{\Gamma_m} \geq 0 \right . \right \} \, .
\end{equation}
This set will be used in \solref{sol:solution_1}.
\end{defn}

\vspace{\baselineskip}

\begin{defn}[Set of superficially divergent subgraphs of a Feynman graph 2] \label{defn:subset_of_divergent_subgraphs_2}
Let \(\mathcal{Q}\) be a local QFT and \(\Gamma \in \mathscr{G}_{\mathcal{Q}}\) a Feynman graph of \(\mathcal{Q}\). Recall from \defnref{defn:subset_of_divergent_subgraphs_1} that we denote by \(\mathscr{D} \left ( \Gamma \right )\) the set of superficially divergent subgraphs of \(\Gamma\) and by \(\mathscr{D}^\prime \left ( \Gamma \right )\) the set of superficially divergent proper subgraphs of \(\Gamma\). Moreover, we define the two additional sets \(\mathcal{D} \left ( \Gamma \right )\) and \(\mathcal{D}^\prime \left ( \Gamma \right )\), corresponding to \(\mathscr{D} \left ( \Gamma \right )\) and \(\mathscr{D}^\prime \left ( \Gamma \right )\), respectively, which do not contain Feynman graphs with superficially divergent subgraphs whose residue is not in the residue set \(\mathcal{R}_{\mathcal{Q}}\), i.e.\
\begin{subequations}
\begin{equation}
	\mathcal{D} \left ( \Gamma \right ) := \left \{ \gamma \in \mathscr{D} \left ( \Gamma \right ) \, \left \vert \; \gamma = \coprod_{m = 1}^M \gamma_m \, , \; M \in \mathbb{N} \, : \; \res{\gamma_m} \in \mathcal{R}_{\mathcal{Q}} \, , \; \forall m \right . \right \}
\end{equation}
and
\begin{equation}
	\mathcal{D}^\prime \left ( \Gamma \right ) := \left \{ \gamma \in \mathcal{D} \left ( \Gamma \right ) \, \left \vert \; \gamma \subsetneq \Gamma \right . \right \} \, ,
\end{equation}
\end{subequations}
These sets will be used in \solref{sol:solution_2}.
\end{defn}

\vspace{\baselineskip}

\begin{sol} \label{sol:solution_1}
The first solution to \proref{pro:problem} to replace the Feynman graph set \(\mathscr{G}_\mathcal{Q}\) from \defnref{defn:feynman_graphs_generated_by_residue_sets_1} by \(\mathcal{G}_\mathcal{Q}\) from \defnref{defn:feynman_graphs_generated_by_residue_sets_2}. Then, we can construct the renormalization Hopf algebra as in \defnref{defn:renormalization_hopf_algebra}, which we now denote by \(\HQ\).
\end{sol}

\vspace{\baselineskip}

\begin{sol} \label{sol:solution_2}
The second solution to \proref{pro:problem} is to simply remove all divergent Feynman graphs whose residue is not in the residue set from the sets of divergent subgraphs, i.e.\ replace the sets \(\mathscr{D} \left ( \cdot \right )\) and \(\mathscr{D}^\prime \left ( \cdot \right )\) from \defnref{defn:subset_of_divergent_subgraphs_1} by \(\mathcal{D} \left ( \cdot \right )\) and \(\mathcal{D}^\prime \left ( \cdot \right )\) from \defnref{defn:subset_of_divergent_subgraphs_2}, respectively. Then, we can construct the renormalization Hopf algebra as in \defnref{defn:renormalization_hopf_algebra}, which we now again denote by \(\HQ\).
\end{sol}

\vspace{\baselineskip}

\begin{sol} \label{sol:solution_3}
The third solution to \proref{pro:problem} is to add all missing residues to the residue set and set its weights to the value of a divergent Feynman graph with this particular residue (if there exist two or more such graphs with different superficial degree of divergence we take the highest for uniqueness, although it suffices to be divergent). This enlarges also the set of Feynman graphs. Then, we can define the Hopf algebra using this enlarged set of Feynman graphs as in \defnref{defn:renormalization_hopf_algebra}, which we now again denote by \(\HQ\).
\end{sol}

\vspace{\baselineskip}

\begin{sol} \label{sol:solution_4}
The fourth solution to \proref{pro:problem} works only in special cases: Given that there exist tree diagrams with the residue of a divergent Feynman graph whose residue is not in the residue set. Then, we can construct the renormalization Hopf algebra as in \defnref{defn:renormalization_hopf_algebra}, with the only difference that we define the shrinking process of the aforementioned graphs by replacing them with the sum over the corresponding trees, which we now again denote by \(\HQ\). We remark that this procedure is a priori non-local, but could be interpreted in the sense of a generalized Slavnov-Taylor-type identity.
\end{sol}

\vspace{\baselineskip}

\begin{rem}
Equivalently, the Hopf algebra from \solref{sol:solution_1} could be constructed in two different ways: The first possibility is to define \(\mathscr{H}_\Q\) as the \(\mathbb{Q}\)-algebra generated by the set \(\mathscr{G}_\Q\) of Feynman graphs with the multiplication and unit as in \defnref{defn:renormalization_hopf_algebra}. Then, we define the ideal
\begin{equation}
	\mathfrak{i}_\mathcal{Q} := \left ( \Gamma \in \left ( \mathscr{G}_\Q \setminus \GQ \right ) \right ) \, ,
\end{equation}
generated by all divergent Feynman graphs whose residue is not in the residue set, and consider the the quotient
\begin{equation}
	\HQ := \mathscr{H}_\Q / \mathfrak{i}_\mathcal{Q} \, , \label{eqn:ideal_Hopf_algebra}
\end{equation}
on which we can define the additional Hopf algebra structures as in \defnref{defn:renormalization_hopf_algebra}. The set of Feynman graphs from the quotient Hopf algebra \(\HQ\) is then precisely the set as defined in \defnref{defn:feynman_graphs_generated_by_residue_sets_2}. The second possibility is to use the Hopf algebra from \solref{sol:solution_3} and consider the quotient by the ideal given as the sum of \(\mathfrak{i}_\mathcal{Q}\) and the ideal generated by all Feynman graphs with vertices which are not in the residue set \(\RQ\), which is a Hopf ideal inside the Hopf algebra from \solref{sol:solution_3} via \colref{cor:residue_hopf_ideals} together with \propref{prop:sums_hopf_ideals_are_hopf_ideals}.
\end{rem}

\vspace{\baselineskip}

\begin{rem}[Physical interpretation] \label{rem:physical_interpretation}
Physically, \proref{pro:problem} states that divergent Feynman graphs whose residue is not in the residue set contribute in principle to a divergent Green's function which cannot be renormalized if the corresponding vertex is missing in the local QFT. However, there could be still three possibilities that the unrenormalized Feynman rules remedy the problem themselves: The first one is that the problematic Feynman graphs itself or the corresponding restricted combinatorial Green's functions turn out to be in the kernel of the unrenormalized Feynman rules which corresponds to \solref{sol:solution_1}. The second one is that the problematic Feynman graphs itself or the corresponding restricted combinatorial Green's functions turn out to be already finite when applying the unrenormalized Feynman rules which corresponds to \solref{sol:solution_2}. However, if this is not the case, we need to add the corresponding vertices with suitable Feynman rules in order to absorb the divergences of the corresponding restricted combinatorial Green's functions via multiplicative renormalization corresponding to \solref{sol:solution_3}. Luckily, for all established physical local QFTs this situation did not appear so far. Finally, the last scenario could be still circumvented, if corresponding tree graphs exist together with a generalized Slavnov-Taylor-type identity rendering this a priori non-local process local, which corresponds to \solref{sol:solution_4}.
\end{rem}

\vspace{\baselineskip}

\begin{exmp}[QED]
An illuminating example to \proref{pro:problem} is QED since we need to apply both, \solref{sol:solution_1} and \solref{sol:solution_3}: Consider QED with its combinatorics as a renormalizable local QFT (i.e.\ the superficial degree of divergence of a Feynman graph depends only on its external leg structure). Then, the Feynman graphs contributing to the three- and four-point function are divergent --- however in contrast to non-abelian quantum gauge theories, there is no three- and four-photon vertex present to absorb the corresponding divergences. Luckily, when summing all Feynman graphs of a given loop order we have the following cancellations after applying the unrenormalized Feynman rules: The Feynman graphs contributing to the three-point function cancel pairwise due to Furry's Theorem, cf.\ \thmref{thm:generalized_furry_theorem} for a generalization thereof and the divergences of the Feynman graphs contributing to the four-point function cancel pairwise due to gauge invariance \cite{Aldins_Brodsky_Dufner_Kinoshita}. Thus, QED is a renormalizable local QFT after all without the need to add a three- and four-photon vertex to the theory.
\end{exmp}

\vspace{\baselineskip}

\begin{rem}[The situation of QGR-QED]
The situation of QGR-QED is worse than the one for QED, since QGR is non-renormalizable as a local QFT (in particular, the superficial degree of divergence of a pure gravity Feynman graph depends only on its loop number). However, for the two-loop propagator Feynman graphs considered in this article the generalization of Furry's theorem given in \thmref{thm:generalized_furry_theorem} suffices, since Feynman graphs with self-loops (in the mathematical literature also known as ``roses'') vanish in the renormalization process. In particular, this ensures that the graviton-photon 2-point function vanishes, which would be in principle possible via quantum correction (whose corresponding Feynman graphs are divergent due to the superficial degree of divergence). Furthermore we remark that there exist also graphs whose external leg structure consists of any combination of even numbers of matter particles, such as four photon or four fermion graphs, as they could be glued together via gravitons. For this scenario we suggest \solref{sol:solution_4} to avoid introducing the corresponding vertices, as the corresponding trees are already part of the theory. The corresponding generalized Slavnov-Taylor-type identities will be studied in future work.
\end{rem}

\vspace{\baselineskip}

\begin{defn}[Renormalization Hopf algebra associated to a local QFT] \label{defn:renormalization_hopf_algebra_associated_to_a_local_qft}
Let \(\Q\) be a local QFT. Then we denote by \(\HQ\) one of the following Hopf algebras: If \defnref{defn:renormalization_hopf_algebra} is well-defined, we denote \(\HQ\) the renormalization Hopf algebra of \defnref{defn:renormalization_hopf_algebra}. Otherwise, we denote by \(\HQ\) the Hopf algebra obtained after applying \solsoref{sol:solution_1}{sol:solution_2}{sol:solution_3}{sol:solution_4} to \defnref{defn:renormalization_hopf_algebra}. We call \(\HQ\) ``the renormalization Hopf algebra associated to \(\Q\)''.
\end{defn}

\vspace{\baselineskip}

\begin{rem}
The motivation for \defnref{defn:renormalization_hopf_algebra_associated_to_a_local_qft} is to simplify notation, as for the realm of this work it is not necessary to distinguish between \solsaref{sol:solution_1}{sol:solution_2}{sol:solution_3}{sol:solution_4}.
\end{rem}

\subsection{Hopf ideals and the renormalization Hopf algebra}

Recall the definition of a Hopf ideal from \defnref{Hopf_ideals_and_subalgebras}. Now, we study general properties of Hopf ideals and then specialize to the renormalization Hopf algebra associated to a local QFT. To this end, we prove general results for Hopf ideals and a condition for Hopf ideals in the renormalization Hopf algebra which yields some particular Hopf ideals as corollaries. This is of physical interest since symmetries generating Hopf ideals have a similar subdivergence structure and are thus compatible with their renormalization treatment. In particular we show that the ideal generated by all Feynman graphs having at least one self-loop (``rose'') is a Hopf ideal. This is useful, as these Feynman integrals vanish for kinematic renormalization schemes and can thus already be set to zero in the renormalization Hopf algebra.

\vspace{\baselineskip}

\begin{prop}[Sums of Hopf ideals are Hopf ideals] \label{prop:sums_hopf_ideals_are_hopf_ideals}
Let \(H\) be a Hopf algebra over a field with characteristic zero and \(\left \{ \mathfrak{i}_n \right \}_{n=1}^N\) be a set of \(N\) non-empty Hopf ideals, where \(N \in \mathbb{N}_{\geq 1} \cup \infty\). Then the sum
\begin{equation}
	\mathfrak{i}_{\Sigma} := \sum_{n = 1}^N \mathfrak{i}_n \, ,
\end{equation}
i.e.\ the ideal \(\mathfrak{i}_{\Sigma}\) generated by sums of the generators of all Hopf ideals in the set \(\left \{ \mathfrak{i}_n \right \}_{n=1}^N\), is also a Hopf ideal in \(H\), i.e.\ \(\mathfrak{i}_{\Sigma}\) satisfies:
\begin{enumerate}
\item \(\Delta \left ( \mathfrak{i}_{\Sigma} \right ) \subseteq H \otimes \mathfrak{i}_{\Sigma} + \mathfrak{i}_{\Sigma} \otimes H\)
\item \(\coone \left ( \mathfrak{i}_{\Sigma} \right ) = 0\)
\item \(S \left ( \mathfrak{i}_{\Sigma} \right ) \subseteq \mathfrak{i}_{\Sigma}\)
\end{enumerate}
\end{prop}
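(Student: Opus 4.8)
The plan is to exploit that every element of $\mathfrak{i}_\Sigma$ is a \emph{finite} sum of elements drawn from the individual Hopf ideals, so that each of the three defining conditions of \defnref{Hopf_ideals_and_subalgebras} can be verified one summand at a time and then reassembled by linearity. First I would record that $\mathfrak{i}_\Sigma = \sum_{n=1}^N \mathfrak{i}_n$ is indeed an ideal of $H$: it is the smallest $\ring$-submodule containing every $\mathfrak{i}_n$, and by the definition of a sum of submodules each $x \in \mathfrak{i}_\Sigma$ has the form $x = \sum_{n \in F} x_n$ for some finite index set $F \subseteq \{1, \dots, N\}$ and $x_n \in \mathfrak{i}_n$. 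This finiteness holds verbatim even when $N = \infty$, which disposes of the only place where the infinite case could cause trouble; that $\mathfrak{i}_\Sigma$ absorbs products on both sides is immediate, since each $\mathfrak{i}_n$ already does.

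Then I would check the three conditions in turn. For the counit, $\coone$ is $\ring$-linear, so $\coone(x) = \sum_{n \in F} \coone(x_n) = 0$ because $\coone(\mathfrak{i}_n) = 0$ for every $n$ by \eqnref{eqn:coone_hopf_ideal}. For the antipode, $S$ is $\ring$-linear as well — being an anti-endomorphism only reverses products, not sums — so $S(x) = \sum_{n \in F} S(x_n)$, and each $S(x_n) \in \mathfrak{i}_n \subseteq \mathfrak{i}_\Sigma$ by \eqnref{eqn:hopf_ideal_3}; hence $S(\mathfrak{i}_\Sigma) \subseteq \mathfrak{i}_\Sigma$. For the coproduct, linearity of $\Delta$ gives $\Delta(x) = \sum_{n \in F} \Delta(x_n)$, and \eqnref{eqn:hopf_ideal_1} places each term $\Delta(x_n)$ in $\mathfrak{i}_n \otimes H + H \otimes \mathfrak{i}_n$.

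The one genuine subtlety — and the step I would treat most carefully — is the inclusion $\mathfrak{i}_n \otimes H + H \otimes \mathfrak{i}_n \subseteq \mathfrak{i}_\Sigma \otimes H + H \otimes \mathfrak{i}_\Sigma$ understood \emph{as subspaces of} $H \otimes H$. Here the hypothesis that $H$ is defined over a field is exactly what is used: flatness makes the natural maps $\mathfrak{i}_n \otimes H \to \mathfrak{i}_\Sigma \otimes H \to H \otimes H$ injective, so these tensor products may legitimately be regarded as nested subspaces of $H \otimes H$, and the monotonicity $\mathfrak{i}_n \subseteq \mathfrak{i}_\Sigma$ transports directly to them. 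Since $\mathfrak{i}_\Sigma \otimes H + H \otimes \mathfrak{i}_\Sigma$ is a subspace and is therefore closed under the finite sum over $F$, I conclude $\Delta(x) \in \mathfrak{i}_\Sigma \otimes H + H \otimes \mathfrak{i}_\Sigma$, which is condition (i). Together with the counit and antipode verifications above, this shows that $\mathfrak{i}_\Sigma$ satisfies all three requirements and is hence a Hopf ideal in $H$.
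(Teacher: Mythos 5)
Your proposal is correct and is essentially the paper's argument: the paper's own proof is the one-line remark that the claim ``follows directly by the linearity of the involved maps,'' and your write-up simply spells out that linearity argument (finite sums of elements from the individual ideals, linearity of \(\Delta\), \(\coone\) and \(S\), and the inclusions \(\mathfrak{i}_n \subseteq \mathfrak{i}_\Sigma\)), together with the harmless extra remark about flatness over a field justifying the nesting of the tensor-product subspaces.
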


\begin{proof}
This follows directly by the linearity of the involved maps.
\end{proof}

\vspace{\baselineskip}

\begin{prop}[Special products of Hopf ideals are Hopf ideals] \label{prop:sprods_hopf_ideals_are_hopf_ideals}
Let \(H\) be a Hopf algebra over a field with characteristic zero and \(\left \{ \mathfrak{i}_n \right \}_{n=1}^N\) be a set of \(N\) non-empty Hopf ideals, where \(N \in \mathbb{N}_{\geq 1} \cup \infty\) and \(k \in \left \{ 1, \ldots, N \right \}\). Then the special ``product''
\begin{equation}
	\mathfrak{i}_{\Pi k} := \prod_{n = 1}^N \mathfrak{i}_n + \mathfrak{i}_k \, ,
\end{equation}
i.e.\ the ideal \(\mathfrak{i}_{\Pi k}\) generated by products of the generators of all Hopf ideals and the sum of a particular Hopf ideal in the set \(\left \{ \mathfrak{i}_n \right \}_{n=1}^N\), is also a Hopf ideal in \(H\), i.e.\ \(\mathfrak{i}_{\Pi k}\) satisfies:
\begin{enumerate}
\item \(\Delta \left ( \mathfrak{i}_{\Pi k} \right ) \subseteq H \otimes \mathfrak{i}_{\Pi k} + \mathfrak{i}_{\Pi k} \otimes H\)
\item \(\coone \left ( \mathfrak{i}_{\Pi k} \right ) = 0\)
\item \(S \left ( \mathfrak{i}_{\Pi k} \right ) \subseteq \mathfrak{i}_{\Pi k}\)
\end{enumerate}
\end{prop}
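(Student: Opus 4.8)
The plan is to verify the three defining conditions of a Hopf ideal from \defnref{Hopf_ideals_and_subalgebras} directly for $\mathfrak{i}_{\Pi k}$, reducing everything to the generators and then extending by linearity exactly as in \propref{prop:sums_hopf_ideals_are_hopf_ideals}. Writing $\mathfrak{p} := \prod_{n=1}^N \mathfrak{i}_n$, the ideal $\mathfrak{i}_{\Pi k} = \mathfrak{p} + \mathfrak{i}_k$ is generated by the generators of $\mathfrak{i}_k$ together with products $x_1 x_2 \cdots x_N$ where $x_n \in \mathfrak{i}_n$. On the $\mathfrak{i}_k$-part all three conditions hold immediately, since $\mathfrak{i}_k$ is itself a Hopf ideal and $\mathfrak{i}_k \subseteq \mathfrak{i}_{\Pi k}$; thus the whole content of the proof is to control the generators coming from $\mathfrak{p}$.

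First I would record the elementary but crucial containment $\mathfrak{p} \subseteq \mathfrak{i}_k$: since each $\mathfrak{i}_n$ is a two-sided ideal, any product $x_1 \cdots x_N$ with $x_k \in \mathfrak{i}_k$ already lies in $\mathfrak{i}_k$, whence $\mathfrak{p} \subseteq \mathfrak{i}_k \subseteq \mathfrak{i}_{\Pi k}$. This at once settles the counit and antipode conditions on the $\mathfrak{p}$-generators: because $\coone$ is an algebra homomorphism, $\coone(x_1 \cdots x_N) = \coone(x_1) \cdots \coone(x_N) = 0$, each factor lying in the kernel of $\coone$; and because $S$ is an anti-homomorphism, $S(x_1 \cdots x_N) = S(x_N) \cdots S(x_1)$ still carries the factor $S(x_k) \in \mathfrak{i}_k$ and therefore lies in $\mathfrak{i}_k \subseteq \mathfrak{i}_{\Pi k}$.

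The only genuinely nontrivial point, and the main obstacle, is the coproduct condition, because a product of Hopf ideals is not a Hopf ideal in general: expanding $\Delta$ as an algebra homomorphism produces mixed terms in $\mathfrak{i}_{n_1} \otimes \mathfrak{i}_{n_2}$ that need not lie in $\mathfrak{i}_{\Pi k} \otimes H + H \otimes \mathfrak{i}_{\Pi k}$. The resolution is to track the single factor coming from $x_k$. Using $\Delta(x_n) \in \mathfrak{i}_n \otimes H + H \otimes \mathfrak{i}_n$ for each $n$ together with multiplicativity, every term of $\Delta(x_1 \cdots x_N) = \prod_{n=1}^N \Delta(x_n)$ places the contribution of $\Delta(x_k)$ entirely in one tensor slot as an element of $\mathfrak{i}_k$; since $\mathfrak{i}_k$ is a two-sided ideal, the full tensor component in that slot then lies in $\mathfrak{i}_k$. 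Hence each such term lies in $\mathfrak{i}_k \otimes H + H \otimes \mathfrak{i}_k$, and therefore in $\mathfrak{i}_{\Pi k} \otimes H + H \otimes \mathfrak{i}_{\Pi k}$, which is precisely the role played by adjoining $\mathfrak{i}_k$.

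Extending these three statements linearly over all generators of $\mathfrak{i}_{\Pi k}$, once more as in \propref{prop:sums_hopf_ideals_are_hopf_ideals}, completes the argument. I would additionally remark that the containment $\mathfrak{p} \subseteq \mathfrak{i}_k$ in fact yields the shortcut $\mathfrak{i}_{\Pi k} = \mathfrak{i}_k$, so that the proposition may alternatively be read as the statement that this particular combination collapses onto the manifestly Hopf ideal $\mathfrak{i}_k$; the direct verification above nonetheless exhibits exactly why the product term introduces no obstruction.
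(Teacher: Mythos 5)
Your proof is correct and follows essentially the same route as the paper's, which simply asserts that the three conditions follow from the linearity and multiplicativity of the involved maps; you supply the details, correctly tracking the distinguished factor \(x_k \in \mathfrak{i}_k\) through the counit, antipode and coproduct verifications. Your closing observation that \(\prod_{n=1}^N \mathfrak{i}_n \subseteq \mathfrak{i}_k\) (each generating product contains a factor from the two-sided ideal \(\mathfrak{i}_k\)) and hence \(\mathfrak{i}_{\Pi k} = \mathfrak{i}_k\) is also correct and is not made in the paper; it shows that the statement as written collapses to the hypothesis that \(\mathfrak{i}_k\) is a Hopf ideal.
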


\begin{proof}
This follows directly by the linearity and multiplicativity of the involved maps.
\end{proof}

\vspace{\baselineskip}

\begin{prop}[Condition for Hopf ideals\footnote{A similar result was found independently in \cite{Borinsky_PhD}.}] \label{prop:condition_hopf_ideals}
Let \(\Q\) be a local QFT with residue set \(\mathcal{R}_{\Q}\) and \(\HQ\) its renormalization Hopf algebra.\footnote{Defined either via \defnref{defn:renormalization_hopf_algebra} or, if this definition fails, as described in \proref{pro:problem} via one of the solutions described in \solsaref{sol:solution_1}{sol:solution_2}{sol:solution_3}{sol:solution_4}, cf.\ \defnref{defn:renormalization_hopf_algebra_associated_to_a_local_qft}.} Let furthermore \(\emptyset \subsetneq \mathscr{S} \subseteq \mathscr{G}_{\Q}\) be a non-empty set of Feynman graphs and denote via
\begin{equation}
	\mathfrak{i}_{\mathscr{S}} := \left ( \Gamma \in \mathscr{S} \right )_{\HQ}
\end{equation}
the ideal generated by the set \(\mathscr{S}\).\footnote{In general \(\mathfrak{i}_{\mathscr{S}}\) will not be finitely generated.} Then, \(\mathfrak{i}_{\mathscr{S}}\) is a Hopf ideal, i.e.\ \(\mathfrak{i}_{\mathscr{S}}\) satisfies:
\begin{enumerate}
\item \(\Delta \left ( \mathfrak{i}_{\mathscr{S}} \right ) \subseteq \HQ \otimes \mathfrak{i}_{\mathscr{S}} + \mathfrak{i}_{\mathscr{S}} \otimes \HQ\)
\item \(\coone \left ( \mathfrak{i}_{\mathscr{S}} \right ) = 0\)
\item \(S \left ( \mathfrak{i}_R \right ) \subseteq \mathfrak{i}_{\mathscr{S}}\)
\end{enumerate}
if and only if the set \(\mathscr{S}\) is such that for all graphs \(\Gamma \in \mathscr{S}\) and for all corresponding graphs \(\gamma \in \mathscr{D} \left ( \Gamma \right )\) we have that either \(\gamma \in \mathscr{S}\) or \(\Gamma / \gamma \in \mathscr{S}\)
\end{prop}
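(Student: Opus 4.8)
The plan is to verify the three defining properties of a Hopf ideal from \defnref{Hopf_ideals_and_subalgebras} separately, treating the coproduct condition~1 as the heart of the matter and deducing conditions~2 and~3 from general principles. First I would dispose of the counit condition \eqnref{eqn:coone_hopf_ideal}: since every generator $\Gamma \in \mathscr{S}$ is a non-empty graph we have $\coone \left ( \Gamma \right ) = 0$, and because $\coone$ is an algebra homomorphism it annihilates every element $h \Gamma$ of $\mathfrak{i}_{\mathscr{S}}$; hence $\coone \left ( \mathfrak{i}_{\mathscr{S}} \right ) = 0$ holds unconditionally. Next I would reduce the coproduct condition to the generators: since $\Delta$ is an algebra homomorphism and $\mathfrak{i}_{\mathscr{S}}$ is an ideal, once $\D{\Gamma} \in \HQ \otimes \mathfrak{i}_{\mathscr{S}} + \mathfrak{i}_{\mathscr{S}} \otimes \HQ$ holds for each $\Gamma \in \mathscr{S}$ it propagates to all of $\mathfrak{i}_{\mathscr{S}}$ upon multiplying by $\D{h} \in \HQ \otimes \HQ$. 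Finally, granting conditions~1 and~2, $\mathfrak{i}_{\mathscr{S}}$ is a biideal, so the quotient $\HQ / \mathfrak{i}_{\mathscr{S}}$ is a connected graded bialgebra and therefore itself a Hopf algebra; the canonical projection then intertwines the two antipodes, which forces $\antipode{\mathfrak{i}_{\mathscr{S}}} \subseteq \mathfrak{i}_{\mathscr{S}}$ and settles \eqnref{eqn:hopf_ideal_3}. Thus everything hinges on condition~1 for generators.

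For the sufficiency direction I would expand, for $\Gamma \in \mathscr{S}$, using \defnref{defn:reduced_coproduct},
\begin{equation}
	\D{\Gamma} = \one \otimes \Gamma + \Gamma \otimes \one + \sum_{\gamma \in \mathscr{D}^\prime \left ( \Gamma \right )} \gamma \otimes \Gamma / \gamma \, .
\end{equation}
The two boundary terms lie in $\HQ \otimes \mathfrak{i}_{\mathscr{S}}$ and $\mathfrak{i}_{\mathscr{S}} \otimes \HQ$ respectively, because $\Gamma \in \mathscr{S} \subseteq \mathfrak{i}_{\mathscr{S}}$. For each proper term the hypothesis gives $\gamma \in \mathscr{S}$ or $\Gamma / \gamma \in \mathscr{S}$; in the first case $\gamma \otimes \Gamma / \gamma \in \mathfrak{i}_{\mathscr{S}} \otimes \HQ$ and in the second $\gamma \otimes \Gamma / \gamma \in \HQ \otimes \mathfrak{i}_{\mathscr{S}}$. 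Summing, $\D{\Gamma}$ lies in $\HQ \otimes \mathfrak{i}_{\mathscr{S}} + \mathfrak{i}_{\mathscr{S}} \otimes \HQ$, which is exactly condition~1.

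For the necessity direction I would argue by contraposition through the quotient projection $\pi \colon \HQ \twoheadrightarrow \HQ / \mathfrak{i}_{\mathscr{S}}$, exploiting the identification of $\HQ \otimes \mathfrak{i}_{\mathscr{S}} + \mathfrak{i}_{\mathscr{S}} \otimes \HQ$ with $\ker \left ( \pi \otimes \pi \right )$, so that condition~1 on $\Gamma$ is equivalent to $\left ( \pi \otimes \pi \right ) \D{\Gamma} = 0$. Since $\pi \left ( \Gamma \right ) = 0$, the boundary terms drop and one is left with $\sum_{\gamma \in \mathscr{D}^\prime \left ( \Gamma \right )} \pi \left ( \gamma \right ) \otimes \pi \left ( \Gamma / \gamma \right ) = 0$. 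Because $\HQ$ carries a monomial basis indexed by isomorphism classes of graphs, $\pi$ kills the classes lying in $\mathfrak{i}_{\mathscr{S}}$ and sends the remaining ones to distinct basis vectors of the quotient; crucially, every coefficient in the coproduct sum is a positive subgraph multiplicity, so distinct surviving tensor basis elements cannot cancel. The vanishing of the sum therefore forces each proper term to satisfy $\pi \left ( \gamma \right ) = 0$ or $\pi \left ( \Gamma / \gamma \right ) = 0$, i.e.\ $\gamma \in \mathfrak{i}_{\mathscr{S}}$ or $\Gamma / \gamma \in \mathfrak{i}_{\mathscr{S}}$, recovering the closure condition. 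The main obstacle is precisely this last step: one must rule out that several non-isomorphic subgraphs $\gamma$ conspire to cancel in the projected coproduct, and then match ``lies in the generated ideal $\mathfrak{i}_{\mathscr{S}}$'' with ``lies in the generating set $\mathscr{S}$''. The former is handled by the positivity of the multiplicities together with the linear independence of the surviving basis tensors; the latter rests on the observation that a connected graph $\Gamma$ produces a connected cograph $\Gamma / \gamma$, and that a single-component graph belongs to $\mathfrak{i}_{\mathscr{S}}$ only if it already belongs to $\mathscr{S}$, so that for the relevant factors the two notions of membership coincide.
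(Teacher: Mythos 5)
Your proposal is correct, and for the sufficiency direction it follows essentially the same route as the paper: reduce to generators by multiplicativity, expand \(\D{\Gamma}\) over \(\mathscr{D} \left ( \Gamma \right )\), and sort each term \(\gamma \otimes \Gamma / \gamma\) into \(\mathfrak{i}_{\mathscr{S}} \otimes \HQ\) or \(\HQ \otimes \mathfrak{i}_{\mathscr{S}}\) according to the hypothesis. You diverge from the paper in two places, both to your advantage. First, for the antipode condition the paper argues via the recursive formula \(\antipode{\Gamma} = - \Gamma - \sum_{\gamma \in \mathscr{D}^\prime \left ( \Gamma \right )} \antipode{\gamma} \, \Gamma / \gamma\) (which implicitly needs an induction on the grading to know \(\antipode{\gamma} \in \mathfrak{i}_{\mathscr{S}}\)), whereas you quotient by the biideal, invoke that a connected graded bialgebra is automatically Hopf, and use uniqueness of the antipode to get \(\pi \circ S = S_{\HQ / \mathfrak{i}_{\mathscr{S}}} \circ \pi\); both are standard and valid, and yours avoids the induction at the cost of citing the connected-graded-implies-Hopf fact. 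Second, and more substantially, the paper's proof only establishes the ``if'' direction; the ``only if'' direction of the stated equivalence is not addressed there at all. Your necessity argument --- identify \(\HQ \otimes \mathfrak{i}_{\mathscr{S}} + \mathfrak{i}_{\mathscr{S}} \otimes \HQ\) with \(\operatorname{Ker} \left ( \pi \otimes \pi \right )\), kill the boundary terms, and use positivity of the subgraph multiplicities together with linear independence of the surviving monomial basis tensors to rule out cancellation --- is the right mechanism and genuinely adds content. The one step that still deserves care is the final translation from \(\gamma \in \mathfrak{i}_{\mathscr{S}}\) back to \(\gamma \in \mathscr{S}\): your connectedness observation settles this for the cographs \(\Gamma / \gamma\), but an element \(\gamma \in \mathscr{D}^\prime \left ( \Gamma \right )\) may be a disjoint union \(\gamma = \coprod_m \gamma_m\), i.e.\ a nontrivial monomial, which can lie in the ideal because a single component lies in \(\mathscr{S}\) without \(\gamma\) itself being an element of \(\mathscr{S}\); whether this is a gap or merely a matter of how one reads ``\(\gamma \in \mathscr{S}\)'' for disconnected subgraphs is a defect of the proposition's formulation that the paper also leaves unaddressed, so you should state explicitly which convention you adopt.
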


\begin{proof}
By the multiplicativity of the coproduct and the antipode, i.e.\
\begin{align}
	\Delta \left ( \Gamma_1 \Gamma_2 \right ) & = \Delta \left ( \Gamma_1 \right ) \Delta \left ( \Gamma_2 \right )\\
\intertext{and}
	S \left ( \Gamma_1 \Gamma_2 \right ) & = S \left ( \Gamma_1 \right ) S \left ( \Gamma_2 \right )
\end{align}
for \(\Gamma_1, \Gamma_2 \in \HQ\), it suffices to check all three conditions on the level of generators, i.e.\ let \(\Gamma \in \mathscr{S}\): Then, the first property follows from the definition of the coproduct
\begin{equation}
\begin{split}
	\Delta \left ( \Gamma \right ) & = \sum_{\gamma \in \mathscr{D} \left ( \Gamma \right )} \gamma \otimes \Gamma / \gamma\\
	& \subseteq \HQ \otimes \mathfrak{i}_{\mathscr{S}} + \mathfrak{i}_{\mathscr{S}} \otimes \HQ \, .
\end{split}
\end{equation}
The second property follows directly from the fact that \(\mathscr{S} \neq \emptyset\) and thus \(\mathfrak{i}_\mathscr{S} \neq \emptyset\). Finally, the third property follows from the normalization \(S \left ( \one \right ) = \one\) and the recursive definition of the antipode
\begin{equation}
\begin{split}
	S \left ( \Gamma \right ) & = - \Gamma \sum_{\gamma \in \mathscr{D}^\prime \left ( \Gamma \right )} S \left ( \gamma \right ) \Gamma / \gamma\\
	& \subseteq \mathfrak{i}_{\mathscr{S}} \, .
\end{split}
\end{equation}
\end{proof}

\vspace{\baselineskip}

\begin{cor}[Ideals generated by self-loops are Hopf ideals] \label{cor:tadpole_hopf_ideal}
Given the situation of \propref{prop:condition_hopf_ideals}, we denote by \(\emptyset \subsetneq \mathscr{S}_S \subsetneq \mathscr{G}_\Q\) the set of all Feynman graphs being or containing at least one self-loop (``rose''), i.e.\
\begin{equation}
	\mathscr{S}_S := \left \{ \Gamma \in \mathscr{G}_{\Q} \, \left \vert \; \exists \gamma \subseteq \Gamma \, : \; \# \gamma^{[0]} = 1 \, , \; \# \gamma^{[1]} \geq 1 \right . \right \} \, .
\end{equation}
Furthermore, we denote by \(\mathfrak{i}_S\) the ideal generated by the set \(\mathscr{S}_S\) in the renormalization Hopf algebra \(\HQ\). Then, \(\mathfrak{i}_S\) is a Hopf ideal in \(\HQ\).
\end{cor}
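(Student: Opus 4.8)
The plan is to invoke \propref{prop:condition_hopf_ideals}, which reduces the claim that $\mathfrak{i}_S$ is a Hopf ideal to a purely combinatorial closure property of its generating set: it suffices to verify that for every $\Gamma \in \mathscr{S}_S$ and every superficially divergent subgraph $\gamma \in \mathscr{D} \left ( \Gamma \right )$, at least one of $\gamma$ or $\Gamma / \gamma$ again lies in $\mathscr{S}_S$. Since $\emptyset \subsetneq \mathscr{S}_S \subsetneq \mathscr{G}_\Q$ by hypothesis, the generating set is non-empty, so the proposition is applicable and its three conclusions (the coproduct, counit and antipode conditions) follow immediately once this closure property is established.

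To check the closure property I would fix $\Gamma \in \mathscr{S}_S$ together with a self-loop $s \subseteq \Gamma$ witnessing membership, i.e.\ a subgraph with a single vertex $v$ and edge set $\left \{ e_1, \dots, e_k \right \}$, $k \geq 1$, where each $e_i$ is a loop at $v$. Then I would take an arbitrary $\gamma \in \mathscr{D} \left ( \Gamma \right )$ and distinguish two cases according to whether some loop edge of $s$ is an internal edge of $\gamma$.

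In the first case, some $e_i \in \gamma^{[1]}$; then its endpoint $v$ lies in $\gamma^{[0]}$, so the pair $\left ( v, e_i \right )$ is already a self-loop contained in $\gamma$, whence $\gamma \in \mathscr{S}_S$. In the complementary case, no $e_i$ is internal to $\gamma$; contracting $\gamma$ in $\Gamma$ then leaves every $e_i$ intact, and since both endpoints of each $e_i$ coincide with $v$ — which is sent to a single image vertex $\bar v$ by the contraction — the edges $e_i$ remain loops at $\bar v$ in $\Gamma / \gamma$. Hence $\Gamma / \gamma$ still contains a self-loop and $\Gamma / \gamma \in \mathscr{S}_S$. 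In either case the required alternative holds, so the hypothesis of \propref{prop:condition_hopf_ideals} is met.

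The only step demanding genuine care is the second case: one must confirm that contracting a divergent subgraph cannot destroy a self-loop whose edges lie outside it. This is precisely where the definition of the quotient $\Gamma / \gamma$ from \defnref{defn:renormalization_hopf_algebra} enters, namely that only the \emph{internal} edges of $\gamma$ are shrunk — an edge not belonging to $\gamma$ survives, and identifying its two coincident endpoints into $\bar v$ preserves the loop. Beyond this observation no estimate or delicate enumeration is needed, so the corollary is essentially a direct specialization of the general criterion for Hopf ideals.
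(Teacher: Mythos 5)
Your proposal is correct and follows essentially the same route as the paper: both reduce the claim to the closure condition of \propref{prop:condition_hopf_ideals} and then observe the dichotomy that a witnessing self-loop either lies inside the superficially divergent subgraph \(\gamma\) (so \(\gamma \in \mathscr{S}_S\)) or survives the contraction (so \(\Gamma / \gamma \in \mathscr{S}_S\)). Your case analysis merely spells out in more detail the step the paper states as ``either \(\gamma_S \subseteq \gamma_{\mathscr{D}}\) or \(\gamma_S \subseteq \Gamma / \gamma_{\mathscr{D}}\)''.
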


\begin{proof}
Let \(\Gamma \in \mathscr{S}_S\) be a generator of \(\mathfrak{i}_S\), i.e.\ there exists at least one subgraph \(\gamma_S \subseteq \Gamma\) which is a single self-loop, and let \(\gamma_\mathscr{D} \in \mathscr{D} \left ( \Gamma \right )\). Then we have either \(\gamma_S \subseteq \gamma_\mathscr{D}\) or \(\gamma_S \subseteq \Gamma / \gamma_\mathscr{D}\) and thus either \(\gamma_\mathscr{D} \in \mathscr{S}_S\) or \(\Gamma / \gamma_\mathscr{D} \in \mathscr{S}_S\). Finally, applying \propref{prop:condition_hopf_ideals} finishes the proof.
\end{proof}

\vspace{\baselineskip}

\begin{rem}
We remark that self-loop graphs are in the kernel of the renormalized Feynman rules for kinematic renormalization schemes, i.e.\ it makes physically sense to set them already in the renormalization Hopf algebra to zero.
\end{rem}

\vspace{\baselineskip}

\begin{cor}[Ideals generated by residues are Hopf ideals] \label{cor:residue_hopf_ideals}
Given the situation of \propref{prop:condition_hopf_ideals} and let \(r \in \RQ\) be a residue, then we denote by \(\emptyset \subsetneq \mathscr{S}_r \subsetneq \mathscr{G}_\Q\) the set of all Feynman graphs in \(\HQ\) having residue \(r\) or having a subgraph with residue \(r\), i.e.\
\begin{equation}
	\mathscr{S}_r := \left \{ \Gamma \in \mathscr{G}_{\Q} \, \left \vert \; \exists \gamma \subseteq \Gamma \, : \; \res{\gamma} = r \right . \right \} \, .
\end{equation}
Furthermore, we denote by \(\mathfrak{i}_r\) the ideal generated by the set \(\mathscr{S}_r\) in the renormalization Hopf algebra \(\HQ\). Then, \(\mathfrak{i}_r\) is a Hopf ideal in \(\HQ\).
\end{cor}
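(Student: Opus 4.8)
The plan is to deduce the statement from \propref{prop:condition_hopf_ideals}, in complete analogy with the proof of \colref{cor:tadpole_hopf_ideal}. Since $\emptyset \subsetneq \mathscr{S}_r \subsetneq \mathscr{G}_\Q$ by hypothesis, it suffices to verify the combinatorial criterion of \propref{prop:condition_hopf_ideals} for the set $\mathscr{S} := \mathscr{S}_r$: for every generator $\Gamma \in \mathscr{S}_r$ and every $\gamma \in \mathscr{D}\left(\Gamma\right)$ one must produce either $\gamma \in \mathscr{S}_r$ or $\Gamma/\gamma \in \mathscr{S}_r$. By definition of $\mathscr{S}_r$ there is a witnessing subgraph $\delta \subseteq \Gamma$ with $\res{\delta} = r$ (where $\delta = \Gamma$ is allowed), and the whole argument is a case distinction on how this fixed $\delta$ sits inside the chosen $\gamma$.

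First I would record two elementary contraction facts. On the one hand, shrinking an internal subgraph never alters the external leg structure, so $\res{\Gamma/\gamma} = \res{\Gamma}$; hence if the witness may be taken to be all of $\Gamma$, i.e.\ $\res{\Gamma} = r$, then $\Gamma/\gamma \in \mathscr{S}_r$ for \emph{every} $\gamma$ and we are done. On the other hand, contracting the internal edges of $\delta$ in two stages --- first those lying in $\gamma$, as part of forming $\Gamma/\gamma$, and then the remaining ones --- yields the same result as contracting all of them at once. Assuming from now on $\delta \subsetneq \Gamma$, this gives two clean cases: if $\delta \subseteq \gamma$, then $\gamma$ contains the $r$-subgraph $\delta$ and so $\gamma \in \mathscr{S}_r$; and if, conversely, every external leg of $\delta$ survives the contraction of $\gamma$, then the image $\bar{\delta}$ of $\delta$ in $\Gamma/\gamma$ (carrying the internal edges $\delta^{[1]} \setminus \gamma^{[1]}$) is a genuine subgraph of $\Gamma/\gamma$ whose external legs coincide with those of $\delta$, whence $\res{\bar{\delta}} = r$ by the two-stage contraction and $\Gamma/\gamma \in \mathscr{S}_r$.

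The main obstacle is the remaining overlapping case, where $\delta \not\subseteq \gamma$ yet some external leg $e$ of $\delta$ is an internal edge of $\gamma$. Then $e$ has one endpoint $u \in \delta^{[0]} \cap \gamma^{[0]}$ and one endpoint in $\gamma^{[0]} \setminus \delta^{[0]}$, so contracting $\gamma$ identifies $u$ with vertices outside $\delta$ and thereby absorbs the leg $e$, potentially altering the external leg structure of $\bar{\delta}$ and hence its residue; here the naive descent of $\delta$ no longer witnesses $r$, and this is precisely the point requiring care. I would handle it by a finer analysis of the overlap $\delta \cap \gamma$: the aim is to show that in this situation one can either re-choose the witnessing $r$-subgraph so that its external legs avoid $\gamma^{[1]}$, or else exhibit an $r$-subgraph already contained in $\gamma$, placing $\gamma$ itself in $\mathscr{S}_r$. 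Once this overlapping case is settled, \propref{prop:condition_hopf_ideals} applies verbatim and yields that $\mathfrak{i}_r$ is a Hopf ideal; I expect the bookkeeping of external legs under simultaneous contraction to be the only genuinely delicate ingredient, the rest being a direct transcription of the proof of \colref{cor:tadpole_hopf_ideal}.
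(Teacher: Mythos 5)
Your overall strategy is the same as the paper's: reduce the claim to the combinatorial criterion of Proposition~\ref{prop:condition_hopf_ideals} and run a case distinction on how the witnessing subgraph with residue \(r\) (your \(\delta\), the paper's \(\gamma_r\)) sits relative to the superficially divergent subgraph being contracted (your \(\gamma\), the paper's \(\gamma_\mathscr{D}\)). Your handling of the nested case \(\delta \subseteq \gamma\) (giving \(\gamma \in \mathscr{S}_r\)) and of the case where the witness descends intact to the quotient matches the paper's cases exactly, as does the reduction to generators via multiplicativity that is built into Proposition~\ref{prop:condition_hopf_ideals}.

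However, the proposal has a genuine gap: the overlapping case, which you correctly single out as the crux, is only announced as a programme (``re-choose the witness or find an \(r\)-subgraph inside \(\gamma\)'') and never carried out, so as written the corollary is not proved. The paper closes this case by specifying the new witness explicitly: when \(\gamma_r \cap \gamma_\mathscr{D} \neq \emptyset\) but \(\gamma_r\) is not contained in \(\gamma_\mathscr{D}\), one takes \(\gamma_r / \left ( \gamma_\mathscr{D} \cap \gamma_r \right ) \subseteq \Gamma / \gamma_\mathscr{D}\). The point is that the intersection \(\gamma_\mathscr{D} \cap \gamma_r\) consists only of edges that are \emph{internal to both} graphs, so contracting it inside \(\gamma_r\) leaves the external leg structure of \(\gamma_r\) untouched, and since \(\res{\cdot}\) is insensitive to contraction of internal edges one gets \(\res{\gamma_r / \left ( \gamma_\mathscr{D} \cap \gamma_r \right )} = r\), whence \(\Gamma / \gamma_\mathscr{D} \in \mathscr{S}_r\). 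Your worry --- that an external leg of \(\delta\) may be an internal edge of \(\gamma\) and get absorbed --- pertains to the set-theoretic image of \(\delta\) in \(\Gamma/\gamma\); the paper sidesteps it by contracting within the witness only the shared internal edges rather than everything \(\gamma\) touches. If you insert this as the missing step your argument is complete; note only that the identification of \(\gamma_r / \left ( \gamma_\mathscr{D} \cap \gamma_r \right )\), with the external legs inherited from \(\gamma_r\), as a subgraph of \(\Gamma / \gamma_\mathscr{D}\) is exactly the external-leg bookkeeping you flagged, which the paper treats as immediate, in contrast to the fully spelled-out dichotomy in the proof of Corollary~\ref{cor:tadpole_hopf_ideal}.
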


\begin{proof}
Let \(\Gamma \in \mathscr{G}_{\mathcal{Q}}\) be a generator of \(\mathfrak{i}_r\), i.e.\ there exists at least one subgraph \(\gamma_r \subseteq \Gamma\) with \(\res{\gamma_r} = r\), and let \(\gamma_\mathscr{D} \in \mathscr{D} \left ( \Gamma \right )\). Then the following three situations can occur:
\begin{enumerate}
\item \(\gamma_r \cap \gamma_\mathscr{D} = \emptyset\)
\item \(\gamma_r \cap \gamma_\mathscr{D} \neq \emptyset\) and \(\gamma_r \subsetneq \gamma_\mathscr{D}\)
\item \(\gamma_r \subseteq \gamma_\mathscr{D}\)
\end{enumerate}
Observe that for each respective situation we have:
\begin{enumerate}
\item \(\gamma_r \subseteq \Gamma / \gamma_\mathscr{D}\) and thus \(\Gamma / \gamma_\mathscr{D} \in \mathscr{S}_r\)
\item \(\gamma_r / \left ( \gamma_\mathscr{D} \cap \gamma_r \right ) \subseteq \Gamma / \gamma_\mathscr{D}\) and \(\res{\gamma_r / \left ( \gamma_\mathscr{D} \cap \gamma_r \right )} = r\) and thus \(\Gamma / \gamma_\mathscr{D} \in \mathscr{S}_r\)
\item \(\gamma_r \subseteq \gamma_\mathscr{D}\) and thus \(\gamma_\mathscr{D} \in \mathscr{S}_r\)
\end{enumerate}
Finally, applying \propref{prop:condition_hopf_ideals} finishes the proof.
\end{proof}

\vspace{\baselineskip}

\begin{rem}
\colref{cor:residue_hopf_ideals} states that it is compatible with renormalization to set all Feynman graphs with a given residue to zero in the corresponding renormalization Hopf algebra.
\end{rem}

\vspace{\baselineskip}

\begin{rem}
\propref{prop:sums_hopf_ideals_are_hopf_ideals} and \propref{prop:sprods_hopf_ideals_are_hopf_ideals} state that, in the case of the renormalization Hopf algebra associated to a local QFT, special linear combinations of Feynman graph sets \(\left \{ \mathscr{S}_n \right \}_{n=1}^N\), which each individually generate Hopf ideals \(\mathfrak{i}_n := \left ( \mathscr{S}_n \right ) _H\), again generate Hopf ideals. These results will be used e.g.\ in \cite{Prinz_3} to study the renormalization of general local Quantum Gauge Theories, similar to \cite{vSuijlekom_QED,vSuijlekom_QCD}.
\end{rem}

\subsection{The renormalization Hopf algebra of QGR-QED} \label{ssec:the_renormalization_hopf_algebra_of_qgr_qed}

Now, we examine renormalization Hopf algebra associated to QGR-QED, cf.\ \defnref{defn:renormalization_hopf_algebra_associated_to_a_local_qft}. Furthermore, we proove a generalization of Furry's Theorem in \thmref{thm:generalized_furry_theorem} which also includes external gravitons and graviton ghosts. This is in particular useful, since the calculation shows that for the two-loop propagator combinatorial Green's functions, given explicitly in \ssecref{ssec:combinatorial_green_functions}, all divergent subgraphs whose residue is not in the set \(\mathcal{R}_{\text{QGR-QED}}\) are either of this type or belong to the ideal generated by self-loop graphs (``roses''). Thus, when constructing the renormalization Hopf algebra of QGR-QED for the realm of two-loop propagator graphs, we first consider the quotient via the ideal generated by all amplitudes captured via Generalized Furry's Theorem as in \solref{sol:solution_1} and then consider the quotient via the Hopf ideal generated by all self-loop graphs. Finally, we conclude that all graphs that are set to zero in the construction of the renormalization Hopf algebra of QGR-QED are in the kernel of the renormalized Feynman rules. This is due to the fact that self-loops vanish in kinematic renormalization schemes and the Generalized Furry's Theorem. In the following, fermion edges are denoted by \(\residue{ff}\), photon edges by \(\residue{pp}\), graviton edges by \(\residue{gg}\), photon-ghost edges by \(\residue{aa}\) and graviton-ghost edges by \(\residue{hh}\).

\vspace{\baselineskip}

\begin{rem}[Residue set of QGR-QED] \label{rem:weighted_residue_set_qgr_qed}
Recall from \defnref{defn:residue_set_of_a_local_qft_and_their_weights}, how to obtain the corresponding residue set \(\mathcal{R}_{\text{QGR-QED}}\) from the QGR-QED Lagrange density
\begin{equation}
\begin{split}
\mathcal{L}_{\text{QGR-QED}} & = \left ( \leh + \lmd \right ) \dif V_g \\ & \phantom{=} + \mathcal{L}_{\text{GF}} + \mathcal{L}_{\text{Ghost}} \, ,
\end{split}
\end{equation}
which was introduced in \ssecref{ssec:Lagrange_Density_of_QGR-QED}.
As we restrict the residue set to \(\order{\gcoupling^2}\) in this article, we obtain the following finite residue set \(\mathcal{R}_{\text{QGR-QED}}\). It splits, as usual, into a disjoint union of vertex residues \(\mathcal{R}_{\text{QGR-QED}}^{[0]}\) and edge residues \(\mathcal{R}_{\text{QGR-QED}}^{[1]}\), i.e.\
\begin{equation}
	\mathcal{R}_{\text{QGR-QED}} = \mathcal{R}_{\text{QGR-QED}}^{[0]} \amalg \mathcal{R}_{\text{QGR-QED}}^{[1]} \, .
\end{equation}
Concretely, we have
\begin{subequations}
\begin{align}
\begin{split}
	\mathcal{R}_{\text{QGR-QED}}^{[0]} & = \left \{ \tresidue{pff} , \tresidue{pah} , \tresidue{gff} , \tresidue{gpp} , \tresidue{ggg} , \tresidue{gaa} , \tresidue{ghh} , \residue{gpff} , \residue{ggff} , \right . \\
& \phantom{ = \{} \left . \residue{ggpp} , \residue{gpah} , \residue{gggg} , \residue{ggaa}, \residue{gghh} \right \}
\end{split}
\intertext{and}
\mathcal{R}_{\text{QGR-QED}}^{[1]} & = \left \{ \residue{ff} , \residue{pp} , \residue{gg} , \residue{aa} , \residue{hh} \right \} \, . \label{eqn:edge_residue_qgr-qed}
\end{align}
\end{subequations}
Furthermore, their corresponding weights and coupling constants read:
{\allowdisplaybreaks
\begin{subequations}
\begin{alignat}{2}
\sdd{\residue{ff}} & = 1 \qquad \qquad \qquad \cpl{\residue{ff}} && = 1\\
\sdd{\residue{pp}} & = 2 \qquad \qquad \qquad \cpl{\residue{pp}} && = 1\\
\sdd{\residue{gg}} & = 2 \qquad \qquad \qquad \cpl{\residue{gg}} && = 1\\
\sdd{\residue{aa}} & = 2 \qquad \qquad \qquad \cpl{\residue{aa}} && = 1\\
\sdd{\residue{hh}} & = 2 \qquad \qquad \qquad \cpl{\residue{hh}} && = 1\\
\sdd{\tresidue{pff}} & = 0 \qquad \qquad \qquad \cpl{\tresidue{pff}} && = \ecoupling\\
\sdd{\tresidue{pah}} & = 2 \qquad \qquad \qquad \cpl{\tresidue{pah}} && = \gcoupling\\
\sdd{\tresidue{gff}} & = 1 \qquad \qquad \qquad \cpl{\tresidue{gff}} && = \gcoupling\\
\sdd{\tresidue{gpp}} & = 2 \qquad \qquad \qquad \cpl{\tresidue{gpp}} && = \gcoupling\\
\sdd{\tresidue{ggg}} & = 2 \qquad \qquad \qquad \cpl{\tresidue{ggg}} && = \gcoupling\\
\sdd{\residue{gpff}} & = 0 \qquad \qquad \qquad \cpl{\residue{gpff}} && = \ecoupling \gcoupling\\
\sdd{\residue{ggff}} & = 1 \qquad \qquad \qquad \cpl{\residue{ggff}} && = \gcoupling^2\\
\sdd{\residue{ggpp}} & = 2 \qquad \qquad \qquad \cpl{\residue{ggpp}} && = \gcoupling^2\\
\sdd{\residue{gpah}} & = 2 \qquad \qquad \qquad \cpl{\residue{gpah}} && = \gcoupling^2\\
\sdd{\residue{gggg}} & = 2 \qquad \qquad \qquad \cpl{\residue{gggg}} && = \gcoupling^2\\
\sdd{\residue{ggaa}} & = 2 \qquad \qquad \qquad \cpl{\residue{ggaa}} && = \gcoupling^2\\
\sdd{\residue{gghh}} & = 2 \qquad \qquad \qquad \cpl{\residue{gghh}} && = \gcoupling^2
\end{alignat}
\end{subequations}
}
\end{rem}

\vspace{\baselineskip}

\begin{thm}[Generalized Furry's Theorem] \label{thm:generalized_furry_theorem}
Consider QGR-QED with the Lagrange density \(\mathcal{L}_\textup{QGR-QED}\) given in \eqnref{eqn:lagrange_qgr-qed_complete}. Let \(r\) be the residue of an amplitude, set \(p \left ( r \right )\) to be the sum of its external photon and photon ghost edges and \(f \left ( r \right )\) to be the number of external fermion edges. Then the corresponding restricted combinatorial Green's functions \(\rescombgreen^r_\mathbf{r}\) are in the kernel of the unrenormalized Feynman rules \(\Phi_\textup{QGR-QED} \left ( \cdot \right )\) for each residue grading \(\mathbf{r}\) individually if \(p \left ( r \right )\) is odd and \(f \left ( r \right )\) is zero.\footnote{Obviously, it is then also true for the other two gradings presented in \defnref{defn:connectedness_gradings_renormalization_hopf_algebra}, as the residue grading is the finest.}
\end{thm}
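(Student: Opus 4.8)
The plan is to promote charge conjugation to a discrete symmetry of QGR-QED and to run the classical Furry argument at the level of the unrenormalized Feynman rules, treating gravitons and graviton ghosts as spectators of definite charge-conjugation parity. First I would define charge conjugation $\mathscr{C}$ on the fields of the spacetime-matter bundle: the fermion $\Psi$ transforms through the usual charge-conjugation matrix $C$, characterized by $C^{-1} \gamma_m C = - \gamma_m^{\mathsf{T}}$, while the bosonic fields are assigned parities. I declare the photon $A_\mu$ and the photon ghost and antighost $\photonghost, \overline{\photonghost}$ to be $\mathscr{C}$-odd, and the graviton $h_{\mu \nu}$ and the graviton ghost and antighost $\gravitonghost_\mu, \overline{\gravitonghost}_\mu$ to be $\mathscr{C}$-even. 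The whole point of the generalization is that the graviton, coupling universally to the (charge-conjugation even) energy-momentum content, carries trivial parity, so arbitrarily many external gravitons may decorate an amplitude without affecting the parity count.

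The main step is to check that the full Lagrange density $\mathcal{L}_{\textup{QGR-QED}}$ of \eqnref{eqn:lagrange_qgr-qed_complete} is $\mathscr{C}$-invariant monomial by monomial, equivalently that every vertex-residue in $\mathcal{R}_{\textup{QGR-QED}}^{[0]}$ carries even total parity. For the matter-photon vertex $pff$ this is the classical computation: the vector current $\overline{\Psi} \gamma_m \Psi$ is $\mathscr{C}$-odd and compensates the $\mathscr{C}$-odd photon. For the gravity-matter vertices $gff$, $ggff$, $gpff$ I would argue structurally rather than term by term: the minimally coupled Dirac action $\int_M \overline{\Psi} \left ( \imaginary \slashed{\nabla}^{U(1) \times_\rho \Sigma M} - m \right ) \Psi \, \dif V_g$ is $\mathscr{C}$-invariant, and since the vielbeins, inverse vielbeins, spin connection $\varpi_\mu$ and volume factor $\sqrt{- \dt{g}}$ are functionals of the $\mathscr{C}$-even graviton alone (cf.\ \remref{rem:Feynman_rules_1}), expanding them in $h_{\mu \nu}$ produces only $\mathscr{C}$-even gravity-fermion vertices, while the $\mathscr{C}$-odd photon enters only through the invariant combination $\imaginary \ecoupling e^{\mu m} \gamma_m A_\mu$. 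The purely bosonic vertices $gpp$, $ggg$, $ggpp$, $gggg$ are even because each photon occurs an even number of times, and $ghh$, $gghh$ are manifestly even. Finally the mixed ghost vertices $gaa$, $ggaa$ (two photon ghosts) and $pah$, $gpah$ (one photon and one photon ghost) are even precisely because $\photonghost$ was declared $\mathscr{C}$-odd; these vertices both fix and confirm the photon-ghost assignment. I expect this verification, and in particular tracking the gamma-matrix and derivative structure of the curved Dirac operator and the spin connection through $C^{-1} ( \cdot )^{\mathsf{T}} C$, to be the principal obstacle.

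Granting $\mathscr{C}$-invariance of $\mathcal{L}_{\textup{QGR-QED}}$, the unrenormalized Feynman rules $\Phi_{\textup{QGR-QED}}$ inherit it. Charge conjugation acts as an involution on $\mathscr{G}_{\textup{QGR-QED}}$ preserving the residue $r$, the symmetry factor $\sym{\Gamma}$ and every grading of \defnref{defn:connectedness_gradings_renormalization_hopf_algebra}, so it descends to $\rescombgreen^r_{\mathbf{r}}$. Applying the rules to the $\mathscr{C}$-conjugate of each external leg multiplies the amplitude by the product of the external parities, giving
\begin{equation}
	\Phi_{\textup{QGR-QED}} \left ( \rescombgreen^r_{\mathbf{r}} \right ) = \left ( - 1 \right )^{p \left ( r \right )} \Phi_{\textup{QGR-QED}} \left ( \rescombgreen^r_{\mathbf{r}} \right ) \, ,
\end{equation}
where each of the $p \left ( r \right )$ external photon and photon-ghost legs contributes a factor $-1$, the external graviton and graviton-ghost legs contribute $+1$, and the hypothesis $f \left ( r \right ) = 0$ guarantees that there are no external fermion legs, whose presence would break this simple parity bookkeeping. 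Hence, when $p \left ( r \right )$ is odd, $\Phi_{\textup{QGR-QED}} ( \rescombgreen^r_{\mathbf{r}} )$ equals minus itself and therefore vanishes; since $\mathscr{C}$ commutes with the grading, this holds for each residue grading $\mathbf{r}$ separately.

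As a consistency check on the sign, I would also describe the diagrammatic mechanism. Because $f \left ( r \right ) = 0$, every fermion line closes into a loop, and reversing the orientation of all fermion loops sends each graph $\Gamma$ to its charge conjugate. Using $C^{-1} \gamma_m^{\mathsf{T}} C = - \gamma_m$ one sees that each photon-fermion coupling on a loop contributes the sign, whereas each graviton-fermion coupling does not, because its vertex carries an additional loop momentum that also reverses; summing over the $\mathscr{C}$-conjugate pairs then reproduces the factor $\left ( - 1 \right )^{p \left ( r \right )}$ and cancels $\Gamma$ against its partner, or forces a self-conjugate graph to vanish. This picture is only auxiliary; the clean statement is the correlation-function argument above.
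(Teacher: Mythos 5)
Your strategy---promote charge conjugation to a symmetry and count $\mathscr{C}$-parities of external legs---is the same underlying mechanism as the paper's proof, and your ``auxiliary'' diagrammatic paragraph is in fact essentially the argument the paper actually runs. The paper defines $\mathscr{C}$ directly as the involution on Feynman graphs that reverses the orientation of all fermion lines, factors its action on the Feynman rules into a reversal of the order of Dirac matrices (harmless here, since with $f(r)=0$ all fermion lines close into loops and Dirac traces are invariant under order reversal) and a reversal of the fermion momenta, and then classifies every propagator and vertex residue by whether its Feynman rule is independent of or linear in the fermion momenta. A counting argument --- on each closed fermion loop the number of fermion propagators equals the number of fermion vertices, and the photon--fermion vertex (with any number of gravitons attached) is the only residue that changes the parity of $p(r)$ --- shows that each contributing graph carries an odd number of sign-flipping residues, so $\FR{\Gamma^{\mathscr{C}}} = -\FR{\Gamma}$ and the graphs cancel in pairs inside each $\rescombgreen^r_{\mathbf{r}}$, the pairing being well defined because $\mathscr{C}$ preserves residues, gradings and symmetry factors. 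You correctly identify all of these ingredients (the role of $f(r)=0$, the triviality of graviton decorations, the preservation of the gradings), so the two routes differ mainly in packaging: you phrase the cancellation as $\mathscr{C}$-invariance of the Lagrange density plus parity of the external states, the paper as a graph involution plus a momentum-parity count on residues.

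The concrete gap is in the step you yourself flag as the principal obstacle and then dispose of ``structurally'': the $\mathscr{C}$-evenness of every vertex monomial of $\mathcal{L}_{\text{QGR-QED}}$ in \eqnref{eqn:lagrange_qgr-qed_complete}. For the gravity--fermion vertices the argument ``vielbeins, spin connection and $\sqrt{-\dt{g}}$ are functionals of the $\mathscr{C}$-even graviton, hence the induced vertices are $\mathscr{C}$-even'' is circular where it matters: the spin-connection term contributes a genuinely new bilinear $e^{\mu m}\omega_\mu^{rs}\,\overline{\Psi}\gamma_m\sigma_{rs}\Psi$, whose charge-conjugation parity is not inherited from the vector current. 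Its totally antisymmetric part is an axial current ($\mathscr{C}$-even), but its trace parts are vector currents ($\mathscr{C}$-odd), and these only disappear, together with the corresponding pieces of the non-Hermitian $\overline{\Psi}\,\imaginary e^{\mu m}\gamma_m\partial_\mu\Psi$ term, up to total derivatives; so the invariance must actually be computed via $C^{-1}(\cdot)^{\mathsf{T}}C$ and integration by parts rather than asserted. The paper avoids this entirely because its classification needs only the much coarser fact that each residue's Feynman rule is either independent of or linear in the fermion momenta, which is read off directly from the Lagrange density. To close your proof you must either carry out that gamma-matrix verification for the spin-connection vertex (the ghost and gauge-fixing sectors are trivial, containing no fermions), or retreat to the momentum-parity bookkeeping of your final paragraph --- at which point you have reproduced the paper's proof and should promote that paragraph from consistency check to main argument.
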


\begin{proof}
This proof is the only part of this article where we need to consider explicitly the orientation of fermion edges. We denote by \(\mathcal{A}_\text{QGR-QED}^\text{Furry}\) the set of all amplitudes with residues \(r_i\) such that \(p \left ( r_i \right )\) is odd and \(f \left ( r_i \right )\) is zero. Furthermore, let
\begin{equation}
	\mathscr{C} \, : \quad \mathscr{G}_{\text{QGR-QED}} \to \mathscr{G}_{\text{QGR-QED}} \, , \quad \Gamma \mapsto \Gamma^{\mathscr{C}} \, ,
\end{equation}
be the fermion charge-conjugation operator on Feynman graphs, i.e.\ \(\Gamma^{\mathscr{C}}\) is the Feynman graph \(\Gamma\) with the orientation of all fermion lines reversed. Notice that \(\mathscr{C}\) is an involution. Moreover, let \(\FR{\cdot}\) denote the corresponding Feynman rules. Then, we denote
\begin{equation}
	\FRC{\Gamma} := \FR{\Gamma^{\mathscr{C}}} \, .
\end{equation}
Observe that we can factor \(\mathscr{C}\) on the level of Feynman rules into two commuting operations, i.e.
\begin{equation}
	\FRC{\cdot} \equiv \FRCRO{\cdot} \equiv \FRCOR{\cdot} \, ,
\end{equation}
where \(\mathscr{C}_{\text{R}}\) alters the Feynman rules to fermion charge-conjugated residues and \(\mathscr{C}_{\text{O}}\) reverses the order of the Dirac-matrices. Now, let \(\Gamma \in \mathscr{G}_{\text{QGR-QED}}\) be a Feynman graph with \(\res{\Gamma} \in \mathcal{A}_\text{QGR-QED}^\text{Furry}\). Then we immediately have
\begin{equation}
	\FRC{\Gamma} = \FRCR{\Gamma} \, ,
\end{equation}
as traces of Dirac matrices are invariant under a reversion of their order. Now, we analyze the action of \(\mathscr{C}_\text{R}\) on the Feynman rules acting on individual residues. We claim that the residue set \(\mathcal{R}_{\text{QGR-QED}}\) splits into a disjoint union
\begin{equation}
	\mathcal{R}_{\text{QGR-QED}} = \mathcal{R}_{\text{QGR-QED}}^+ \amalg \mathcal{R}_{\text{QGR-QED}}^-
\end{equation}
such that we have for all residues \(r_+ \in \mathcal{R}_{\text{QGR-QED}}^+\)
\begin{subequations}
\begin{align}
	\FRCR{r_+} & = + \FR{r_+} \label{eqn:FRCa}
\intertext{and for all residues \(r_- \in \mathcal{R}_{\text{QGR-QED}}^-\)}
	\FRCR{r_-} & = - \FR{r_-} \, .
\end{align}
\end{subequations}
Indeed, reversing the fermion particle flow is equivalent to reversing the fermion momentum, i.e.\ replace it by its negative. Since the corresponding Feynman rules are either independent or linear in fermion momenta, the independent ones belong to the set \(\mathcal{R}_{\text{QGR-QED}}^+\) and the linear ones to the set \(\mathcal{R}_{\text{QGR-QED}}^-\). In particular, it follows from the structure of the Lagrange density given in \eqnref{eqn:lagrange_qgr-qed_complete} that this property is independent of the number of gravitons attached to a vertex residue. Thus, it suffices to check this property on the level of propagator and three-valent vertex residues. Obviously, residues without fermions are independent of fermion momenta and thus belong to the set \(\mathcal{R}_{\text{QGR-QED}}^+\) as is the photon-fermion-antifermion vertex residue. Contrary, the fermion propagator and the graviton-fermion-antifermion vertex residue are linear in the fermion momenta and thus belong to the set \(\mathcal{R}_{\text{QGR-QED}}^-\). In total, we obtain (where \(\mathcal{R}_{\text{QGR-QED}}^{[0], 3}\) stands for three-valent vertices):
\begin{subequations}
\begin{align}
	\mathcal{R}_{\text{QGR-QED}}^{[0], 3, +} & = \left \{ \tresidue{pff} , \tresidue{pah} , \tresidue{gpp} , \tresidue{ggg} , \tresidue{gaa} , \tresidue{ghh} \right \}\\
	\mathcal{R}_{\text{QGR-QED}}^{[0], 3, -} & = \left \{ \tresidue{gff} \right \}
\intertext{and}
	\mathcal{R}_{\text{QGR-QED}}^{[1], +} & = \left \{ \residue{pp} , \residue{gg} , \residue{aa} , \residue{hh} \right \}\\
	\mathcal{R}_{\text{QGR-QED}}^{[1], -} & = \left \{ \residue{ff} \right \}
\end{align}
\end{subequations}
Now, let \(r \in \mathcal{A}_{\text{QGR-QED}}^{\text{Furry}}\) and we consider corresponding combinatorial Green's functions \(\rescombgreen^r_\mathbf{r}\) for some residue-grading multi-index \(\mathbf{r}\). We claim that the involution \(\mathscr{C}\) is fixed-point free when restricted to \(\rescombgreen^r_\mathbf{r}\), by abuse of notation now considered as a set. To show this claim, we first observe that the photon-fermion-antifermion vertex with an arbitrary number of gravitons attached to it is the only residue which allows to change \(p \left ( r \right )\). Thus, the requirement on \(p \left ( r \right )\) being odd implies that every Feynman graph \(\Gamma \in \rescombgreen^r_\mathbf{r}\) has at least one fermion loop. Thus, \(\mathscr{C}\) applied to any Feynman graph \(\Gamma \in \rescombgreen^r_\mathbf{r}\) is not the identity and furthermore \(\Gamma^{\mathscr{C}} \in \rescombgreen^r_\mathbf{r}\), since residues and gradings are preserved by \(\mathscr{C}\). Finally, we claim that for \(\Gamma \in \rescombgreen^r_\mathbf{r}\) the sum \(\Gamma + \Gamma^{\mathscr{C}}\) is in the kernel of the Feynman rules, i.e.\
\begin{equation}
	\FR{\Gamma + \Gamma^{\mathscr{C}}} = 0 \, . \label{eqn:furry_thm_kernel_feynman_rules}
\end{equation}
We conclude this by showing that in this case \(\Gamma\) is build from an odd number of residues belonging to the set \(\mathcal{R}_{\text{QGR-QED}}^-\), and thus
\begin{equation}
	\FR{\Gamma^{\mathscr{C}}} = - \FR{\Gamma}
\end{equation}
which directly implies \eqnref{eqn:furry_thm_kernel_feynman_rules} by the linearity of the Feynman rules. Indeed, since \(\res{\Gamma} \in \mathcal{A}_{\text{QGR-QED}}^{\text{Furry}}\) we conclude that \(\Gamma\) has to consist of an odd number of photon-fermion-antifermion (modulo an arbitrary number of gravitons) vertices and thus in particular at least one closed fermion loop as noted before. Since the fermion loops are closed (as \(\Gamma\) contains no external fermion edges), the number of fermion propagators equals the number of fermion vertices. Moreover, we remind that photon-fermion-antifermion (modulo an arbitrary number of gravitons) vertices are in the set \(\mathcal{R}_{\text{QGR-QED}}^+\), whereas fermion propagators and gravitons-fermion-antifermion vertices are in the set \(\mathcal{R}_{\text{QGR-QED}}^-\). But as we need to have an odd number of photon-fermion-antifermion (modulo an arbitrary number of gravitons) vertices by the previous argument, \(\Gamma\) needs to have an odd number of residues from the set \(\mathcal{R}_{\text{QGR-QED}}^-\) which finishes the proof.
\end{proof}

\vspace{\baselineskip}

\begin{rem} \label{rem:1pi_generalized_furrys_theorem}
\thmref{thm:generalized_furry_theorem} also justifies the restriction of connected Feynman graphs to 1PI Feynman graphs in \defnref{defn:feynman_graphs_generated_by_residue_sets_1} and \defnref{defn:feynman_graphs_generated_by_residue_sets_2}. More precisely, since the amplitude of the two-point vertex function of a photon and a graviton vanishes, the set of connected Feynman graphs which are not 1PI is a trivial extension of the set of 1PI Feynman graphs.
\end{rem}

\section{Combinatorial Green's functions, the coproduct structure and obstructions to multiplicative renormalization in QGR-QED} \label{sec:coproduct_structure_of_the_greens_function}

Now, we consider the coproduct structure on the two-loop propagator graphs in QGR-QED. We associate to QGR-QED its renormalization Hopf algebra \(\mathcal{H}_{\text{QGR-QED}}\), as was described in \ssecref{ssec:the_renormalization_hopf_algebra_of_qgr_qed}. Let furthermore \(\rescombgreen^r_{\mathbf{c}}\) be the combinatorial Green's function with residue \(r \in \mathcal{R}_{\text{QGR-QED}}\) and multi-index \(\mathbf{c} \in \mathbb{Z}^2\), as was defined in \defnref{def:combinatorial_greens_functions}. The multi-index \(\mathbf{c}\), introduced in \defnref{defn:multi-index}, is defined such that \(\mathbf{c} = (m,n)\) corresponds to a graphs with coupling constants of order \(\order{\gcoupling^m \ecoupling^n}\), with \(m, n \in \mathbb{Z}\). In the following subsections, we present the corresponding Green's functions and their coproducts. We remind that fermion edges are denoted by \(\residue{ff}\), photon edges by \(\residue{pp}\), graviton edges by \(\residue{gg}\), photon-ghost edges by \(\residue{aa}\) and graviton-ghost edges by \(\residue{hh}\). In this work, we draw oriented Feynman graph edges, i.e.\ fermion, photon-ghost and graviton-ghost edges, without orientation. This is understood as the sum of all Feynman graphs having all possible orientations.

\subsection{Combinatorial Green's functions in QGR-QED} \label{ssec:combinatorial_green_functions}

Now, we present the restricted combinatorial Green's functions: Non-symmetric graphs are drawn only once, but with the corresponding multiplicity.\footnote{The minus sings for propagator graphs are due to \defnref{def:combinatorial_greens_functions}, cf.\ \remref{rem:different_conventions_restricted_greens_functions}.} We remark that in the realm of two loop propagator graphs only one graph in \(\rescombgreen^{\cgreen{c-ff}}_{(2,2)}\), \eqnref{eqn:c-ff_2-2}, contains a problematic subgraph which, however, is captured by Generalized Furry's Theorem, stated in \thmref{thm:generalized_furry_theorem}, and thus is presented in brackets. Below, we also present the one-loop two- and three-point amplitudes which are set to zero due to Generalized Furry's Theorem.

{\allowdisplaybreaks
\begin{align}
\rescombgreen^{\cgreen{c-ff}}_{(0,2)} & = - \ograph{c1ff_1_1}
\\
\rescombgreen^{\cgreen{c-ff}}_{(2,0)} & = - \ograph{c1ff_1_2}
\\
\rescombgreen^{\cgreen{c-pp}}_{(0,2)} & = - \frac{1}{2} \ograph{c1pp_1_1}
\\
\rescombgreen^{\cgreen{c-pp}}_{(2,0)} & = - \ograph{c1pp_1_2} - \ograph{c1pp_1_3}
\\
\rescombgreen^{\cgreen{c-gg}}_{(0,2)} & = 0
\\
\begin{split}
\rescombgreen^{\cgreen{c-gg}}_{(2,0)} & = - \frac{1}{2} \ograph{c1gg_1_1} - \frac{1}{2} \ograph{c1gg_1_2} - \frac{1}{2} \ograph{c1gg_1_3} - \frac{1}{2} \ograph{c1gg_1_4} \\ & \phantom{ = } - \frac{1}{2} \ograph{c1gg_1_5}
\end{split}
\\
\rescombgreen^{\cgreen{c-aa}}_{(0,2)} & = 0
\\
\rescombgreen^{\cgreen{c-aa}}_{(2,0)} & = - \ograph{c1aa_1_1} - \ograph{c1aa_1_2}
\\
\rescombgreen^{\cgreen{c-hh}}_{(0,2)} & = 0
\\
\rescombgreen^{\cgreen{c-hh}}_{(2,0)} & = - \ograph{c1hh_1_1} - \ograph{c1hh_1_2}
\\
\rescombgreen^{\tcgreen{c-pff}}_{(0,2)} & = \tgraph{c1pff_1_1}
\\
\rescombgreen^{\tcgreen{c-pff}}_{(2,0)} & = \tgraph{c1pff_1_2} + 2 \tgraph{c1pff_1_3} + 2 \ttgraph{c1pff_2_1} + \tgraph{c1pff_2_2}
\\
\rescombgreen^{\tcgreen{c-pah}}_{(0,2)} & = 0
\\
\begin{split}
\rescombgreen^{\tcgreen{c-pah}}_{(2,0)} & = \tgraph{c1pah_1_1} + \tgraph{c1pah_1_2} + \tgraph{c1pah_1_3} + \tgraph{c1pah_1_4} + \tgraph{c1pah_2_1} \\ & \phantom{ = } + \ttgraph{c1pah_2_2} + \ttgraph{c1pah_2_3}
\end{split}
\\
\rescombgreen^{\tcgreen{c-gff}}_{(0,2)} & = \tgraph{c1gff_1_1} + \tgraph{c1gff_1_3} + 2 \ttgraph{c1gff_2_1}
\\
\rescombgreen^{\tcgreen{c-gff}}_{(2,0)} & = \tgraph{c1gff_1_2} + \tgraph{c1gff_1_4} + 2 \ttgraph{c1gff_2_2} + \frac{1}{2} \tgraph{c1gff_2_3}
\\
\rescombgreen^{\tcgreen{c-gpp}}_{(0,2)} & = \tgraph{c1gpp_1_1} + \frac{2}{2} \ttgraph{c1gpp_2_1}
\\
\begin{split}
\rescombgreen^{\tcgreen{c-gpp}}_{(2,0)} & = \tgraph{c1gpp_1_2} + \tgraph{c1gpp_1_3} + \tgraph{c1gpp_1_4} + \tgraph{c1gpp_1_5} + 2 \ttgraph{c1gpp_2_2} \\ & \phantom{ = } + \frac{1}{2} \tgraph{c1gpp_2_3} + 2 \ttgraph{c1gpp_2_4}
\end{split}
\\
\rescombgreen^{\tcgreen{c-ggg}}_{(0,2)} & = 0
\\
\begin{split}
\rescombgreen^{\tcgreen{c-ggg}}_{(2,0)} & = \tgraph{c1ggg_1_1} + \tgraph{c1ggg_1_2} + \tgraph{c1ggg_1_3} + \tgraph{c1ggg_1_4} + \tgraph{c1ggg_1_5} \\ & \phantom{ = } + \frac{3}{2} \tgraph{c1ggg_2_1} + \frac{3}{2} \tgraph{c1ggg_2_2} + \frac{3}{2} \tgraph{c1ggg_2_3} + \frac{3}{2} \tgraph{c1ggg_2_4} \\ & \phantom{ = } + \frac{3}{2} \tgraph{c1ggg_2_5}
\end{split}
\\
\rescombgreen^{\tcgreen{c-gaa}}_{(0,2)} & = 0
\\
\begin{split}
\rescombgreen^{\tcgreen{c-gaa}}_{(2,0)} & = \tgraph{c1gaa_1_1} + \tgraph{c1gaa_1_2} + \tgraph{c1gaa_1_3} + \tgraph{c1gaa_1_4} + \frac{1}{2} \tgraph{c1gaa_2_1} \\ & \phantom{ = } + 2 \ttgraph{c1gaa_2_2} + 2 \ttgraph{c1gaa_2_3}
\end{split}
\\
\rescombgreen^{\tcgreen{c-ghh}}_{(0,2)} & = 0
\\
\begin{split}
\rescombgreen^{\tcgreen{c-ghh}}_{(2,0)} & = \tgraph{c1ghh_1_1} + \tgraph{c1ghh_1_2} + \tgraph{c1ghh_1_3} + \tgraph{c1ghh_1_4} + \frac{1}{2} \tgraph{c1ghh_2_1} \\ & \phantom{ = } + 2 \ttgraph{c1ghh_2_2} + 2 \ttgraph{c1ghh_2_3}
\end{split}
\\
\rescombgreen^{\cgreen{c-ff}}_{(0,4)} & = - \graph{c2ff_1_1} - \frac{1}{2} \graph{c2ff_2_1} - \graph{c2ff_2_7}
\\
\begin{split} \label{eqn:c-ff_2-2}
\rescombgreen^{\cgreen{c-ff}}_{(2,2)} & = - 2 \graph{c2ff_1_2} - \graph{c2ff_1_3} - 2 \graph{c2ff_1_4} \\ & \phantom{ = } \left ( - \frac{2}{2} \graph{c2ff_2_2} \right ) - \graph{c2ff_2_3} - \graph{c2ff_2_11} \\ & \phantom{ = } - \graph{c2ff_2_8} - \graph{c2ff_2_9} - 2 \ggraph{c2ff_3_1} \\ & \phantom{ = } - 2 \graph{c2ff_4_1} - 2 \graph{c2ff_4_3} - 2 \graph{c2ff_4_4}
\end{split}
\\
\begin{split}
\rescombgreen^{\cgreen{c-ff}}_{(4,0)} & = - \graph{c2ff_1_5} - \graph{c2ff_1_6} - \frac{1}{2} \graph{c2ff_2_4} \\ & \phantom{ = } - \frac{1}{2} \graph{c2ff_2_5} - \frac{1}{2} \graph{c2ff_2_6} - \frac{1}{2} \graph{c2ff_2_12} \\ & \phantom{ = } - \frac{1}{2} \graph{c2ff_2_13} - \graph{c2ff_2_10} - \ggraph{c2ff_3_2} \\ & \phantom{ = } - \frac{2}{2} \graph{c2ff_4_2} - 2 \graph{c2ff_4_5}
\end{split}
\\
\rescombgreen^{\cgreen{c-pp}}_{(0,4)} & = - \frac{1}{2} \graph{c2pp_1_1} - \graph{c2pp_2_1}
\\
\begin{split}
\rescombgreen^{\cgreen{c-pp}}_{(2,2)} & = - \frac{1}{2} \graph{c2pp_1_2} - 2 \graph{c2pp_1_3} - \graph{c2pp_2_2} \\ & \phantom{ = } - \frac{1}{2} \graph{c2pp_2_6} - \frac{2}{2} \ggraph{c2pp_3_1} - 2 \graph{c2pp_4_1} \\ & \phantom{ = } - \frac{2}{2} \graph{c2pp_4_3}
\end{split}
\\
\begin{split}
\rescombgreen^{\cgreen{c-pp}}_{(4,0)} & = - \graph{c2pp_1_4} - \graph{c2pp_1_5} - 2 \graph{c2pp_1_6} \\ & \phantom{ = } - 2 \graph{c2pp_1_7} - \graph{c2pp_1_8} - \graph{c2pp_1_9} \\ & \phantom{ = } - \frac{1}{2} \graph{c2pp_2_3} - \frac{1}{2} \graph{c2pp_2_4} - \frac{1}{2} \graph{c2pp_2_5} \\ & \phantom{ = } - \frac{1}{2} \graph{c2pp_2_8} - \frac{1}{2} \graph{c2pp_2_9} - \graph{c2pp_2_7} \\ & \phantom{ = } - \graph{c2pp_2_10} - \graph{c2pp_2_11} - \graph{c2pp_2_12} \\ & \phantom{ = } - \graph{c2pp_2_13} - \graph{c2pp_2_14} \ggraph{c2pp_3_2} \\ & \phantom{ = } - 2 \ggraph{c2pp_3_3} - \frac{2}{2} \graph{c2pp_4_2} - 2 \graph{c2pp_4_4} \\ & \phantom{ = } - 2 \graph{c2pp_4_5} - 2 \graph{c2pp_4_6} - 2 \graph{c2pp_4_7}
\end{split}
\\
\rescombgreen^{\cgreen{c-gg}}_{(0,4)} & = 0
\\
\begin{split}
\rescombgreen^{\cgreen{c-gg}}_{(2,2)} & = - \frac{1}{2} \graph{c2gg_1_1} - \frac{2}{2} \graph{c2gg_1_3} - \graph{c2gg_2_1} \\ & \phantom{ = } - \frac{1}{2} \graph{c2gg_2_3} - 2 \graph{c2gg_4_1} - \frac{2}{2} \graph{c2gg_4_3}
\end{split}
\\
\begin{split}
\rescombgreen^{\cgreen{c-gg}}_{(4,0)} & = - \frac{1}{2} \graph{c2gg_1_2} - \frac{2}{2} \graph{c2gg_1_4} - \frac{1}{2} \graph{c2gg_1_5} \\ & \phantom{ = } - \frac{2}{2} \graph{c2gg_1_6} - \frac{2}{2} \graph{c2gg_1_8} - \frac{2}{2} \graph{c2gg_1_9} \\ & \phantom{ = } - \frac{1}{2} \graph{c2gg_1_7} - \frac{2}{2} \graph{c2gg_1_10} - \frac{2}{2} \graph{c2gg_1_11} \\ & \phantom{ = } - \frac{1}{2} \graph{c2gg_1_12} - \frac{2}{2} \graph{c2gg_1_13} - \frac{1}{2} \graph{c2gg_1_14} \\ & \phantom{ = } - \graph{c2gg_2_2} - \graph{c2gg_2_4} - \graph{c2gg_2_8} \\ & \phantom{ = } - \frac{1}{2} \graph{c2gg_2_5} - \frac{1}{2} \graph{c2gg_2_6} - \frac{1}{2} \graph{c2gg_2_7} \\ & \phantom{ = } - \frac{1}{2} \graph{c2gg_2_9} - \frac{1}{2} \graph{c2gg_2_10} - \graph{c2gg_2_11} \\ & \phantom{ = } - \graph{c2gg_2_12} - \graph{c2gg_2_13} - \graph{c2gg_2_14} \\ & \phantom{ = } - \frac{2}{4} \ggraph{c2gg_3_1} - \frac{2}{4} \ggraph{c2gg_3_2} \\ & \phantom{ = } - \frac{1}{4} \ggraph{c2gg_3_3} - \frac{2}{4} \ggraph{c2gg_3_4} \\ & \phantom{ = } - \frac{2}{4} \ggraph{c2gg_3_5} - 2 \graph{c2gg_4_2} - 2 \graph{c2gg_4_4} \\ & \phantom{ = } - 2 \graph{c2gg_4_8} - \frac{2}{2} \graph{c2gg_4_5} - \frac{2}{2} \graph{c2gg_4_6} \\ & \phantom{ = } - \frac{2}{2} \graph{c2gg_4_7} - \frac{2}{2} \graph{c2gg_4_9} - \frac{2}{2} \graph{c2gg_4_10} \\ & \phantom{ = } - 2 \graph{c2gg_4_11} - 2 \graph{c2gg_4_12} - 2 \graph{c2gg_4_13} \\ & \phantom{ = } - 2 \graph{c2gg_4_14}
\end{split}
\\
\rescombgreen^{\cgreen{c-aa}}_{(0,4)} & = 0
\\
\rescombgreen^{\cgreen{c-aa}}_{(2,2)} & = - \frac{1}{2} \graph{c2aa_2_10}
\\
\begin{split}
\rescombgreen^{\cgreen{c-aa}}_{(4,0)} & = - \graph{c2aa_1_1} - 2 \graph{c2aa_1_2} - 2 \graph{c2aa_1_3} \\ & \phantom{ = } - \graph{c2aa_1_4} - \graph{c2aa_1_5} - \graph{c2aa_1_6} \\ & \phantom{ = } - \graph{c2aa_2_1} - \graph{c2aa_2_2} - \graph{c2aa_2_3} \\ & \phantom{ = } - \graph{c2aa_2_4} - \frac{1}{2} \graph{c2aa_2_5} - \frac{1}{2} \graph{c2aa_2_6} \\ & \phantom{ = } - \frac{1}{2} \graph{c2aa_2_7} - \frac{1}{2} \graph{c2aa_2_8} - \frac{1}{2} \graph{c2aa_2_9} \\ & \phantom{ = } - \graph{c2aa_2_11} - \graph{c2aa_2_12} - 2 \ggraph{c2aa_3_1} \\ & \phantom{ = } - \ggraph{c2aa_3_2} - 2 \graph{c2aa_4_1} - 2 \graph{c2aa_4_2} \\ & \phantom{ = } - 2 \graph{c2aa_4_3} - \frac{2}{2} \graph{c2aa_4_4} - 2 \graph{c2aa_4_5}
\end{split}
\\
\rescombgreen^{\cgreen{c-hh}}_{(0,4)} & = 0
\\
\rescombgreen^{\cgreen{c-hh}}_{(2,2)} & = - \frac{1}{2} \graph{c2hh_2_5}
\\
\begin{split}
\rescombgreen^{\cgreen{c-hh}}_{(4,0)} & = - \graph{c2hh_1_1} - 2 \graph{c2hh_1_2} - \graph{c2hh_1_3} \\ & \phantom{ = } - 2 \graph{c2hh_1_4} - \graph{c2hh_1_5} - \graph{c2hh_1_6} \\ & \phantom{ = } - \graph{c2hh_2_1} - \graph{c2hh_2_2} - \graph{c2hh_2_3} \\ & \phantom{ = } - \graph{c2hh_2_4} - \graph{c2hh_2_6} - \graph{c2hh_2_7} \\ & \phantom{ = } - \frac{1}{2} \graph{c2hh_2_8} - \frac{1}{2} \graph{c2hh_2_9} - \frac{1}{2} \graph{c2hh_2_10} \\ & \phantom{ = } - \frac{1}{2} \graph{c2hh_2_11} - \frac{1}{2} \graph{c2hh_2_12} - 2 \ggraph{c2hh_3_1} \\ & \phantom{ = } - \ggraph{c2hh_3_2} - 2 \graph{c2hh_4_1} - 2 \graph{c2hh_4_2} \\ & \phantom{ = } - 2 \graph{c2hh_4_3} - 2 \graph{c2hh_4_4} - \frac{2}{2} \graph{c2hh_4_5}
\end{split}
\intertext{Additionally, we also present the one-loop two- and three-point amplitudes which were set to zero due to Generalized Furry's Theorem, stated in \thmref{thm:generalized_furry_theorem}:}
\rescombgreen^{\cgreen{c-gp}}_{(1,1)} & = - \ograph{c1gp_1_1}
\\
\rescombgreen^{\tcgreen{c-ppp}}_{(0,2)} & = \tgraph{c1ppp_1_1}
\\
\rescombgreen^{\tcgreen{c-ggp}}_{(2,1)} & = \tgraph{c1ggp_1_1}
\end{align}
}

\subsection{The coproduct structure in QGR-QED} \label{ssec:reduced_coproduct_structures_combinatorial_greens_functions}

We obtain the following reduced coproduct structure of the two-loop propagator combinatorial Green's functions:\footnote{We display only the non-vanishing restricted combinatorial Green's functions, cf.\ \ssecref{ssec:combinatorial_green_functions}.}

{\allowdisplaybreaks
\begin{align}
\Delta^\prime \left ( \rescombgreen^{\cgreen{c-ff}}_{(0,4)} \right ) & = \left ( - \rescombgreen^{\cgreen{c-ff}}_{(0,2)} - \rescombgreen^{\cgreen{c-pp}}_{(0,2)} + 2 \rescombgreen^{\tcgreen{c-pff}}_{(0,2)} \right ) \otimes \rescombgreen^{\cgreen{c-ff}}_{(0,2)}
\\
\begin{split}
\Delta^\prime \left ( \rescombgreen^{\cgreen{c-ff}}_{(2,2)} \right ) & = \left ( - \rescombgreen^{\cgreen{c-ff}}_{(2,0)} - \rescombgreen^{\cgreen{c-pp}}_{(2,0)} + 2 \rescombgreen^{\tcgreen{c-pff}}_{(2,0)} \right ) \otimes \rescombgreen^{\cgreen{c-ff}}_{(0,2)} \\ & \phantom{ = ( } + \left ( - \rescombgreen^{\cgreen{c-ff}}_{(0,2)} + 2 \rescombgreen^{\tcgreen{c-gff}}_{(0,2)} \right ) \otimes \rescombgreen^{\cgreen{c-ff}}_{(2,0)}
\end{split}
\\
\Delta^\prime \left ( \rescombgreen^{\cgreen{c-ff}}_{(4,0)} \right ) & = \left ( - \rescombgreen^{\cgreen{c-ff}}_{(2,0)} - \rescombgreen^{\cgreen{c-gg}}_{(2,0)} + 2 \rescombgreen^{\tcgreen{c-gff}}_{(2,0)} \right ) \otimes \rescombgreen^{\cgreen{c-ff}}_{(2,0)}
\\
\Delta^\prime \left ( \rescombgreen^{\cgreen{c-pp}}_{(0,4)} \right ) & = \left ( - 2 \rescombgreen^{\cgreen{c-ff}}_{(0,2)} + 2 \rescombgreen^{\tcgreen{c-pff}}_{(0,2)} \right ) \otimes \rescombgreen^{\cgreen{c-pp}}_{(0,2)}
\\
\begin{split}
\Delta^\prime \left ( \rescombgreen^{\cgreen{c-pp}}_{(2,2)} \right ) & = \left ( - 2 \rescombgreen^{\cgreen{c-ff}}_{(2,0)} + 2 \rescombgreen^{\tcgreen{c-pff}}_{(2,0)} \right ) \otimes \rescombgreen^{\cgreen{c-pp}}_{(0,2)} \\ & \phantom{ = ( } + \left ( - \rescombgreen^{\cgreen{c-pp}}_{(0,2)} + 2 \rescombgreen^{\tcgreen{c-gpp}}_{(0,2)} \right ) \otimes \rescombgreen^{\cgreen{c-pp}}_{(2,0)}
\end{split}
\\
\begin{split}
\Delta^\prime \left ( \rescombgreen^{\cgreen{c-pp}}_{(4,0)} \right ) & = \left ( - \rescombgreen^{\cgreen{c-pp}}_{(2,0)} - \rescombgreen^{\cgreen{c-gg}}_{(2,0)} + 2 \rescombgreen^{\tcgreen{c-gpp}}_{(2,0)} \right ) \otimes \ograph{c1pp_1_2}\\
& \phantom{ = ( } + \left ( - \rescombgreen^{\cgreen{c-aa}}_{(2,0)} - \rescombgreen^{\cgreen{c-hh}}_{(2,0)} + 2 \rescombgreen^{\tcgreen{c-pah}}_{(2,0)} \right ) \otimes \ograph{c1pp_1_3}
\end{split}
\\
\Delta^\prime \left ( \rescombgreen^{\cgreen{c-gg}}_{(0,4)} \right ) & = 0
\\
\begin{split}
\Delta^\prime \left ( \rescombgreen^{\cgreen{c-gg}}_{(2,2)} \right ) & = \left ( - 2 \rescombgreen^{\cgreen{c-ff}}_{(0,2)} + 2 \rescombgreen^{\tcgreen{c-gff}}_{(0,2)} \right ) \otimes \frac{1}{2} \ograph{c1gg_1_1} \\ & \phantom{ = } + \left ( - 2 \rescombgreen^{\cgreen{c-pp}}_{(0,2)} + 2 \rescombgreen^{\tcgreen{c-gpp}}_{(0,2)} \right ) \otimes \frac{1}{2} \ograph{c1gg_1_2}
\end{split}
\\
\begin{split}
\Delta^\prime \left ( \rescombgreen^{\cgreen{c-gg}}_{(4,0)} \right ) & = \left ( - 2 \rescombgreen^{\cgreen{c-ff}}_{(2,0)} + 2 \rescombgreen^{\tcgreen{c-gff}}_{(2,0)} \right ) \otimes \frac{1}{2} \ograph{c1gg_1_1} \\ & \phantom{ = } + \left ( - 2 \rescombgreen^{\cgreen{c-pp}}_{(2,0)} + 2 \rescombgreen^{\tcgreen{c-gpp}}_{(2,0)} \right ) \otimes \frac{1}{2} \ograph{c1gg_1_2} \\ & \phantom{ = } + \left ( - 2 \rescombgreen^{\cgreen{c-gg}}_{(2,0)} + 2 \rescombgreen^{\tcgreen{c-ggg}}_{(2,0)} \right ) \otimes \frac{1}{2} \ograph{c1gg_1_3} \\ & \phantom{ = } + \left ( - 2 \rescombgreen^{\cgreen{c-aa}}_{(2,0)} + 2 \rescombgreen^{\tcgreen{c-gaa}}_{(2,0)} \right ) \otimes \frac{1}{2} \ograph{c1gg_1_4} \\ & \phantom{ = } + \left ( - 2 \rescombgreen^{\cgreen{c-hh}}_{(2,0)} + 2 \rescombgreen^{\tcgreen{c-ghh}}_{(2,0)} \right ) \otimes \frac{1}{2} \ograph{c1gg_1_5}
\end{split}
\\
\Delta^\prime \left ( \rescombgreen^{\cgreen{c-aa}}_{(0,4)} \right ) & = 0
\\
\begin{split}
\Delta^\prime \left ( \rescombgreen^{\cgreen{c-aa}}_{(2,2)} \right ) & = - \rescombgreen^{\cgreen{c-pp}}_{(0,2)} \otimes \ograph{c1aa_1_1} \\
	& \phantom{ = } - 0 \otimes \ograph{c1aa_1_2}
\end{split}
\\
\begin{split}
\Delta^\prime \left ( \rescombgreen^{\cgreen{c-aa}}_{(4,0)} \right ) & = \left ( - \rescombgreen^{\cgreen{c-pp}}_{(2,0)} - \rescombgreen^{\cgreen{c-hh}}_{(2,0)} + 2 \rescombgreen^{\tcgreen{c-pah}}_{(2,0)} \right ) \otimes \ograph{c1aa_1_1} \\ & \phantom{ = } + \left ( - \rescombgreen^{\cgreen{c-gg}}_{(2,0)} - \rescombgreen^{\cgreen{c-aa}}_{(2,0)} + 2 \rescombgreen^{\tcgreen{c-gaa}}_{(2,0)} \right ) \otimes \ograph{c1aa_1_2}
\end{split}
\\
\Delta^\prime \left ( \rescombgreen^{\cgreen{c-hh}}_{(0,4)} \right ) & = 0
\\
\begin{split}
\Delta^\prime \left ( \rescombgreen^{\cgreen{c-hh}}_{(2,2)} \right ) & = - \rescombgreen^{\cgreen{c-pp}}_{(0,2)} \otimes \ograph{c1hh_1_1} \\
	& \phantom{ = } - 0 \otimes \ograph{c1hh_1_2}
\end{split}
\\
\begin{split}
\Delta^\prime \left ( \rescombgreen^{\cgreen{c-hh}}_{(4,0)} \right ) & = \left ( - \rescombgreen^{\cgreen{c-pp}}_{(2,0)} - \rescombgreen^{\cgreen{c-aa}}_{(2,0)} + 2 \rescombgreen^{\tcgreen{c-pah}}_{(2,0)} \right ) \otimes \ograph{c1hh_1_1} \\ & \phantom{ = } + \left ( - \rescombgreen^{\cgreen{c-gg}}_{(2,0)} - \rescombgreen^{\cgreen{c-hh}}_{(2,0)} + 2 \rescombgreen^{\tcgreen{c-ghh}}_{(2,0)} \right ) \otimes \ograph{c1hh_1_2}
\end{split}
\end{align}
}

\subsection{Obstructions to multiplicative renormalization in QGR-QED} \label{ssec:obstructions_to_multiplicative_renormalization}

Using \defnref{defn:hopf_subalgebras_renormalization_hopf_algebra}, we conclude that multiplicative renormalization is possible if the following generalized Ward-Takahashi and Slavnov-Taylor identities \cite{Ward,Takahashi,tHooft_YM,Taylor,Slavnov} hold on the level of Feynman rules, i.e.\ the \(\mathscr{R}\)-divergent contributions from each form factor of the corresponding integral expressions coincide (denoted via \(\simeq_\mathscr{R}\)), where \(\mathscr{R}\) is the chosen renormalization scheme:

{\allowdisplaybreaks
\begin{align}
	\left ( - \rescombgreen^{\cgreen{c-pp}}_{(2,0)} - \rescombgreen^{\cgreen{c-gg}}_{(2,0)} + 2 \rescombgreen^{\tcgreen{c-gpp}}_{(2,0)} \right ) & \simeq_\mathscr{R} \left ( - \rescombgreen^{\cgreen{c-aa}}_{(2,0)} - \rescombgreen^{\cgreen{c-hh}}_{(2,0)} + 2 \rescombgreen^{\tcgreen{c-pah}}_{(2,0)} \right ) \\
	\left ( - 2 \rescombgreen^{\cgreen{c-ff}}_{(0,2)} + 2 \rescombgreen^{\tcgreen{c-gff}}_{(0,2)} \right ) & \simeq_\mathscr{R} \left ( - 2 \rescombgreen^{\cgreen{c-pp}}_{(0,2)} + 2 \rescombgreen^{\tcgreen{c-gpp}}_{(0,2)} \right ) \\
\begin{split}
	\left ( - 2 \rescombgreen^{\cgreen{c-ff}}_{(2,0)} + 2 \rescombgreen^{\tcgreen{c-gff}}_{(2,0)} \right ) & \simeq_\mathscr{R} \left ( - 2 \rescombgreen^{\cgreen{c-pp}}_{(2,0)} + 2 \rescombgreen^{\tcgreen{c-gpp}}_{(2,0)} \right ) \\
	& \simeq_\mathscr{R} \left ( - 2 \rescombgreen^{\cgreen{c-gg}}_{(2,0)} + 2 \rescombgreen^{\tcgreen{c-ggg}}_{(2,0)} \right ) \\
	& \simeq_\mathscr{R} \left ( - 2 \rescombgreen^{\cgreen{c-aa}}_{(2,0)} + 2 \rescombgreen^{\tcgreen{c-gaa}}_{(2,0)} \right ) \\
	& \simeq_\mathscr{R} \left ( - 2 \rescombgreen^{\cgreen{c-hh}}_{(2,0)} + 2 \rescombgreen^{\tcgreen{c-ghh}}_{(2,0)} \right )
\end{split}
\\
\rescombgreen^{\cgreen{c-pp}}_{(0,2)} & \simeq_\mathscr{R} 0 \label{eqn:photon_propagator_convergent} \\
\rescombgreen^{\cgreen{c-hh}}_{(2,0)} & \simeq_\mathscr{R} \rescombgreen^{\cgreen{c-aa}}_{(2,0)} \\
\left ( - \rescombgreen^{\cgreen{c-pp}}_{(2,0)} + 2 \rescombgreen^{\tcgreen{c-pah}}_{(2,0)} \right ) & \simeq_\mathscr{R} \left ( - \rescombgreen^{\cgreen{c-gg}}_{(2,0)} + 2 \rescombgreen^{\tcgreen{c-ghh}}_{(2,0)} \right )
\end{align}
}

We refer for a study of these relations on a general level to \cite{Prinz_3}. We only remark here that \eqnref{eqn:photon_propagator_convergent} requires that the one-loop photon propagator with a fermion loop is convergent. This is not true, and thus shows that QGR-QED is a priori not multiplicative renormalizable using only two coupling constants (one for the electric charge and one for the gravitational coupling). This problem could be resolved by the introduction of two gravitational couplings --- one for the pure gravity part and one for the gravity-matter coupling. However, this is part of a more general theme and could be also resolved if additional symmetries are present, which will be studied in future work.

\section{Conclusion} \label{sec:conclusion}

In this article we considered Quantum General Relativity coupled to Quantum Electrodynamics (QGR-QED). First, we introduced the necessary differential geometric background and the Lagrange density of QGR-QED in \sectionref{sec:differential_geometric_notions_and_the_lagrange_density_of_qgr_qed}. Then, in \sectionref{sec:hopf_algebras_and_the_connes-kreimer_renormalization_hopf_algebra} we introduced Hopf algebras in general and the Connes-Kreimer renormalization Hopf algebra in particular. Furthermore, we discussed a problem which can occur when associating the renormalization Hopf algebra to a given local QFT and discuss possible solutions. Moreover, we examine Hopf ideals inside the renormalization Hopf algebra which represent the symmetries compatible with renormalization. Next, the application of these general results to QGR-QED is discussed. In particular, a generalization of Furry's Theorem including external gravitons and graviton ghosts is formulated and proved in \thmref{thm:generalized_furry_theorem}. This is in particular useful, since the calculations showed that, besides from pure self-loop Feynman graphs which vanish in kinematic renormalization schemes, these are the only graphs which need to be set to zero when constructing the renormalization Hopf algebra of QGR-QED for two-loop propagator graphs. Then, in \sectionref{sec:coproduct_structure_of_the_greens_function} we present all combinatorial Green's functions for the one- and two-loop propagator graphs and the one-loop three-point functions. Then, their coproduct structure is presented, for which the coproduct of 155 Feynman graphs has been computed. Using this result we presented and discussed the obstructions to multiplicative renormalization for QGR-QED.

We remark the following connections to further research in this direction: A detailed treatment of the Feynman rules for Quantum General Relativity coupled to the Standard Model is examined in \cite{Prinz_4}, as was mentioned in \remref{rem:Feynman_rules_1}. Furthermore, the effects of gauge symmetries on renormalization are studied in \cite{Prinz_3}, which results in generalizations of Ward-Takahashi and Slavnov-Taylor identities \cite{Ward,Takahashi,tHooft_YM,Taylor,Slavnov}, as was mentioned in \ssecref{ssec:obstructions_to_multiplicative_renormalization}. Finally, it will be studied if it is possible to define a Corolla polynomial in this setting \cite{Kreimer_Yeats,Kreimer_Sars_vSuijlekom,Sars,Prinz_1,Kreimer_Corolla,Berghoff_Knispel}, which would then create the amplitudes of QGR-QED from scalar \(\phi^3_4\)-theory.

\section*{Acknowledgments}
\addcontentsline{toc}{section}{Acknowledgments}
The author thanks Dirk Kreimer, Helga Baum and the rest of the Kreimer group for illuminating and helpful discussions! This research is supported by the International Max Planck Research School for Mathematical and Physical Aspects of Gravitation, Cosmology and Quantum Field Theory.

\bibliography{References}{}
\bibliographystyle{babunsrt}

\end{document}